\theoremstyle{plain}
\newtheorem{thm}{Theorem}
\newtheorem{lem}[thm]{Lemma}
\newtheorem{prop}[thm]{Proposition}
\newtheorem{cor}[thm]{Corollary}
\newtheorem{cond}{Condition}
\theoremstyle{remark}
\newcommand{\abs}[1]{\left| #1\right|}
\newcommand{\norm}[1]{\left\lVert #1 \right\rVert}
\newcommand{\ind}{\mathbbm{1}}
\newcommand{\pr}{\mathbb{P}}
\newcommand{\R}{\mathbb{R}}
\newcommand{\E}{\mathbb{E}}
\newcommand{\Cov}{\mathrm{Cov}}
\newcommand{\Var}{\mathrm{Var}}
\newcommand{\argmin}[1]{\underset{#1}{\operatorname{argmin}}\;}
\newcommand{\sgn}{\mathrm{sgn}}
\newcommand{\eqdist}{\stackrel{d}{=}}
\newcommand{\tr}{\mathrm{tr}}
\newcommand{\independent}{\mbox{${}\perp\mkern-11mu\perp{}$}}
\newcommand{\notindependent}{\mbox{${}\not\!\perp\mkern-11mu\perp{}$}}
\newcommand{\mb}{\mathbf}
\newcommand{\V}{V}
\newcommand{\pca}{\textit{pca}}
\newcommand{\rsvp}{\textit{rsvp}}
\newcommand{\srsvp}{\textit{rsvp-split}}
\newcommand{\subrsvp}{\textit{rsvp-sub}}
\title{  RSVP-graphs: Fast High-dimensional Covariance Matrix Estimation Under Latent Confounding   }
\author[1]{Rajen D. Shah\thanks{Supported by an EPSRC First Grant and the Alan Turing Institute under the EPSRC grant EP/N510129/1.}}
\author[2]{Benjamin Frot}
\author[2]{Gian-Andrea Thanei}
\author[2]{Nicolai Meinshausen}
\affil[1]{University of Cambridge}
\affil[2]{ETH Z\"urich}
\begin{document}
\maketitle
\begin{cbunit}
\begin{abstract}
We consider the problem of estimating a high-dimensional $p \times p$ covariance matrix $\Sigma$, given $n$ observations of confounded data with covariance $\Sigma + \Gamma \Gamma^T$, where $\Gamma$ is an unknown $p \times q$ matrix of latent factor loadings. We propose a simple and scalable estimator based on the projection on to the right singular vectors of the observed data matrix, which we call RSVP. Our theoretical analysis of this method reveals that in contrast to approaches based on removal of principal components, RSVP is able to cope well with settings where the smallest eigenvalue of $\Gamma^T \Gamma$ is relatively close to the largest eigenvalue of $\Sigma$, as well as when the eigenvalues of $\Gamma^T \Gamma$ are diverging fast.
RSVP does not require knowledge or estimation of the number of latent factors $q$, but only recovers $\Sigma$ up to an unknown positive scale factor. We argue this suffices in many applications, for example if an estimate of the correlation matrix is desired. We also show that by using subsampling, we can further improve the performance of the method.
We demonstrate the favourable performance of RSVP through simulation experiments and an analysis of gene expression datasets collated by the GTEX consortium.
\end{abstract}

\section{Introduction} \label{sec:intro}
Suppose a random vector $w \in \R^p$ follows a multivariate normal distribution with covariance matrix $\Sigma$,
\[
w \sim \mathcal{N}_p(\mu_w,\Sigma).
\]
Given $n$ i.i.d.\ copies of $w$ whose rows form a data matrix $W \in
\R^{n \times p}$ it is often of interest to estimate either $\Sigma$,
or certain quantities derived from this such as the precision matrix $\Omega := \Sigma^{-1}$ or collections of conditional independencies that may then be used to infer causal structure \citep{spirtes2000causation}.

Suppose now that we cannot observe $W$ directly, but we instead observe $n$ i.i.d.\ copies of a random vector $x$ which form the rows of $X \in \R^{n\times p}$; $x$ is related to $w$ through
\begin{equation} \label{eq:SEM1}
x = w  + \Gamma h.
\end{equation}
Here $h \in \R^q$ is a vector of unobserved latent random variables, and $\Gamma \in \R^{p \times q}$ a fixed matrix of loadings.
If we assume that $h$ is normally distributed, without loss of generality we may take $h \sim \mathcal{N}_q(\mu_h, I)$. We then have that the covariance $\Theta$ of the observed $x$ contains a
 contribution $\Gamma \Gamma^T$ from latent confounding and a
contribution $\Sigma$ from idiosyncratic noise:
\[
\Theta = \Cov(x) =  \Gamma \Gamma^T + \Sigma.
\]
If we simply ignore the  confounding, we will have the
covariance $\Theta$ as the target of inference instead of $\Sigma$, and the
two can be very different.

Applications where such confounding is
important in practice include the following.
\begin{enumerate}[(a)]
\item  Cell biology. The activities of proteins and mRna, for example,
  can be confounded by environmental factors. Two highly correlated
  protein activities are thus not necessarily close in a causal
  network \citep{Leek2007, Stegle2012}.
\item Financial assets. The returns of various stocks  will be
  confounded by some latent factors (such as general market movement
  or sector influences) without the covariance necessarily revealing
  anything about causal connections between companies \citep{menchero2010global}.
\item Confounding in biology and genetics can also occur due to
  technical malfunction and laboratory effects  \citep{Speed2013}.
\end{enumerate}
Thus in a number of settings, in order to infer meaningful connections between variables we would like to remove the effect
of confounding from the empirical covariance $\hat{\Theta}$ of $X$ and estimate $\Sigma$.

As well as the intrinsic ill-posedness of the problem of separating $\Sigma$ from a noisy observation of $\Sigma + \Gamma\Gamma^T$ with $\Gamma$ unknown, a further challenge in the applications above and many others is that the dimension $p$ may be very large indeed, on the order of thousands or more. This high-dimensionality brings computational difficulties that must be addressed by any practical procedure.


In order for $\Sigma$ to be identifiable, appropriate assumptions on both $\Sigma$ and $\Gamma$ must be made. One natural assumption is that the minimum eigenvalue $\gamma_l$ of $\Gamma^T \Gamma$ is larger than the largest eigenvalue $\sigma_u$ of $\Sigma$.
In this setting, a popular strategy to deal with unwanted confounding is removal of top principal components from $\hat{\Theta}$. This has been proposed in \citet{Speed2013, fan2013large}.
The latter work, a JRSSB discussion paper, shows that when $\sigma_u$ is bounded and $\gamma_l = O(p)$, so the gap between the quantities is large, $\Sigma$ may be recovered consistently.
In this case the top $q$ eigenvalues of $\hat{\Theta}$ will be well separated from the rest, and so exactly $q$ principal components can be removed from $\hat{\Theta}$: this is important as removing too many or too few principal components can result in a poor estimate.

However, as several discussants of \citet{fan2013large} pointed out, in many settings empirical covariances do not display well-separated eigenvalues even when latent factors are known to be present.
When the gap between $\sigma_u$ and $\gamma_l$ is not large enough, the top $q$ eigenvalues can be close to the bulk, making estimation of $q$ challenging and potentially impossible \citep{barigozzi2018consistent}. Furthermore the top principal components (PCs) of the empirical covariance can be far away from those of $\Theta$ \citep{donoho2018}, so even if $q$ were known, the PC-removal approach would not work well.

In this paper, we propose a simple approach to estimating $\Sigma$ that is able to cope with settings where the gap between $\gamma_l$ and $\sigma_u$ may range from large and $O(p)$ to potentially small. In order to achieve this ambitious objective, the method sacrifices estimation of the scale of $\Sigma$: we only recover $\Sigma$ up to an unknown positive scalar factor.
The loss of scale however is inconsequential when the ultimate goal is rather to estimate the correlation matrix $\tilde{\Sigma}$, or to locate the top $s$ largest entries in $\Sigma$ for a pre-specified $s$, in order to build a network. In fact, we show that the scale-free nature of our estimator gives it an in-built robustness in that if the rows of $X$ have elliptical distributions, its distribution is precisely the same as if the data were Gaussian (see Proposition~\ref{prop:RSVP_dist}).

Let $V \in \R^{p \times (n-1)}$ be the matrix of right singular vectors with nonzero singular values of a column centred version of $X$. Our estimator is based on $\hat{\Sigma}_{\rsvp} := VV^T$; we call this right singular vector projection (RSVP).
The PC-removal estimate is proportional to $VH^2 V^T$ where $H$ is a diagonal matrix of singular values of the centred $X$ with the first $q$ entries set to $0$ (when $q$ is known). Thus RSVP may be seen as a highly regularised version of PC-removal, where the random $H$ is set to the identity matrix to reduce its variance. In fact, we show that each entry of $\hat{\Sigma}_{\rsvp}$ concentrates around its expectation at the same rate as the empirical covariance matrix after rescaling, even in settings where $q$ is allowed to grow at almost the same rate as $n$ (see Theorem~\ref{thm:conc_Pi}).

Despite the aggressive regularisation, it turns out the bias is dominated by the variance provided that $p \gg n$ so $n/p$ is small. As a consequence, we can show that with high probability,
\[
\inf_{\kappa > 0} \max_{j,k} | \Sigma_{jk} - \kappa \hat{\Sigma}_{\rsvp,jk}| \leq c \sqrt{\frac{\log(p)}{n}}
\]
for some constant $c > 0$,
even in certain settings when $\gamma_l$ is only larger than
$\sigma_u$ by a constant factor, and the latter is bounded.
In fact, we show that the statistical properties of $\hat{\Sigma}_{\rsvp}$ are such that when used as input to several standard procedures for conditional independence graph estimation or causal discovery procedures, the performances of the resulting estimates are, in many settings, identical to those attained when working with the unconfounded data, up to constant factors.

One requirement for $\hat{\Sigma}_{\rsvp}$ to work well is that $p \gg n$. For settings where $n$ is large, we can circumvent this condition using a subsampling strategy. 
We 
show that, surprisingly, by computing our estimator on subsamples of
the data and averaging \citep{breiman1996bagging}, the bias may be
reduced, and the variance only inflated by a factor of $\sqrt{\log(p)}$. Subsampling
with a very small number of samples in each subsample  is both statistically and computationally attractive, and is the approach we would recommend in settings where we do not have $p \gg n$.


\subsection{Related work}

There is large body of work on high-dimensional covariance and precision matrix estimation: see for example the recent review paper \citep{cai2016} and references therein. Much of the work on the specific setting with latent confounding has focussed on estimation of the precision matrix $\Omega$, which is assumed to be sparse. The presence of the latent confounding causes the overall precision matrix of $x$ to be a sum of $\Omega$ and a low rank component.
One approach to sparse precision matrix estimation in the absence of confounding is the graphical Lasso \citep{yuan05model,Yuan2010,Friedman2008}.
Building on this and work on sparse--dense matrix decompositions in the noiseless setting \citep{CandesEtAl11, ChandrasekaranEtAl11}, the work of \citet{chandrasekaran2012latent} formulates  a convex objective involving nuclear norm penalisation for Gaussian graphical model estimation with latent confounders. The work of \citet{FrotEtAl19} uses this as a stepping-stone for causal structure learning and causal effect estimation in low-dimensional settings. A challenge for nuclear norm penalisation and related approaches is that although the objective is convex, optimising it is nevertheless a computationally intensive task that does not scale to very large dimensions.

A second approach to precision matrix estimation exploits the fact that coefficients from regressions of each variable on all others, known as nodewise regressions, match the entries of the precision matrix up to scale factors \citep{lauritzen96graphical, meinshausen04consistent}.
  Adjusting
  for confounding can be built into a nodewise regression procedure, for example by using the Lava method of \citet{chernozhukov2017lava}  which employs a sparse--dense
  decomposition of the regression coefficients; the sparse part of the coefficients can then be retained as the dense part is generally due to
  confounding. This regression may be formulated as a Lasso regression
  with a transformed response and a particular preconditioned design
  matrix, see also \citet{rohe2014note} for an earlier equivalent proposal.
  \citet{Cevid18spectral} studies the theoretical properties of the Lava approach as well as more general forms of preconditioning including the Puffer transform proposed in \cite{jia2015preconditioning} and further investigated in \citet{Wang15}. This, in analogy with RSVP, modifies the design matrix by replacing non-zero singular values with a constant.
We also note that the Asymptotic Normal Thresholding procedure of \citet{ren2015asymptotic}, which employs nodewise regressions in a different fashion, is robust to weak confounding.

There has been comparatively less work on covariance matrix estimation in the presence of confounding, though, as we discuss and make use of in this work, an estimated covariance can be used as a starting point for conditional independence graph estimation or causal discovery. In addition to the work of \citet{fan2013large} and \citet{Speed2013} mentioned earlier, \citet{Fan2018-jg} proposes a PC-removal approach that can be applied to heavy-tailed data that follows an elliptical distribution.

\subsection{Organisation of the paper}
The rest of the paper is organised as follows. In
Section~\ref{sec:RSVP} we first discuss asymptotic identifiability of
$\Sigma$ and then introduce our RSVP estimator $\hat{\Sigma}_{\rsvp}$ and
versions involving subsampling. We present
theoretical properties of $\hat{\Sigma}_{\rsvp}$ and RSVP with sample-splitting in
Section~\ref{sec:theory}. In Section~\ref{sec:precision} we present results on the use of RSVP estimators as input to methods for conditional independence graph estimation and for causal discovery via the PC algorithm.
Numerical experiments are contained in Section~\ref{sec:experiments} and we conclude with a discussion in Section~\ref{sec:discuss}. The supplementary material for this paper contains all proofs and further results concerning the GTEX data analyses presented in Section~\ref{sec:GTEX}.

\subsection{Notation} \label{sec:notation}
We write $a \lesssim b$ as shorthand for `there exists constant $c > 0$ such that $a \leq cb$'. This constant may be a universal constant, or a function of quantities that have been designated as constants in our assumptions. If $a \lesssim b$ and $b \lesssim a$, we may write $a \asymp b$. For a matrix $A \in \R^{d \times m}$, $\|A\|$ will denote the operator norm, and $\|A\|_{\infty} = \max_{i=1,\ldots,d,j=1,\ldots,m} |A_{ij}|$.

When $d=m$ so $A$ is square, we will write $\lambda_{\max}(A)$ and $\lambda_{\min}(A)$ for the maximum and minimum eigenvalues of $A$ respectively. Further, given sets $I, J \subseteq D := \{1,\ldots,d\}$, we will denote by $A_{I,J}$ the $|I| \times |J|$ submatrix of $A$ formed from those rows and columns of $A$ indexed by $I$ and $J$ respectively. Such matrix subsetting operations will always be considered to have been performed first so that for example when $A_{I,I}$ is invertible, $A_{I,I}^{-1} \equiv (A_{I, I})^{-1}$. For $j \in D$, $j$ or $-j$ used in place of the subscripts $I$ or $J$ above will represent $\{j\}$ and $D \setminus \{j\}$ respectively, so $A_{j,-j}$ for  example is the $1 \times (d-1)$ matrix formed from the $j$th row of $A$ with its $j$th entry removed.

In analogy with the matrix subsetting notation set out above, we will write for a vector $v \in \R^d$, $v_I$ for the subvector formed from the components of $v$ indexed by $I$. Also for $j,k \in D$, $v_{-j}$ and $v_{-jk}$ will be subvectors of $v$ with $j$th and both the $j$th and $k$th components removed, respectively.
 We denote by $e_j$ the $j$th standard basis vector; the dimension of this will be clear from the context.

\section{RSVP: right singular vector projection} \label{sec:RSVP}
Let us assume that the observed data matrix $X \in \R^{(n+1) \times p}$ has rows given by $n+1$ independent realisations of a $\mathcal{N}_p(\mu, \Theta)$ random vector (we will later relax the Gaussian assumption, see Proposition~\ref{prop:RSVP_dist}). The $n+1$ rather than $n$ is for mathematical convenience: the column centred version $\tilde{X} := \Pi X$ of $X$ effectively contains $n$ observations. Here $\Pi = I - (n+1)^{-1} \mb 1 \mb 1^T$ where $\mb 1$ is an $(n+1)$-vector of 1's.
Our goal is to construct an estimate of $\Sigma$ based on this data where $\Sigma + \Gamma \Gamma^T = \Theta$ and both $\Gamma \in \R^{p \times q}$ and $q$ are unknown. We are interested in the case $p \gg n$ and will assume $p > cn$ for some $c > 1$, unless specified otherwise.

In what follows we first study the identifiability of $\Sigma$ in the model above. In Section~\ref{sec:spectrum} we discuss a general approach for estimating $\Sigma$ based on transforming the spectrum of the covariance matrix, which includes PC-removal and our RSVP method presented in Section~\ref{sec:RSVP2} as special cases. Finally we introduce a sample-splitting version of RSVP in Section~\ref{sec:subagged_RSVP}.

\subsection{Asymptotic identifiability} \label{sec:identify}
Let us first consider an artificial setting where $\Theta$ itself is directly observed. Even in this noiseless setting, certain conditions must be placed on $\Gamma$ and $\Sigma$ in order for $\Sigma$ to be recoverable given $\Theta$. Define
\begin{gather*}
\lambda_{\min}(\Gamma^T \Gamma) := \gamma_l, \qquad \lambda_{\max}(\Gamma^T \Gamma) := \gamma_u,\\
\lambda_{\min}(\Sigma) := \sigma_l, \qquad \lambda_{\max}(\Sigma) := \sigma_u.
\end{gather*}
If $\gamma_l$ is large compared to $\sigma_u$, we might hope that the top $q$ eigenvectors of $\Theta$ will span most of the column space of $\Gamma$. Therefore removing these from $\Theta$ should yield a matrix that is close to $\Sigma$. Proposition~\ref{prop:pop} below, based in part on an application of the Davis--Kahan $\sin(\theta)$ theorem~\citep{davis1970rotation}, formalises this intuition.

Let $\Theta$ have eigendecomposition $PD^2 P^T$ where the diagonal matrix $D$ has $D_{11} \geq D_{22} \geq \cdots \geq D_{pp}$.
Also define for $\ell \in \{1,\ldots,p\}$, function $H_{\ell}$ taking as argument a square matrix, and outputting a matrix of the same dimension, by
\begin{equation*}
(H_{\ell}(E))_{jk} = \begin{cases}
0 \qquad &\text{if } j,k \leq \ell \\
E_{jk} \qquad &\text{otherwise.}
\end{cases}
\end{equation*}
Thus the top left $\ell \times \ell$ submatrix of $H_{\ell}(E)$ is a matrix of $0$'s.
Define $\Pi_{\Gamma} := \Gamma(\Gamma^T\Gamma)^{-1} \Gamma^T$,
\begin{align*}
\rho_1 := \|\Pi_{\Gamma} \Sigma\| \qquad \text{and} \qquad \rho_2 := \max_j \|\Pi_{\Gamma} e_j\|_2.
\end{align*}
\begin{prop} \label{prop:pop}
Suppose $\sigma_l$ is bounded away from $0$ and $\gamma_l > c \sigma_u$ for a constant $c > 1$. Then
\begin{equation} \label{eq:pop}
\|P H_q(D^2) P^T- \Sigma\|_\infty \lesssim \rho_1 \rho_2 + \gamma_u \rho_1^2 / \gamma_l^2.
\end{equation}
\end{prop}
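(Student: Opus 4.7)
The plan is to work in an orthonormal basis adapted to $\mathrm{col}(\Gamma)$. Taking a thin SVD $\Gamma = V_L S V_R^T$, so that $\Pi_\Gamma = V_L V_L^T$, complete $V_L$ to an orthogonal $Q = [V_L\ V_L^\perp] \in \R^{p \times p}$. Then $Q^T \Gamma\Gamma^T Q$ is block-diagonal with top-left block $\Lambda := S^2$ having eigenvalues in $[\gamma_l, \gamma_u]$, and $Q^T \Sigma Q = \begin{pmatrix} A & B \\ B^T & C \end{pmatrix}$ satisfies $\|A\|, \|B\| \le \|V_L^T \Sigma\| = \rho_1$ and $\|C\| \le \sigma_u$. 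Decomposing $Q^T e_j = (\alpha_j, \beta_j)^T$ gives the key size bounds $\|\alpha_j\| = \|\Pi_\Gamma e_j\| \le \rho_2$ and $\|\beta_j\| \le 1$, which is where $\rho_2$ will enter the final bound.

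Since $\Sigma \succeq \sigma_l I$ and $\gamma_l > c\sigma_u$, Weyl's inequality gives $\lambda_q(\Theta) \ge \gamma_l + \sigma_l$ and $\lambda_{q+1}(\Theta) \le \sigma_u$, so the eigen-gap of $\Theta$ is of order $\gamma_l$. A standard Davis--Kahan / Sylvester argument then parameterises its top eigenspace as $Q^T P_1 = \begin{pmatrix} I \\ Y \end{pmatrix}(I + Y^T Y)^{-1/2}$ for some $Y \in \R^{(p-q) \times q}$ with $\|Y\| \lesssim \|B\|/\gamma_l \le \rho_1/\gamma_l$. A direct block multiplication of $M = (I - P_1 P_1^T)\Theta(I - P_1 P_1^T)$, followed by subtraction of $\Sigma$, gives
\[
R := Q^T(M - \Sigma) Q \;\approx\; \begin{pmatrix} -A & -B \\ -B^T & Y\Lambda Y^T - YB - B^T Y^T \end{pmatrix},
\]
with all remaining contributions (from $Y^T C Y$ in block $(1,1)$, from $-Y^T C$ in block $(1,2)$, from $YAY^T$ in block $(2,2)$, and from the $(I + Y^T Y)^{-1/2}$ normalisation) being of order at most $\|Y\|^2$ times $\sigma_u$ or $\gamma_u$.

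To finish, I would expand, writing $R_{ab}$ for the natural blocks of $R$,
\[
e_j^T(M - \Sigma)e_k = \alpha_j^T R_{11} \alpha_k + \alpha_j^T R_{12} \beta_k + \beta_j^T R_{21} \alpha_k + \beta_j^T R_{22} \beta_k,
\]
and apply Cauchy--Schwarz to each piece. The three blocks touching an $\alpha$-factor pick up $\|\alpha\| \le \rho_2$ together with $\|A\|$ or $\|B\| \le \rho_1$, giving contributions $\lesssim \rho_1 \rho_2$. The purely $\beta$-dependent $R_{22}$-contribution is dominated by $\|Y\|^2 \|\Lambda\| \lesssim \gamma_u \rho_1^2/\gamma_l^2$; the cross terms $\beta^T YB \beta$ and $\beta^T B^T Y^T \beta$ are of size $\|Y\|\|B\| \le \rho_1^2/\gamma_l \le \gamma_u \rho_1^2/\gamma_l^2$ (using $\gamma_u \ge \gamma_l$). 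Summing yields the claimed bound.

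The hard part is not the strategy but the careful book-keeping: every higher-order correction (for example $Y^T C Y$ contributing at most $\rho_2^2 \|Y\|^2 \sigma_u$, or the $O(\|Y\|^2 \rho_1)$ terms in the off-diagonal blocks) has to be checked to be dominated by $\rho_1 \rho_2 + \gamma_u \rho_1^2/\gamma_l^2$, which relies on $\rho_1 \le \sigma_u < \gamma_l/c$ and $\gamma_u \ge \gamma_l$ throughout. The conceptually clean point is that the $\rho_2$ savings arise precisely when an index couples through the $V_L$-block of $Q$, i.e.\ in every block of $R$ other than $R_{22}$, which is why only $R_{22}$ contributes the $\gamma_u \rho_1^2/\gamma_l^2$ term.
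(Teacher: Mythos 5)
Your strategy is at heart the same as the paper's: rotate into a basis adapted to $\mathrm{col}(\Gamma)$, use a Davis--Kahan argument to show that the top-$q$ eigenspace of $\Theta$ is within angle $O(\rho_1/\gamma_l)$ of $\mathrm{col}(\Gamma)$, and extract the factor $\rho_2$ from the component of each $e_j$ lying in $\mathrm{col}(\Gamma)$. The paper organises the algebra through the projections $P_FP_F^T$, $P_LP_L^T$ and $QQ^T$ and only ever uses the $\sin\theta$ quantities $\|P_L^TQ\|$ and $\|(I-QQ^T)P_F\|$; you instead use the graph (Riccati) parametrisation $Q^TP_1=\bigl(\begin{smallmatrix} I \\ Y\end{smallmatrix}\bigr)(I+Y^TY)^{-1/2}$ and do the bookkeeping block by block. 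Your block bookkeeping is correct: the leading terms $-A$, $-B$, $Y\Lambda Y^T-YB-B^TY^T$ reproduce exactly the paper's three contributions, and the higher-order corrections you list are all dominated by $\rho_1\rho_2+\gamma_u\rho_1^2/\gamma_l^2$ given $\rho_1\le\sigma_u<\gamma_l/c$ and $\gamma_u\ge\gamma_l$.

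The one genuine gap is the assertion $\|Y\|\lesssim\|B\|/\gamma_l$. The quantity $\|Y\|$ is a $\tan\theta$, whereas Davis--Kahan controls $\sin\theta$: it gives $\|\sin\Theta\|\le\|B\|/(\gamma_l-\sigma_u)$, and since $\|B\|$ can be as large as $\rho_1\le\sigma_u$ while the gap $\gamma_l-\sigma_u$ is only guaranteed to exceed $(1-1/c)\gamma_l$, this bound can be as weak as $1/(c-1)$, which exceeds $1$ whenever $c<2$. In that regime the graph parametrisation need not exist with controlled norm (equivalently, the standard existence theorems for the Riccati equation $Y(\Lambda+A)-CY+YBY=B^T$ require $\|B\|$ to be at most half the gap, which fails for $c$ close to $1$). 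The fix is cheap: if $\rho_1\ge(\gamma_l-\sigma_u)/2$ then $\rho_1\gtrsim\gamma_l$, so $\gamma_u\rho_1^2/\gamma_l^2\gtrsim\gamma_l\gtrsim\sigma_u$ and the claimed inequality holds trivially because the left-hand side is always at most $2\sigma_u$ (the eigenvalues retained in $H_q(D^2)$ are at most $\sigma_u$ by Weyl); otherwise $\|\sin\Theta\|\le1/2$, so $\tan\lesssim\sin$ and your argument goes through. You should add this case split, or work with $\sin\theta$ and projections throughout as the paper does, which avoids the issue entirely.
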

In order that removal of $q$ principal components yields a matrix close to $\Sigma$ at the population level, we require $\rho_2$ to be small; this essentially requires that the column space of $\Gamma$ is not too closely aligned with any of the standard basis vectors.
We always have the bound $\rho_1 \leq \sigma_u$. However, in the setting where $\Gamma$ is entirely uninformative about $\Sigma$, one might expect that $\rho_1$ may be smaller. Specifically, if we imagine that nature has chosen the column space of $\Gamma$ uniformly at random,
 we will have
with high probability that
\begin{align} \label{eq:rho_bd}
\rho_1^2 \lesssim \frac{1}{p}\{\tr(\Sigma^2) + \sqrt{q \tr(\Sigma^4)}\} \qquad \text{and}\qquad \rho_2^2 \lesssim \frac{q}{p} \left\{ 1 + \max\left(\frac{\log(p)}{q},\, \sqrt{\frac{\log(p)}{q}}  \right) \right\}.
\end{align}
See Section~\ref{sec:rho_bd} in the supplementary material for a derivation. Asymptotic identifiability results related to Proposition~\ref{prop:pop} are given in \citet{fan2013large,Fan2018-jg} when $\sigma_u$ and $q$ are bounded, and both $\gamma_l$ and $\gamma_u$ are $O(p)$. In these settings it is straightforward to show that $\rho_2 \lesssim p^{-1/2}$, in which case the right-hand side of \eqref{eq:pop} may be replaced by $p^{-1/2}$.

\subsection{Spectral transformations} \label{sec:spectrum}
We now return to the original noisy version of the problem.
The empirical covariance matrix $\hat{\Theta} = \tilde{X}^T\tilde{X} / n$ has expectation $\Theta = PD^2 P^T$, so we would ideally like to modify $\hat{\Theta}$ such that the eigenstructure of its expectation more closely resembles $P H_q(D^2) P$. Let us therefore consider the following family of estimators that involve transforming the spectrum of $\hat{\Theta}$.

Note that as $\tilde{X}$ has been centred and $p > n$, the rank of $\tilde{X}$ is $n$. Let the SVD of $\tilde{X}$ be given by $\tilde{X} = U \Lambda V^T$ where $\Lambda \in \R^{n \times n}$ is diagonal, and $U \in \R^{(n+1) \times n}$ and $V \in \R^{p \times n}$ each have orthonormal columns. Define
\begin{equation} \label{eq:Sigma_H}
\hat{\Sigma}_H = \frac{1}{n}V H(\Lambda^2) V^T
\end{equation}
where function $H$ here outputs $n \times n$ diagonal matrices. For such estimators, we have the following property.

\begin{prop} \label{prop:spec_trans}
We have that $\E \hat{\Sigma}_H = P C_H^2 P^T$ where $C_H$ is diagonal.
\end{prop}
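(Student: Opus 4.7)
The plan is to diagonalise the distribution using orthogonal invariance of Gaussians and then exploit a sign-flip symmetry. Since each row of $X$ is $\mathcal{N}_p(\mu,\Theta)$ and $\Pi\mathbf{1}=0$, we can write $\tilde{X}=\Pi Z'\,\Theta^{1/2}$, where $Z'\in\R^{(n+1)\times p}$ has i.i.d.\ standard normal entries and $\Theta^{1/2}=PDP^T$ is the symmetric square root of $\Theta=PD^2P^T$. Multiplying on the right by $P$ and setting $\tilde{Z}:=Z'P$, which is again a matrix of i.i.d.\ standard Gaussians by orthogonal invariance, yields
\[
Y \ := \ \tilde{X}P \ = \ \Pi \tilde{Z}\,D.
\]
The $j$th column of $Y$ equals $D_{jj}\,\Pi\tilde{Z}_{\cdot j}$; these columns are independent and each is symmetric about $0$.

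If $\tilde{X}=U\Lambda V^T$ is the SVD, then $Y=U\Lambda(P^TV)^T$, so $M:=P^TV$ is the right singular vector matrix of $Y$ (with singular values $\Lambda$). Substituting $V=PM$ into the definition of $\hat{\Sigma}_H$ gives
\[
\hat{\Sigma}_H \ = \ \tfrac{1}{n}\,P\,M H(\Lambda^2) M^T\,P^T,
\]
and since $P$ is deterministic it suffices to show that $B:=\E[M H(\Lambda^2) M^T]$ is diagonal.

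For this, let $S=\mathrm{diag}(s_1,\ldots,s_p)$ with $s_j\in\{-1,+1\}$. By the independence and symmetry of the columns of $Y$, we have $YS\eqdist Y$. On the other hand, $YS=U\Lambda(SM)^T$ and the singular values are unchanged, so up to the sign ambiguity in the SVD, the right singular vectors of $YS$ are $SM$; this ambiguity is harmless because $MH(\Lambda^2)M^T=\sum_k H(\Lambda^2)_{kk}\,M_kM_k^T$ depends on $M$ only through the outer products $M_kM_k^T$. Computing the expectation of this quantity for $YS$ in two different ways then gives $B=SBS$ for every sign matrix $S$, so $B_{jk}=s_js_kB_{jk}$ for every sign pattern and hence $B_{jk}=0$ whenever $j\ne k$. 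Writing $n^{-1}B$ as $C_H^2$ with $C_H$ a diagonal matrix (whose entries are non-negative in the case of interest where $H$ takes non-negative values) yields $\E\hat{\Sigma}_H=PC_H^2P^T$ as required. The only point that needs care is the careful bookkeeping of the SVD sign ambiguity; once this is set up the proof reduces to a short invariance computation.
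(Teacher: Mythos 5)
Your proof is correct and follows essentially the same route as the paper's: reduce to the case of a diagonal covariance via the rotational invariance $Z \eqdist ZP$, then use a sign-flip symmetry of the columns (the paper flips one column at a time, you use a general diagonal sign matrix $S$) to conclude that the off-diagonal entries of the expectation vanish. The handling of the SVD sign ambiguity via the outer-product representation is a nice touch but matches the paper's argument in substance.
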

The fact that the eigenvectors of $\E \hat{\Sigma}_H$ coincide with those of $\Theta$ suggests we should pick function $H$ such that $C^2_H$ is close to $H_q(D^2)$.
A natural choice is a simple PCA-based adjustment \citep{fan2013large, Speed2013, Fan2018-jg} of the form
\[
\hat{\Sigma}_{\pca} (\ell)   := \hat{\Sigma}_{H_{\ell}} =  n^{-1} VH_{\ell}(\Lambda^2)V^T.
\]
The resulting PC-removal estimator can be further thresholded as in
\citet{bickel2008covariance,fan2013large}, though if our aim is to recover
the locations of the largest entries of the covariance, this additional thresholding step is
without consequence.
The choice of the number $\ell$ of principal components to remove is rather critical to the method, but can be
challenging. Even if we had knowledge about the dimensionality $q$ of the
latent confounders, the optimal choice would depend on the relative
magnitude of the eigenvalues of $\Gamma^T \Gamma$ in relation to the
eigenvalues of $\Sigma$. In the absence of this knowledge, one might
resort to cross-validation schemes. Since the target of inference is
the unobserved idiosyncratic part $\Sigma$ of the covariance, it is
not obvious how such a cross-validation can be set up in a meaningful
way. Information criteria may be used as in \citet{fan2013large}, but these rely on $\gamma_l / \sigma_u \gtrsim p$.

\subsection{RSVP} \label{sec:RSVP2}
One reason that the PC-removal approach can struggle in settings where the separation between $\gamma_l$ and $\gamma_u$ is relatively small is that the top $q$ eigenvectors of $\hat{\Theta}$ need not span the column space of $\Gamma$ well, and in general will have high variability. Thus whilst $\hat{\Theta} = n^{-1} V\Lambda^2 V^T$ concentrates well around its expectation $\Theta$ in $\ell_\infty$-norm, an approach that involves manipulating the contributions of individual singular vectors in $V$ to the overall estimator, is likely to have high variance.
This suggests some form of regularisation may be helpful.

Taking the function $H$ as one which always returns $n$ times the identity matrix results in the simple estimator
\begin{align*}
\hat{\Sigma}_{\rsvp}  := \V  \V^T.
\end{align*}
Note this is invariant to permutations of the columns of $V$, and so is less dependent on properties of individual eigenvectors. As a consequence of the regularisation, we have lost the scaling of the original covariance: the estimator is invariant to multiplying $X$ from the left by any invertible $n \times n$ matrix. Thus we can only hope to recover $\Sigma$ up to a constant scale factor. This suffices for our purposes, and we argue this gives the estimator a certain robustness in that it is insensitive to particular pre-transformations of the data such as scaling of the rows of $X$. In fact $\hat{\Sigma}_{\rsvp}$ is more generally robust, see Proposition~\ref{prop:RSVP_dist} below. The computation time is dominated by the matrix multiplication of $V$ and $V^T$ which is $O(np^2)$; thus the computational complexity is the same as that for computing the empirical covariance.

In a regression context, an analogous approach for
preconditioning the design matrix has been explored in
\citet{jia2015preconditioning, Wang15}. The
Lava estimator \citep{chernozhukov2017lava} employs a similar preconditioning strategy
but, instead of setting all non-zero singular values of the design matrix to 1, the
singular values $d_i$ are transformed implicitly as
$\{d_i^2/(1+c d_i^2 )\}^{1/2}$, where the constant $c$ depends on a tuning parameter
and the sample size.

It may seem as if all information regarding the eigenvalues of $\Sigma$ has been lost in the regularisation as $\Lambda$ does not play a role in the estimator. However, we show in Section~\ref{sec:theory} that in certain high-dimensional settings, we can even estimate $\Sigma$ in $\ell_\infty$-norm at the same rate as the empirical covariance matrix in the absence of confounding, though only up to an unknown scale factor. Intuitively, the reason is that when $p \gg n$, with the exception of certain large eigenvalues in $\Lambda$ due to large eigenvalues in $\Gamma^T\Gamma$, the rest of the eigenvalues are essentially noise and bear no resemblance to the eigenvalues of $\Sigma$. This peculiar blessing of high-dimensionality is a phenomenon that fails when $p$ is of the same order as $n$, for example. It is however possible to subsample the data, and average over estimates computed on the samples, in order to mimic the high-dimensional setting. We discuss this below.

\subsection{Subsampling RSVP} \label{sec:subagged_RSVP}
Given $m \in \{1,\ldots,n\}$, let $\V^{(b)}$ be the matrix of right singular vectors of a random sample of $m$ rows of $\tilde{X}$. We define the subsampling RSVP estimator as
\begin{align*}
 \hat{\Sigma}_{\subrsvp} := \frac 1 B \sum_{b=1}^B
    \V^{(b)} (\V^{(b)})^T.
\end{align*}
The sample-splitting RSVP estimator $\hat{\Sigma}_{\srsvp}$ is defined similarly but where the sets of indices of the sampled rows are disjoint, and so  $B=\lceil (n+1)/m \rceil$. In practice, the subsampling estimator is preferable as the additional sampling can help to reduce the variance of the estimator. Our main reason for introducing the sample splitting version is that it is simpler to understand its theoretical properties (see Theorem~\ref{thm:Sigma_rec_bag}); however sample splitting still performs well empirically as we demonstrate in Section~\ref{sec:experiments}.

Both estimators are trivially parallelisable: the SVD computations for each subsample can be performed simultaneously, and then added at the end. If $B$ machines were available for the computations, the overall parallel computation time would be $O(mp^2)$ provided $\log(B) \lesssim m$.

\subsection{Example}

\begin{figure}
\begin{center}
\includegraphics[width=0.98\textwidth]{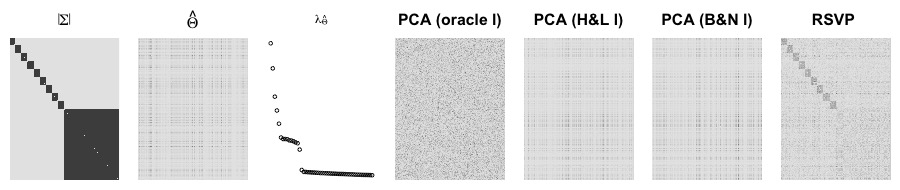}
\caption{An example with $p=1000$ variables, sample size $n=500$ and
  $q=20$ latent confounders with strength $\nu=0.5$, as described in detail in Section~\ref{sec:experiments}.  From left to right: (i) the absolute values of the
  idiosyncratic covariance matrix $\Sigma$, showing block structure
  with different block sizes; (ii) the empirical covariance matrix $\hat{\Theta}$;
  (iii) the eigenvalues of the $\hat{\theta}$ on a log-scale;
  (iv)-(vi) the absolute values of PC-removal estimator
  $\hat{\Sigma}(\ell)$, where $\ell$ is chosen first as the oracle
  value $\ell=q$ and next as the two empirical estimators of $q$ suggested in
  the POET \citep{fan2013large}; (vii) the proposed RSVP estimator with a subsample
  size of $m=20$. RSVP manages to recover the smaller blocks, while
  the PC-removal methods seemingly fail to recover any structure in the
  covariance matrix. \label{fig:example1}}
\end{center}
\end{figure}

\begin{figure}
\begin{center}
\includegraphics[width=0.95\textwidth]{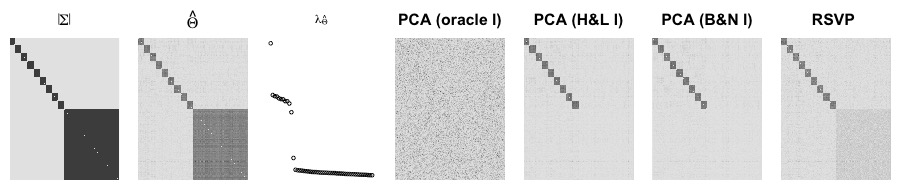}
\caption{The same setting as in Fig \ref{fig:example1} but for sample size
  $n=1000$ and very weak latent confounding ($\nu=0.01$). The large
  block is now even visible in the empirical covariance matrix. The
  PC-removal based methods fail to recover the structure of the large block as they
  all remove at least one principal component.  \label{fig:example2} }
\end{center}
\end{figure}

Figure~\ref{fig:example1} shows an example of the proposed sample-splitting
RSVP estimator, compared to the ground truth and
PC-removal. The latent confounding is so strong that the
empirical covariance shows very little visual indication of the block
structure of the idiosyncratic covariance. Likewise, PC-removal
fails to recover the structure, whether we use an oracle for
determining the number of factors to remove or  estimate the optimal
number of factors. RSVP in contrast recovers the smaller blocks. It is
shown here for  $m=20$ number of samples in each subsample (default) but results do not
change appreciably  when choosing a different subsample size. When
reducing the strength of the latent confounders, the empirical
covariance shows the correct underlying structure visually but all
PC-removal methods fail to recover the largest block of variables as
even just removing the first principal component removes the large block.

\section{Theoretical properties} \label{sec:theory}
In this section, we present some theoretical properties of $\hat{\Sigma}_{\rsvp}$ and $\hat{\Sigma}_{\srsvp}$. We first explain how $\hat{\Sigma}_{\rsvp}$ has low variance, and then argue that its bias is also well-controlled in the high-dimensional setting. We then discuss the consequences for $\hat{\Sigma}_{\srsvp}$. In the following we will consider an asymptotic regime We will assume Condition~\ref{cond} below in several of the results to follow.
\begin{cond}
\label{cond}
There exists constant $0 < c_1 < c_2$ such that $c_1 < \sigma_l \leq \max_j \Var(x_j) < c_2$. There exists constant $c_3 > 1$ such that $ \gamma_l > c_3 \sigma_u$ and $p > c_3 n$. Furthermore $\log(p) = o(n)$.
\end{cond}
\begin{thm} \label{thm:conc_Pi}
Assume Condition~\ref{cond} and that $\sigma_u \lesssim p / (n \log p)$ and $q \lesssim n / \log(p)$. Then there exists constant $c > 0$ such that with probability at least $1-c/p$,
\[
\frac{p}{n} \|\hat{\Sigma}_{\rsvp} - \E \hat{\Sigma}_{\rsvp} \|_{\infty} \lesssim \sqrt{\frac{\log(p)}{n}}.
\]
\end{thm}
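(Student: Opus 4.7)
The natural starting point is the identity $\hat{\Sigma}_{\rsvp} = VV^T = \tilde{X}^T(\tilde{X}\tilde{X}^T)^{+}\tilde{X}$, where $(\tilde{X}\tilde{X}^T)^{+}$ is the Moore--Penrose pseudoinverse, acting as a genuine inverse on $\mb 1^\perp\subset\R^{n+1}$ (the kernel of $\tilde{X}\tilde{X}^T$ is spanned by $\mb 1$ because the rows of $\tilde X$ are centred). A direct calculation gives $\E[\tilde{X}\tilde{X}^T]=\tr(\Theta)\Pi$, which, combined with $\tr(\Theta)=\sum_{j}\Var(x_j)\asymp p$ under Condition~\ref{cond}, suggests replacing $(\tilde{X}\tilde{X}^T)^{+}$ by $\tr(\Theta)^{-1}\Pi$. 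I would therefore introduce the oracle surrogate
\[
\hat{\Sigma}^{\star}:=\tr(\Theta)^{-1}\tilde{X}^T\tilde{X}=\frac{n}{\tr(\Theta)}\hat{\Theta}
\]
and decompose $\hat{\Sigma}_{\rsvp}-\E\hat{\Sigma}_{\rsvp}=(\hat{\Sigma}^{\star}-\E\hat{\Sigma}^{\star})+(R-\E R)$ with $R:=\hat{\Sigma}_{\rsvp}-\hat{\Sigma}^{\star}$, bounding each piece in $\ell_\infty$ separately.

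The oracle piece is standard: since $\max_{j}\Theta_{jj}<c_2$, the entries of $\hat{\Theta}-\Theta$ are centred sub-exponential random variables, and Bernstein plus a union bound over the $p^2$ pairs gives $\|\hat{\Theta}-\Theta\|_\infty\lesssim\sqrt{\log(p)/n}$ with probability at least $1-c/p$. Multiplying by $n/\tr(\Theta)\asymp n/p$ yields exactly the rate claimed for this term.

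The hard part is the entrywise control of $R$. Applying the identity $M^{-1}-M_0^{-1}=M^{-1}(M_0-M)M_0^{-1}$ on $\mb 1^\perp$, with $M_0=\tr(\Theta)I$ and $M=\tilde{X}\tilde{X}^T|_{\mb 1^\perp}$, produces
\[
R_{jk}=\tr(\Theta)^{-1}\,\tilde{X}_{\cdot j}^T\,(\tilde{X}\tilde{X}^T)^{+}\bigl(\tr(\Theta)\Pi-\tilde{X}\tilde{X}^T\bigr)\tilde{X}_{\cdot k}.
\]
The principal obstacle is that $\tilde{X}\tilde{X}^T-\tr(\Theta)\Pi$ is \emph{not} small in operator norm when $\gamma_u$ is large: the rank-$q$ spike from $\Gamma\Gamma^T$ can have operator norm of order $n\gamma_u$, so a crude Cauchy--Schwarz bound fails. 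To overcome this I would plug in $\tilde{X}=\tilde{W}+\tilde{H}\Gamma^T$ (with $\tilde{W}$ the centred noise matrix of row covariance $\Sigma$ and $\tilde{H}$ the centred latent matrix of row covariance $I_q$, independent), splitting $\tilde{X}\tilde{X}^T$ into a bulk piece $\tilde{W}\tilde{W}^T$ and a spike of rank at most $2q$ involving $\tilde{H}$ and $\tilde{W}\Gamma$. Standard Wishart-type concentration combined with the assumption $\sigma_u\lesssim p/(n\log p)$ shows that the bulk concentrates around $\tr(\Sigma)\Pi$ in operator norm at a rate which yields an acceptable entrywise contribution after sandwiching. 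For the low-rank part, the Sherman--Morrison--Woodbury identity expresses $(\tilde{X}\tilde{X}^T)^{+}$ as an inverse of the bulk plus a rank-$O(q)$ correction; each of the $O(q)$ spike directions, inner-producted with the coordinate vectors $e_j,e_k$ and with the columns $\tilde{X}_{\cdot j},\tilde{X}_{\cdot k}$ (whose norms concentrate via Hanson--Wright thanks to bounded $\Theta_{jj}$), produces a Gaussian quadratic form whose entrywise deviation can be bounded. The hypothesis $q\lesssim n/\log p$ is calibrated precisely so that, after a union bound over the $p^2$ entries, both $\|R\|_\infty$ and $\|\E R\|_\infty$ are of strictly smaller order than $(n/p)\sqrt{\log(p)/n}$, which together with the oracle-piece bound yields the theorem.
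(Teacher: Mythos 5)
Your architecture is genuinely different from the paper's (which proves entrywise concentration directly, via the representation $\hat{\Sigma}_{\rsvp}=PDZ^T(ZD^2Z^T)^{-1}ZDP^T$ with $Z$ i.i.d.\ Gaussian and a bespoke concentration inequality, Theorem~\ref{thm:grad_conc}, requiring only a high-probability bound on the gradient of $Z\mapsto a^T\hat{\Sigma}_{\rsvp}b$), and your treatment of the oracle piece is fine. But the argument for the remainder has a genuine gap: your concluding claim that $\|R\|_\infty$ and $\|\E R\|_\infty$ are of strictly smaller order than $(n/p)\sqrt{\log(p)/n}$ is false. Heuristically, $\E R=\E\hat{\Sigma}_{\rsvp}-n\Theta/\tr(\Theta)$ must contain (minus) the confounding contribution $n\Gamma\Gamma^T/\tr(\Theta)$ that RSVP suppresses but $\hat{\Theta}$ retains, and this need not be small. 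Concretely, take $\Sigma=I$, $q=1$, $\Gamma=\sqrt{c}\,\mb{1}_p$ for a small constant $c$, with $p\asymp n^2$ say: Condition~\ref{cond} and the theorem's hypotheses hold, and for $j\neq k$ one has $n\Theta_{jk}/\tr(\Theta)=nc/\{(1+c)p\}\asymp n/p$ while $(\E\hat{\Sigma}_{\rsvp})_{jk}=p^{-1}C_{11}^2-p^{-1}(n-1)/(p-1)=O(1/p)$ (using \eqref{eq:exact1} and $C_{11}^2\le 1$), so $\|\E R\|_\infty\asymp n/p$, which exceeds your target by a factor of order $\sqrt{n/\log p}$. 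Only $\|R-\E R\|_\infty$ needs to be small — but that differs from the original target $\|\hat{\Sigma}_{\rsvp}-\E\hat{\Sigma}_{\rsvp}\|_\infty$ by exactly the easy oracle fluctuation, so the decomposition has not actually reduced the problem, and the route you give for closing it rests on a false bound.

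Separately, even for the centred remainder your proposed mechanism — operator-norm concentration of the bulk plus Cauchy--Schwarz sandwiching between $\tilde{X}_{\cdot j}$ and $\tilde{X}_{\cdot k}$ — loses too much. The bulk deviation $\tilde{W}\tilde{W}^T-\tr(\Sigma)\Pi$ has operator norm of order $\sqrt{n\tr(\Sigma^2)}\lesssim\sqrt{np\sigma_u}$; sandwiching with $\|(\tilde{X}\tilde{X}^T)^{+1/2}\tilde{X}_{\cdot j}\|_2^2\lesssim n/p$, $\|(\tilde{X}\tilde{X}^T)^{+}\|\lesssim 1/p$ and $\|\tilde{X}_{\cdot k}\|_2\lesssim\sqrt{n}$ yields an entrywise contribution of order $n^{3/2}\sigma_u^{1/2}p^{-3/2}$, which under $\sigma_u\lesssim p/(n\log p)$ can still exceed $(n/p)\sqrt{\log(p)/n}$ by a factor of order $\sqrt{n}/\log p$. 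Reaching the $\sqrt{\log(p)/n}$ rate requires exploiting cancellation in the bilinear form $\tilde{X}_{\cdot j}^T(\cdot)\tilde{X}_{\cdot k}$ rather than norm bounds on its factors; this is precisely what the paper's gradient-based Gaussian concentration argument (Theorem~\ref{thm:grad_conc} combined with Lemmas~\ref{lem:grad_comp}, \ref{lem:gradbd_1}, \ref{lem:gradbd_2} and \ref{lem:A_j_bd}) supplies and what your sketch does not.
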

We show in Theorem~\ref{thm:exp_Pi} that the entries in $\E \hat{\Sigma}_{\rsvp}$ are of the order $n/p$, so the result shows that the rate at which $\hat{\Sigma}_{\rsvp}$ concentrates is equivalent to that enjoyed by the empirical covariance matrix in the absence of confounding. The proof, given in Section~\ref{sec:Gauss_conc} in the supplementary material, is based on a variant of the classical concentration inequality for a  Lipschitz function $f:\R^d \to \R$ of i.i.d.\ Gaussian random variables $\zeta \sim \mathcal{N}_d(0, I)$, which may be of independent interest. Whereas the original result guarantees fast concentration when $\sup_{ v \in \R^d} \| \nabla f(v)\|_2$ is small, our new result (Theorem~\ref{thm:grad_conc}) only requires a high probability bound on $\| \nabla f(\zeta)\|_2$, and a potentially loose bound on $\E \| \nabla f(\zeta)\|_2^2$.
See also Lemma~1.3 of \citet{2018arXiv181203548K} for a related result.

Although our proof technique for concentration of $\hat{\Sigma}_{\rsvp}$ makes use of particular properties of Gaussian distributions, one attractive feature of the estimator is that it enjoys a certain in-built robustness to deviations from Gaussianity in the distribution of $X$. Indeed, consider now the weaker requirement that
\begin{equation} \label{eq:RSVP_dist}
X = M Z \Theta^{1/2} + \mb 1 \mu^T
\end{equation}
where $M \in \R^{(n+1) \times (n+1)}$ is invertible and the rows of $Z \in \R^{(n+1) \times p}$ are independent following (potentially different) spherically symmetric distributions, so $Z Q \eqdist Z$ for any orthogonal matrix $Q \in \R^{(n+1) \times (n+1)}$. A sufficient condition for this to occur is that the rows of $X$ are i.i.d. and have a density with elliptical contours.
In this more general setting we have the following result.
\begin{prop} \label{prop:RSVP_dist}
The law of $\hat{\Sigma}_{\rsvp}$ under \eqref{eq:RSVP_dist} above is the same as that when $X$ has independent rows distributed as $\mathcal{N}_p(\mu, \Theta)$.
\end{prop}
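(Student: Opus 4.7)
The plan is to show that $\hat\Sigma_{\rsvp}$ depends on the data only through the (random) row space of $\tilde X$, and that this subspace has a distribution that does not depend on $M$ or the choice of spherical distributions for the rows of $Z$.

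First, I would note two simple reductions. Since $\Pi \mathbf 1 = 0$, centering annihilates the mean term in \eqref{eq:RSVP_dist}, so $\tilde X = \Pi X = \Pi M Z\,\Theta^{1/2}$. The matrix $\hat\Sigma_{\rsvp} = VV^T$ is the orthogonal projection onto the $n$-dimensional row space of $\tilde X$, viewed as the subspace $S := \operatorname{col}(\tilde X^T)\subseteq \R^p$; so it suffices to prove that $S$ has the same distribution under \eqref{eq:RSVP_dist} as in the Gaussian case. Because $\Theta^{1/2}$ is a fixed invertible map, $S = \Theta^{1/2} S_0$ with $S_0 := \operatorname{col}(Z^T M^T \Pi)$, and it further suffices to show that $S_0$ has the same distribution in both models.

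Next, I would unpack $S_0$ to isolate the $Z$-dependence. Writing $A := \Pi M$, the deterministic column space of $A^T = M^T\Pi$ equals the $n$-dimensional subspace $M^T \mathbf 1^\perp$ of $\R^{n+1}$; pick any $V_A \in \R^{(n+1)\times n}$ of full column rank whose columns span it (e.g.\ the right singular vectors of $A$). Then as $v$ ranges over $\R^{n+1}$, $M^T\Pi v$ ranges over $\operatorname{col}(V_A)$, which gives
\[
S_0 = \{Z^T M^T\Pi v : v \in \R^{n+1}\} = \operatorname{col}(Z^T V_A).
\]

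The core of the argument is the following $O(p)$-invariance claim: for any deterministic $V_A\in\R^{(n+1)\times n}$ of rank $n$, the random subspace $\operatorname{col}(Z^T V_A)$ is uniformly distributed on the Grassmannian $Gr(n,p)$ whenever the rows of $Z$ are independent and spherically symmetric. Indeed, for any orthogonal $Q\in \R^{p\times p}$, the spherical symmetry of each row yields $ZQ \eqdist Z$, hence $V_A^T Z Q \eqdist V_A^T Z$; transposing gives $Q Z^T V_A \eqdist Z^T V_A$, so
\[
Q\cdot \operatorname{col}(Z^T V_A) \eqdist \operatorname{col}(Z^T V_A).
\]
Thus the law of $S_0$ on $Gr(n,p)$ is $O(p)$-invariant, and since $O(p)$ acts transitively on $Gr(n,p)$ the unique $O(p)$-invariant probability measure is the uniform (Haar) measure, yielding $S_0\sim \text{Unif}(Gr(n,p))$.

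Applying this conclusion both to the general elliptical model and to the Gaussian case (which fits \eqref{eq:RSVP_dist} with $M=I$ and $Z$ having i.i.d.\ $\mathcal N(0,1)$ entries) gives the same law for $S_0$, hence for $S = \Theta^{1/2}S_0$, and therefore for $\hat\Sigma_{\rsvp}$. The main subtlety, rather than a deep obstacle, is the book-keeping step that $\operatorname{col}(Z^T V_A)$ is almost surely $n$-dimensional, which follows from $V_A$ having rank $n$ together with mild non-degeneracy of the spherical row distributions (trivial in the Gaussian case, and needed for $S_0$ to live on $Gr(n,p)$).
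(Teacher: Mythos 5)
Your proof is correct and follows essentially the same route as the paper's: reduce $\hat{\Sigma}_{\rsvp}$ to a function of the random row space of the centred data, and observe that the spherical symmetry of the rows of $Z$ forces this random subspace (equivalently, the right singular vectors of $Z$) to have a rotation-invariant, hence uniquely determined, law, so one may take $Z$ Gaussian and $M=I$ without loss of generality. If anything, your explicit treatment of the centring via the fixed rank-$n$ subspace $\operatorname{col}(M^T\Pi)$ and the Grassmannian is slightly more careful than the paper's argument, which passes directly through the SVD of $Z$ and glosses over the interaction with $\Pi$.
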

For example, the entries in $Z$ can have arbitrarily heavy tails; provided the spherical symmetry is satisfied, all results in this section hold under this setting and more generally under \eqref{eq:RSVP_dist}.
This may seems surprising at first sight, but is analogous to how if $\zeta$ has a spherically symmetric distribution, then the distribution of $\zeta / \|\zeta\|_2$ is simply the uniform distribution on the $d$-dimensional spherical shell, and in particular identical to the distribution obtained when $\zeta \sim \mathcal{N}_d(0, I)$.

We now turn to the expectation of $\hat{\Sigma}_{\rsvp}$.
Theorem~\ref{thm:exp_Pi} below shows that $\E \hat{\Sigma}_{\rsvp}$ is approximately a scaled version of $\Sigma$.
\begin{thm} \label{thm:exp_Pi}
Assume Condition~\ref{cond}. We have that
$\E \hat{\Sigma}_{\rsvp} = PC^2 P^T$ where $C$ is a diagonal matrix with $C$ satisfying
\begin{equation} \label{eq:ratio}
\max_{j,k \in \{q+1,\ldots,p\}}\abs{\frac{C_{jj}^2}{D_{jj}^2} - \frac{C_{kk}^2}{D_{kk}^2}} \lesssim \sigma_u \frac{n^2}{p^2}.
\end{equation}
\end{thm}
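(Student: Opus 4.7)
The plan is to compute $\E\hat{\Sigma}_{\rsvp}$ in closed form using rotation and sign-flip symmetry, reduce each diagonal entry of $C^2$ to a scalar expectation via the Sherman--Morrison formula, and then bound the deviations using moments of inverse Wishart matrices.

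I would start by writing the Gaussian distributional identity $\tilde X \eqdist \Pi W P D P^T$ with $W \in \R^{(n+1) \times p}$ i.i.d.\ standard Gaussian, and noting that the right singular vectors of $\tilde X P$ are $P^T V$, so that $\hat{\Sigma}_{\rsvp} \eqdist P \, \Pi_{\Pi Y D} \, P^T$ with $Y := WP$ still i.i.d.\ standard Gaussian and $\Pi_A$ the orthogonal projection onto the row space of $A$. The sign flip $Y \mapsto YJ$, $J = \mathrm{diag}(\pm 1)$, preserves the law of $Y$ and, since $J$ commutes with $D$, conjugates $\Pi_{\Pi Y D}$ by $J$; varying $J$ kills every off-diagonal entry of $\E\,\Pi_{\Pi Y D}$, yielding $\E\hat{\Sigma}_{\rsvp} = P C^2 P^T$ with $C$ diagonal. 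Writing $\Pi = U U^T$ with $U \in \R^{(n+1) \times n}$ orthonormal and setting $Z := U^T Y \in \R^{n \times p}$ (again i.i.d.\ standard Gaussian), a short calculation shows that the row space of $\Pi Y D$ equals the column space of $DZ^T$, so $C_{jj}^2 = D_{jj}^2 \, \E[Z_j^T S^{-1} Z_j]$ with $S := Z D^2 Z^T$. Splitting $S = W_j + D_{jj}^2 Z_j Z_j^T$ where $W_j := \sum_{l \ne j} D_{ll}^2 Z_l Z_l^T$ is independent of $Z_j$, Sherman--Morrison then yields
\[
\frac{C_{jj}^2}{D_{jj}^2} \;=\; \E\!\left[\frac{T_j}{1 + D_{jj}^2 T_j}\right], \qquad T_j := Z_j^T W_j^{-1} Z_j \ge 0.
\]

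For $j, k \in \{q+1,\dots,p\}$, Weyl's inequality applied to $\Theta = \Sigma + \Gamma\Gamma^T$ gives $D_{jj}^2, D_{kk}^2 \le \sigma_u$. The elementary bound $|t/(1 + \alpha t) - t| \le \alpha t^2$ for $t, \alpha \ge 0$ then gives $|C_{jj}^2/D_{jj}^2 - \E T_j| \le \sigma_u \, \E T_j^2$. To compare $\E T_j$ with $\E T_k$, I introduce the common base $M_{jk} := W_j - D_{kk}^2 Z_k Z_k^T = W_k - D_{jj}^2 Z_j Z_j^T$, which is independent of both $Z_j$ and $Z_k$. Two applications of Sherman--Morrison express $\tr(W_j^{-1}) - \tr(W_k^{-1})$ as a difference of non-negative terms, each dominated by $\sigma_u Z_j^T M_{jk}^{-2} Z_j$ or its analogue in $k$; using independence gives $|\E T_j - \E T_k| \le 2 \sigma_u \, \E \tr(M_{jk}^{-2})$. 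The task then reduces to establishing $\E T_j^2 \lesssim n^2/p^2$ and $\E \tr(M_{jk}^{-2}) \lesssim n/p^2$, which would contribute $\sigma_u n^2/p^2$ (dominant) and $\sigma_u n/p^2$ respectively.

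The main obstacle is controlling these moments unconditionally rather than on a high-probability event, since $T_j$ is unbounded in principle. My strategy is the Loewner domination $W_j \succeq c_1 Z_{-j} Z_{-j}^T$, where $Z_{-j} \in \R^{n \times (p-1)}$ has i.i.d.\ standard Gaussian entries; this follows from $D_{ll}^2 \ge \sigma_l > c_1$ for all $l$ (a further application of Weyl using Condition~\ref{cond}). It yields $W_j^{-1} \preceq c_1^{-1} (Z_{-j} Z_{-j}^T)^{-1}$, and conditioning on $Z_{-j}$ together with the Gaussian quadratic-form moment $\E(Z_j^T A Z_j)^2 = (\tr A)^2 + 2 \tr(A^2)$ reduces $\E T_j^2$ to closed-form inverse-Wishart moments: $\E \tr\{(Z_{-j} Z_{-j}^T)^{-1}\} = n/(p - n - 2) \asymp n/p$ and $\E \tr\{(Z_{-j} Z_{-j}^T)^{-2}\} \lesssim n/p^2$, both well defined since $p - 1 > n + 1$ under Condition~\ref{cond}. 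An analogous argument using $M_{jk} \succeq c_1 Z_{-\{j,k\}} Z_{-\{j,k\}}^T$ controls $\E \tr(M_{jk}^{-2})$, and combining these bounds completes the proof.
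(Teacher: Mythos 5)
Your proposal is correct and follows essentially the same route as the paper's proof: the same sign-flip/rotation argument for diagonality of $C^2$, the same Sherman--Morrison reduction to $C_{jj}^2/D_{jj}^2 = \E\{T_j/(1+D_{jj}^2T_j)\}$, the same leave-one-out versus leave-two-out comparison through the common base $Z_{-jk}D_{-jk,-jk}^2Z_{-jk}^T$, and the same Loewner domination $W_j^{-1}\preceq \sigma_l^{-1}(Z_{-j}Z_{-j}^T)^{-1}$ feeding into inverse-Wishart moment bounds (the paper's Lemma on $\E A_j^4\lesssim n^2/p^2$ and the bound $\E\tr(A^{-2})\lesssim n/p^2$). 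The one point to tighten is that after conditioning on $Z_{-j}$ you need $\E[(\tr\Omega)^2]$ rather than $\E\tr\Omega$, but this is likewise a closed-form inverse-Wishart moment of order $n^2/p^2$, exactly as the paper uses.
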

The result above shows that the ratio of $C_{jj}^2$ to $D_{jj}^2$ does not vary much across $j \in \{q+1,\ldots,p\}$ provided $p \gg n$.
In fact we also have
\begin{equation}
\max_{j \in \{q+1,\ldots,p\}} \abs{C_{jj}^2 - \frac{(n-q)D_{jj}^2}{\sum_{k=q+1}^p D_{kk}^2}} \lesssim \sqrt{\frac{n}{p}} + \frac{p}{\gamma_l n}, \label{eq:exact1}
\end{equation}
in the case where $\sigma_u$ is bounded, which reveals the form of the scale factor, and in particular its dependence on the unknown $q$. A derivation is given in Section~\ref{sec:exp_Pi} of the supplementary material. We do not make direct use of this in the proof of Theorem~\ref{thm:Sigma_rec} below however as it is only useful when $\gamma_l$ is large; in contrast, \eqref{eq:ratio} is valid for any value of $\gamma_l$.

Combining the results of Proposition~\ref{prop:pop} and Theorems~\ref{thm:conc_Pi} and \ref{thm:exp_Pi} gives the following high probability bound on the $\ell_\infty$-norm error of estimating $\Sigma$, up to an unknown scale factor.
\begin{thm} \label{thm:Sigma_rec}
Assume Condition~\ref{cond} and that $\sigma_u \lesssim p / (n \log p)$, $q \lesssim n / \log(p)$.
With probability at least $1-c/p$ for some constant $c>0$, we have that there exists $\kappa >0 $ such that
\begin{equation} \label{eq:fin_res1}
\|\Sigma - \kappa \hat{\Sigma}_{\rsvp}\|_\infty \lesssim \frac{\gamma_u \rho_1^2}{\gamma_l^2} + \rho_1 \rho_2  + \min\left(\frac{p}{n}, \gamma_u\right) \rho_2^2 + \sigma_u \frac{n}{p} + \sqrt{\frac{\log(p)}{n}}.
\end{equation}
If we additionally assume that $\rho_2^2 \lesssim q/p$ and $\rho_1$ is bounded, we have there exists $\kappa >0$ such that
\begin{equation} \label{eq:fin_res2}
\|\Sigma - \kappa \hat{\Sigma}_{\rsvp}\|_\infty \lesssim \frac{\gamma_u}{\gamma_l^2} + \sqrt{\frac{q}{p}} + \frac{q}{n} +  \sigma_u \frac{n}{p} + \sqrt{\frac{\log(p)}{n}}.
\end{equation}
\end{thm}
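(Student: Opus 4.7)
The plan is to apply a triangle inequality against the population PC-removal proxy $P H_q(D^2) P^T$ and invoke the three previously established results. Using the identity $\E \hat{\Sigma}_{\rsvp} = P C^2 P^T$ from Theorem~\ref{thm:exp_Pi}, I decompose
\[
\|\Sigma - \kappa \hat{\Sigma}_{\rsvp}\|_\infty \leq \underbrace{\|\Sigma - P H_q(D^2) P^T\|_\infty}_{T_1} + \underbrace{\|P(H_q(D^2) - \kappa C^2) P^T\|_\infty}_{T_2} + \underbrace{\kappa\|\hat{\Sigma}_{\rsvp} - \E\hat{\Sigma}_{\rsvp}\|_\infty}_{T_3}.
\]
Proposition~\ref{prop:pop} immediately bounds $T_1 \lesssim \rho_1\rho_2 + \gamma_u \rho_1^2/\gamma_l^2$. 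I then choose $\kappa = D_{kk}^2/C_{kk}^2$ for any fixed $k > q$; by Theorem~\ref{thm:exp_Pi} and the discussion around \eqref{eq:exact1}, this $\kappa$ is of order $p/n$, so Theorem~\ref{thm:conc_Pi} yields $T_3 \lesssim \kappa(n/p)\sqrt{\log(p)/n} \lesssim \sqrt{\log(p)/n}$ on an event of probability at least $1 - c/p$.

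The crux of the proof is controlling $T_2 = \|P\Delta P^T\|_\infty$, where $\Delta$ is diagonal with $\Delta_{jj} = \mathbf{1}\{j>q\}D_{jj}^2 - \kappa C_{jj}^2$. Writing $(P\Delta P^T)_{ab} = S_{>q} + S_{\leq q}$, for $j > q$ Theorem~\ref{thm:exp_Pi} gives
\[
|\Delta_{jj}| = \kappa D_{jj}^2 \left|\frac{C_{jj}^2}{D_{jj}^2} - \frac{C_{kk}^2}{D_{kk}^2}\right| \lesssim \sigma_u\, \frac{n}{p},
\]
after using Weyl's inequality $D_{jj}^2 \leq \sigma_u$ and $\kappa \lesssim p/n$, and then Cauchy--Schwarz with $\sum_{j>q}|p_{ja}||p_{jb}| \leq 1$ yields $|S_{>q}| \lesssim \sigma_u n/p$. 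For the low-index contribution, $|\Delta_{jj}| = \kappa C_{jj}^2$, and two complementary bounds are available. The crude one $\kappa C_{jj}^2 \leq \kappa$ combined with Cauchy--Schwarz gives $|S_{\leq q}| \leq \kappa\|P_q^T e_a\|_2\|P_q^T e_b\|_2 \lesssim (p/n)\rho_2^2$; the sharper one, valid when $\kappa C_{jj}^2 \leq D_{jj}^2$, gives $\sum_{j\leq q} D_{jj}^2 p_{ja}^2 \leq D_{11}^2 \|P_q^T e_a\|_2^2 \lesssim \gamma_u \rho_2^2$. Taking the minimum produces the $\min(p/n,\gamma_u)\rho_2^2$ term of \eqref{eq:fin_res1}. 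The pivotal supporting inequality $\|P_q^T e_a\|_2^2 = (P_q P_q^T)_{aa} \lesssim \rho_2^2$ is obtained from a Davis--Kahan $\sin\theta$ argument showing that $P_q P_q^T$ is close to $\Pi_\Gamma$ under the eigengap condition $\gamma_l > c_3\sigma_u$ in Condition~\ref{cond}, in the spirit of the proof of Proposition~\ref{prop:pop}.

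Collecting $T_1 + T_2 + T_3$ yields \eqref{eq:fin_res1}. The second bound \eqref{eq:fin_res2} then follows by direct substitution: $\rho_1 \lesssim 1$ makes $\rho_1\rho_2 \lesssim \sqrt{q/p}$ and $\gamma_u\rho_1^2/\gamma_l^2 \lesssim \gamma_u/\gamma_l^2$, while $\rho_2^2 \lesssim q/p$ gives $(p/n)\rho_2^2 \lesssim q/n$. The chief technical obstacle is the Davis--Kahan step: because Condition~\ref{cond} only guarantees a constant-factor gap $\gamma_l/\sigma_u > c_3$, the classical $\sin\theta$ bound can be weak, so the argument must carefully mirror the structure of Proposition~\ref{prop:pop} in order to retain the sharper $\rho_2$-level control of $\|P_q^T e_a\|_2$ rather than the trivial bound $\|e_a\|_2 = 1$.
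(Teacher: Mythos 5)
Your proposal is correct and follows essentially the same route as the paper's proof: the same decomposition through $PH_q(D^2)P^T = P_LD_L^2P_L^T$, the same choice $\kappa = D_{q+1,q+1}^2/C_{q+1,q+1}^2$ with $\kappa \lesssim p/n$, the same use of Theorem~\ref{thm:exp_Pi} for the indices $j>q$, the same pair of complementary bounds on $\kappa C_{jj}^2$ for $j\leq q$ yielding $\min(p/n,\gamma_u)\rho_2^2$, and the same Davis--Kahan control of $\|P_F^T e_a\|_2$ inherited from the proof of Proposition~\ref{prop:pop}. The only slight imprecision is that the Davis--Kahan step actually gives $\|P_F^T e_a\|_2 \lesssim \rho_2 + \rho_1/\gamma_l$ rather than $\rho_2$ alone, but the resulting cross terms are absorbed by the $\gamma_u\rho_1^2/\gamma_l^2$ term already present in \eqref{eq:fin_res1}, so the conclusion is unaffected.
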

The first two terms in the bounds \eqref{eq:fin_res1} and \eqref{eq:fin_res2} come directly from the population-level result Proposition~\ref{prop:pop}. The remaining terms do not depend on $\gamma_l$, demonstrating how RSVP, in contrast to the PC-removal approach, does not rely on a large eigengap between $\Gamma^T\Gamma$ and $\Sigma$. The final $\sqrt{\log(p)/n}$ term is due to the variance (see Theorem~\ref{thm:conc_Pi}). Considering \eqref{eq:fin_res2}, in the case where $\sigma_u \lesssim p\sqrt{\log (p)} / n^{3/2}$, $q \lesssim \sqrt{n \log(p)}$ and $p \sqrt{\log(p)} \geq n^{3/2}$, we have that with high probability
\begin{equation*}
\inf_{\kappa> 0} \|\Sigma - \kappa \hat{\Sigma}_{\rsvp}\|_\infty \lesssim \frac{\gamma_u}{\gamma_l^2} + \sqrt{\frac{\log(p)}{n}}.
\end{equation*}
If the condition number of $\Gamma^T\Gamma$ were bounded, we only need $\gamma_l \gtrsim \sqrt{n/ \log(p)}$  for the $\ell_\infty$-norm error above to be of the same order as that achieved by the empirical covariance matrix of the (unobserved) unconfounded data $W$.

Whilst RSVP does not require strong eigengap conditions, we do need $p
\gg n$ so that the term involving $\sigma_u n/p$ due to the bias of
the estimator, is small. By sample-slitting and averaging in constructing
$\hat{\Sigma}_{\srsvp}$, we effectively reduce $n$, but only introduce an extra $\sqrt{\log(p)}$ factor in the variance term, as the following result shows.
\begin{thm} \label{thm:Sigma_rec_bag}
Let $\hat{\Sigma}_{\srsvp}$ be the sample-splitting RSVP estimator with $B$ subsamples of size $m$, so $n+1=mB$. We consider for simplicity the case where the data is column-centred in each subsample. Assume Condition~\ref{cond}, but without the requirement that $c_3 n < p$; instead suppose $c_1 m < p$ for $c_1 > 1$, and $B < p^{c_2}$ for some $c_2 > 0$.
Assume that $1 \lesssim \sigma_u \lesssim p / (m \log p)$, $q \lesssim m / \log(p)$. With probability at least $1-c/ p$ for some constant $c>0$, we have that there exists $\kappa >0 $ such that
\begin{equation*}
\|\Sigma - \kappa \hat{\Sigma}_{\srsvp}\|_\infty \lesssim \frac{\gamma_u \rho_1^2}{\gamma_l^2} + \rho_1 \rho_2  + \min\left(\frac{p}{m}, \gamma_u\right) \rho_2^2 + \sigma_u \frac{m}{p} + \frac{\log(p)}{\sqrt{n}}.
\end{equation*}
If we additionally assume that $\rho_2^2 \lesssim q/p$ and $\rho_1$ is bounded, we have that there exists $\kappa >0$ such that
\begin{equation} \label{eq:fin_res3}
\|\Sigma - \kappa \hat{\Sigma}_{\srsvp}\|_\infty \lesssim \frac{\gamma_u}{\gamma_l^2} + \sqrt{\frac{q}{p}} + \frac{q}{m} +  \sigma_u \frac{m}{p} + \frac{\log(p)}{\sqrt{n}}.
\end{equation}
\end{thm}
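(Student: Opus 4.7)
The plan is to decompose $\Sigma - \kappa \hat{\Sigma}_{\srsvp}$ into a bias contribution $\Sigma - \kappa \E\hat{\Sigma}_{\srsvp}$ and a variance contribution $\kappa(\E\hat{\Sigma}_{\srsvp} - \hat{\Sigma}_{\srsvp})$, and to control each using the per-subsample results Theorems~\ref{thm:exp_Pi} and \ref{thm:conc_Pi}. The key feature of the sample-splitting construction that drives the result is that the matrices $V^{(b)}(V^{(b)})^T$, $b=1,\ldots,B$, are i.i.d., since they are computed on disjoint subsamples; this independence is what enables variance reduction beyond what a single subsample of size $m$ would give.

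For the bias, identical distribution of the subsamples gives $\E\hat{\Sigma}_{\srsvp} = \E[V^{(1)}(V^{(1)})^T]$, so Theorem~\ref{thm:exp_Pi} applied with $m$ in place of $n$ yields $\E\hat{\Sigma}_{\srsvp} = PC^2P^T$ with the ratios $C_{jj}^2/D_{jj}^2$, $j \geq q+1$, agreeing up to $\sigma_u m^2/p^2$. I would choose $\kappa$ as the reciprocal of any such common ratio; numerically $\kappa \asymp p/m$ since $\tr(C^2) = \E\tr(V^{(1)}(V^{(1)})^T) = m$. Then $\kappa \E\hat{\Sigma}_{\srsvp}$ agrees with $PH_q(D^2)P^T$ on the trailing eigencomponents up to a uniform entrywise error controlled by $\sigma_u m/p$, and Proposition~\ref{prop:pop} controls $\|PH_q(D^2)P^T - \Sigma\|_\infty$. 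Combined with the provided bounds on $\rho_1, \rho_2$, this reproduces the first four bias terms of the claimed bound, by the same chain of estimates as in the proof of Theorem~\ref{thm:Sigma_rec} but with $m$ in place of $n$.

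For the variance, write $\hat{\Sigma}_{\srsvp} - \E\hat{\Sigma}_{\srsvp} = B^{-1}\sum_{b=1}^B \Delta^{(b)}$ with $\Delta^{(b)} := V^{(b)}(V^{(b)})^T - \E[V^{(b)}(V^{(b)})^T]$ i.i.d.\ and mean-zero. Theorem~\ref{thm:conc_Pi} applied with $n=m$ gives $\|\Delta^{(b)}\|_\infty \lesssim (m/p)\sqrt{\log(p)/m}$ with high probability, and since its proof is based on Gaussian Lipschitz concentration, the statement extends to probability at least $1 - p^{-(c_2+3)}$ by enlarging the implicit constant. I would truncate each $\Delta^{(b)}_{jk}$ at that level, bound the resulting truncation bias via a second-moment estimate (obtained by integrating the Gaussian tail from Theorem~\ref{thm:conc_Pi}), apply Hoeffding's inequality entrywise, and union-bound over the $p^2$ entries and the $B \leq p^{c_2}$ subsamples. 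This yields $\|B^{-1}\sum_b \Delta^{(b)}\|_\infty \lesssim (m/p)\log(p)/\sqrt{mB}$ with probability at least $1 - c/p$, which after rescaling by $\kappa \asymp p/m$ becomes the claimed $\log(p)/\sqrt{n}$ term (using $mB = n+1$).

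The main technical obstacle is the Hoeffding/truncation step: the raw output of Theorem~\ref{thm:conc_Pi} is only a high-probability bound, so one must either control the truncation bias via a second-moment estimate, or else bypass truncation entirely by applying a Bernstein inequality with a direct variance bound on $\Delta^{(b)}_{jk}$. Either route requires some additional tail control beyond the literal statement of Theorem~\ref{thm:conc_Pi}, but both are reachable from its Gaussian-concentration-based proof, and the condition $B < p^{c_2}$ is precisely what is needed to keep the union bound cheap. Once bias and variance estimates are combined by triangle inequality, one obtains the first displayed bound of the theorem, and under the additional assumptions $\rho_2^2 \lesssim q/p$ and bounded $\rho_1$ this simplifies directly to \eqref{eq:fin_res3}, exactly as in the passage from the general form to the simplified form in the proof of Theorem~\ref{thm:Sigma_rec}.
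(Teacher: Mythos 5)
Your proposal is correct and follows essentially the same route as the paper: the same bias/variance decomposition with $\kappa \asymp p/m$ from the per-subsample Theorem~\ref{thm:exp_Pi}, the bias bound of Theorem~\ref{thm:Sigma_rec} with $m$ in place of $n$, and a truncated-Hoeffding argument (the paper's Lemma~\ref{lem:Hoeffding_var}) applied entrywise to the i.i.d.\ centred subsample estimates. The only cosmetic difference is that the paper controls the truncation bias using the almost-sure bound $|W_b|\leq 1$ (each $\hat{\Sigma}^{(b)}$ being a projection matrix) together with the tail probability from Lemma~\ref{lem:conc_Pi}, rather than your proposed second-moment estimate; both work.
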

Considering \eqref{eq:fin_res3}, we see that for an optimal $m \asymp \sqrt{pq/\sigma_u}$ we have with high probability that
\begin{equation} \label{eq:optim_m}
\inf_{\kappa > 0} \|\Sigma - \kappa \hat{\Sigma}_{\srsvp}\|_\infty \lesssim \frac{\gamma_u}{\gamma_l^2}  + \sqrt{\frac{q \sigma_u}{p}} + \frac{\log(p)}{\sqrt{n}}.
\end{equation}
While the simple RSVP estimator
is most useful in the high-dimensional case $p\gg n$, this
  result shows that 
  sample-splitting gives good performance in moderate to
  low-dimensional settings, which will be confirmed empirically in Section~\ref{sec:experiments}.
  
\subsection{Weak confounding}
The results and discussion thus far have considered the case where $\gamma_l > \sigma_u$. In cases where the confounding is sufficiently weak such that
\[
\|\Theta - \Sigma\|_\infty = \|\Gamma \Gamma^T\|_\infty = \max_j\|\Gamma^T\Pi_{\Gamma}e_j\|_2^2 \leq \gamma_u \rho_2^2
\]
is small, and so the empirical covariance $\hat{\Theta}$ is itself a good estimator of $\Sigma$, a straightforward consequence of our previous results and their proofs is that RSVP will behave similarly to the empirical covariance.
\begin{cor}
Consider the setup of Theorem~\ref{thm:Sigma_rec} but now without any restriction on $q$ and $\gamma_l$ (so in particular $\gamma_l < \sigma_u$ is permitted). With probability at least $1-c/p$ for some constant $c>0$, there exists $\kappa > 0$ such that
\[
\|\Sigma - \kappa \hat{\Sigma}_{\rsvp}\|_\infty \lesssim \gamma_u \rho_2^2 + (\gamma_u + \sigma_u) \frac{n}{p} + \sqrt{\frac{\log(p)}{n}}.
\]
Suppose additionally that $\gamma_u/\gamma_l$ is bounded, $\rho_2^2 \lesssim q/p$, $\sigma_u \lesssim p \sqrt{\log(p)}/n^{3/2}$, $q \lesssim n \max\left(\sqrt{\log(p)/n}, 1/\log(p)\right)$ and $p \log(p) \geq n^2$. Then with probability at least $1-c/p$,
\[
\|\Sigma - \kappa \hat{\Sigma}_{\rsvp}\|_\infty \lesssim \sqrt{\frac{\log(p)}{n}}.
\]
\end{cor}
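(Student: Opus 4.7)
The plan is to introduce $\Theta$ as an intermediate target and reduce the corollary to the unconfounded analysis already developed for Theorem~\ref{thm:Sigma_rec}. Starting from the triangle inequality
\[
\|\Sigma - \kappa \hat{\Sigma}_{\rsvp}\|_\infty \leq \|\Sigma - \Theta\|_\infty + \|\Theta - \kappa \hat{\Sigma}_{\rsvp}\|_\infty,
\]
the first term is exactly $\|\Gamma\Gamma^T\|_\infty$, which the identity displayed immediately before the corollary bounds by $\gamma_u \rho_2^2$. This accounts for the first summand in the claimed inequality without using any probabilistic input and, importantly, works even when $\gamma_l < \sigma_u$.

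For the second term I would re-apply the machinery of Theorems~\ref{thm:conc_Pi} and~\ref{thm:exp_Pi}, but now with $\Theta$ playing the role that $\Sigma$ played there and with the effective number of confounders set to $q=0$. The estimator $\hat{\Sigma}_{\rsvp}$ depends only on the observed $X$, so this is purely a change in what we compare $\E \hat{\Sigma}_{\rsvp}$ against; no distributional facts about the data are altered. Theorem~\ref{thm:conc_Pi}'s concentration conclusion does not actually use the eigengap $\gamma_l > c_3\sigma_u$, so it still delivers $(p/n)\|\hat{\Sigma}_{\rsvp} - \E \hat{\Sigma}_{\rsvp}\|_\infty \lesssim \sqrt{\log(p)/n}$, producing the variance term. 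For the bias, the argument of Theorem~\ref{thm:exp_Pi} specialised to $q=0$ gives a ratio bound $|C_{jj}^2/D_{jj}^2 - C_{kk}^2/D_{kk}^2|$ that is uniform over all $j,k \in \{1,\ldots,p\}$ with $\sigma_u$ replaced by $\lambda_{\max}(\Theta) \leq \sigma_u + \gamma_u$. After absorbing the natural scale factor $\kappa \asymp p/n$ of $\hat{\Sigma}_{\rsvp}$ this translates into $\|\Theta - \kappa \hat{\Sigma}_{\rsvp}\|_\infty \lesssim (\sigma_u+\gamma_u) n/p + \sqrt{\log(p)/n}$, which is precisely the pair of remaining terms.

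The second, simplified conclusion should then follow by plugging the extra hypotheses into the general bound and simplifying term by term. Substituting $\rho_2^2 \lesssim q/p$ turns $\gamma_u \rho_2^2$ into $\gamma_u q/p$; the assumption $\sigma_u \lesssim p\sqrt{\log p}/n^{3/2}$ handles $\sigma_u n/p$; the condition $p\log(p) \geq n^2$ together with boundedness of $\gamma_u/\gamma_l$ and the stated range of $q$ ensures that the $\gamma_u$-dependent terms are also absorbed into the final $\sqrt{\log(p)/n}$ rate. These verifications are mechanical.

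The main obstacle is the bookkeeping in extending Theorem~\ref{thm:exp_Pi} to the $q=0$ case without any eigengap. The original argument compares the ratios $C_{jj}^2/D_{jj}^2$ only across the bulk indices $\{q+1,\ldots,p\}$, and one has to check that the perturbation-type estimates it relies on go through uniformly in $j \in \{1,\ldots,p\}$ with $\lambda_{\max}(\Theta)$ taking the role of $\sigma_u$ in the error bound. Once this mild generalisation is confirmed, combining it with the unchanged concentration bound of Theorem~\ref{thm:conc_Pi} and the weak-confounding identity assembles both parts of the corollary.
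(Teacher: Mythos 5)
Your decomposition for the first display is the intended one: pass through $\Theta$ via the triangle inequality, bound $\|\Sigma-\Theta\|_\infty=\|\Gamma\Gamma^T\|_\infty\leq\gamma_u\rho_2^2$ by the identity preceding the corollary, and re-run the bias analysis of Theorem~\ref{thm:exp_Pi} with $q=0$ and $\sigma_u$ replaced by $\lambda_{\max}(\Theta)\leq\gamma_u+\sigma_u$. One caveat there: you assert that Theorem~\ref{thm:conc_Pi} applies unchanged because it does not use the eigengap, which is true, but its proof does use $q\lesssim n/\log(p)$ — that hypothesis controls the number of eigenvalues of $\Theta$ exceeding $\sigma_u$ in the bound $\sum_j\min(D_{jj}^4,p^2/n^2)\lesssim qp^2/n^2+p\sigma_u$ — and the corollary drops all restrictions on $q$. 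Without some substitute control (on $q$ or on $\gamma_u$) the gradient bound, and hence the variance term, degrades by a logarithmic factor, so this step needs an explicit argument rather than a citation.

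The genuine gap is in the second display. It does \emph{not} follow by mechanical substitution into the first: the term $(\gamma_u+\sigma_u)n/p$ contains $\gamma_u$, and the stated hypotheses only give $\gamma_u\lesssim\gamma_l\lesssim p/q$ (from $q\gamma_l\leq\tr(\Gamma^T\Gamma)\leq\tr(\Theta)\lesssim p$), so $\gamma_u n/p$ can be as large as $n/q\gg\sqrt{\log(p)/n}$ in the permitted range of $q$; the same problem afflicts $\gamma_u\rho_2^2\lesssim\gamma_u q/p$. The intended argument is a case split on the strength of confounding, which is exactly what the boundedness of $\gamma_u/\gamma_l$ is for. If $\gamma_l\leq c_3\sigma_u$ (weak confounding), then $\gamma_u\lesssim\gamma_l\lesssim\sigma_u\lesssim p\sqrt{\log(p)}/n^{3/2}$ and every term in the first display is $\lesssim\sqrt{\log(p)/n}$. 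If instead $\gamma_l>c_3\sigma_u$, the eigengap of Condition~\ref{cond} is restored and one must invoke Theorem~\ref{thm:Sigma_rec} (not the first display of the corollary) to get the rate. Your proposal never brings Theorem~\ref{thm:Sigma_rec} back into play for the strong-confounding regime, so as written the second conclusion is unproved.
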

Note that the final result holds regardless of the strength of confounding, which can be arbitrarily weak or strong, though it relies on the condition number of $\Gamma^T\Gamma$ being bounded.
\section{Conditional Independence Graph Estimation and Causal Structure Learning} \label{sec:precision}
In this section we consider using RSVP in conjunction with existing methods for conditional independence graph  estimation and causal structure learning. 
We first turn to the problem of estimating the conditional independence graph corresponding to $\Sigma$: this is the undirected graph on $p$ nodes with an edge between nodes $j$ and $k$ with $j \neq k$ if and only if $w_j \notindependent w_k | w_{-jk}$, where recall $w \sim \mathcal{N}_p(\mu_w, \Sigma)$
Equivalently, we have an edge between $j$ and $k$ if and only if the precision matrix $\Omega = \Sigma^{-1}$ has $\Omega_{jk} \neq 0$.

\subsection{Conditional Independence Graph Estimation}
Methods for conditional independence graph (CIG) estimation when $p \gg n$ typically rely on $\Omega$ being sparse. Applying them directly to the observed data $X$ will in general not work well, firstly as the inverse covariance $\Theta^{-1}$ of the observed data may be far from $\Omega$, and secondly because $\Theta^{-1}$ will not be sparse but rather a sum of the sparse $\Omega$ and a low-rank component due to the presence of latent confounding. However, many of the methods for sparse precision matrix estimation require only an estimated covariance as input, and so can be readily applied to any estimate of $\Sigma$. Examples include neighbourhood selection \citep{meinshausen04consistent}, the graphical Lasso \citep{yuan05model,Yuan2010,Friedman2008} and CLIME \citep{Cai2011}. Note that as RSVP only estimates $\Sigma$ up to an unknown scale factor, we can similarly only hope to recover the precision matrix up to an unknown scale factor; this however suffices for estimating the CIG. Theoretical results for CLIME and the graphical Lasso only require an initial estimate of $\Sigma$ that is close in $\ell_\infty$-norm, so our estimation error bounds for $\Sigma$ translate directly into estimation error bounds on $\Sigma^{-1}$. We now present the corresponding result for neighbourhood selection, which is more delicate.

The procedure of neighbourhood selection involves running $p$ Lasso regressions of each variable against all others. The resulting coefficient estimates may then be used to derive an estimate of the CIG. Phrased in terms of an estimate $\hat{\Sigma}$ of the covariance, the so-called nodewise regressions take the following form:
\begin{equation} \label{eq:NS}
\hat{\beta}^{(j)} := \argmin{b \in \R^{p} : b_j=0} \left\{\frac{1}{2} b^T \hat{\Sigma} b - b^T \hat{\Sigma}_j + \lambda_j \|b\|_1 \right\}.
\end{equation}
The population level minimiser $\beta^{(j)} \in \R^p$ (i.e.\ with $\hat{\Sigma}$ replaced by $\Sigma$) of the above when $\lambda_j=0$ satisfies
\[
 {\beta}^{(j)}_l = \begin{cases}
 (\Sigma_{-j,-j}^{-1}\Sigma_{-j,j})_l \qquad & l < j,  \\
 0 \qquad & l=j, \\
 (\Sigma_{-j,-j}^{-1}\Sigma_{-j,j})_{l-1} \qquad & l > j.
 \end{cases}.
\]
The $\{\beta^{(j)}\}_{j=1}^p$ encode the CIG, indeed $w_j \independent w_k | w_{-jk}$ if and only if $\beta^{(j)}_k = \beta^{(k)}_j = 0$.  Here we will take $\hat{\Sigma} = \hat{\Sigma}_{\rsvp}$ in \eqref{eq:NS}; we thus expect that a scaled version of $\hat{\beta}^{(j)}$ gets close to $\beta^{(j)}$. 

In order to present our result on the statistical properties of $\hat{\beta}^{(j)}$, we introduce the following quantities. Let $S_j = \{l : \beta^{(j)}_l \neq 0\}$ and let $s_j := |S_j|$ and $s = \max_j s_j$; thus $s_j$  and $s$ are the degree of the $j$th node and the maximal degree in the CIG respectively.
Also define
\[
\eta_j = (\beta^{(j)})^T \Gamma \Gamma^T \beta^{(j)}.
\]
Our theory will require the $\eta_j$ to be small. We always have $\eta_j \lesssim s_j$ for all $j$. Indeed
\[
(\beta^{(j)})^T \Gamma \Gamma^T \beta^{(j)} \leq (\beta^{(j)})^T \Theta \beta^{(j)} =  \beta^{(j)}_{S_j}\Theta_{S_j,S_j} \beta^{(j)}_{S_j}.
\]
As $\Theta$ is positive semi-definite, we have $|\Theta_{lk}| \leq \max(\Theta_{ll}, \Theta_{kk}) \lesssim 1$ for all $l,k$. Thus by the Gershgorin circle theorem, $\lambda_{\max}(\Theta_{S_j,S_j}) \lesssim s_j$. Also, as
\[
1 \gtrsim \Theta_{jj} \geq \Var(w_j) \geq \Var(w_j|w_{-j}) = \|\Sigma_{-j,-j}^{-1/2}\Sigma_{-j,j} \|_2^2 \geq \sigma_l \|\Sigma_{-j,-j}^{-1}\Sigma_{-j,j}\|_2^2,
\]
we have $\|\beta^{(j)}\|_2 \lesssim 1$, whence $\eta_j \lesssim s_j$.

However in many settings we can expect the $\eta_j$ to be smaller: if we consider the column space of $\Gamma$ to have been chosen (by nature) uniformly at random conditional on $\Sigma$, then we have 
\begin{equation} \label{eq:eta_bd}
\eta_j \lesssim 1 \text{ for all } j.
\end{equation}
A derivation of this is given in Section~\ref{sec:eta_bd} of the supplementary material.

\begin{thm} \label{thm:NS}
Assume Condition~\ref{cond}.
Let
\[
\Delta := \sqrt{\frac{sn}{\log(p)}}\left\{ \frac{\gamma_u \rho_1^2}{\gamma_l^2} + \rho_1 \rho_2  + \min\left(\frac{p}{n}, \gamma_u\right)\rho_2^2 + \sigma_u \frac{n}{p} \right\}.
\]
Let $\hat{\beta}^{(j)}$ be the nodewise regression coefficient when $\hat{\Sigma} = \hat{\Sigma}_{\rsvp}$ and $\lambda_j = A\sqrt{\max(\eta_j, \Delta,1) n \log(p)}/p$ for constant $A>0$. Suppose $s=o(\sqrt{n/\log(p)})$. We have that for $A$, $n$ and $p$ sufficiently large, with probability at least $1-c/p$ for some constant $c>0$,
\begin{align*}
\|\hat{\beta}^{(j)} - \beta^{(j)}\|_2 &\lesssim \sqrt{s_j \log(p)\max(\eta_j, \Delta,1) / n} \\
\|\hat{\beta}^{(j)} - \beta^{(j)}\|_1 &\lesssim s_j\sqrt{\log(p)\max(\eta_j, \Delta,1) /n}
\end{align*}
for all $j=1,\ldots,p$.
\end{thm}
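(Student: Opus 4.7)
The approach is a standard compatibility-based analysis of the Lasso in \eqref{eq:NS}, using Theorem~\ref{thm:Sigma_rec} to control $\kappa\hat{\Sigma}_{\rsvp}$ in $\ell_\infty$ norm and an adapted Gaussian Lipschitz argument to sharpen directional errors. First, observe that the minimiser of \eqref{eq:NS} is invariant under the simultaneous rescaling $(\hat{\Sigma},\lambda_j) \mapsto (\kappa\hat{\Sigma},\kappa\lambda_j)$, so one may work with $\tilde{\Sigma} := \kappa\hat{\Sigma}_{\rsvp}$ and $\tilde{\lambda}_j := \kappa\lambda_j$, where $\kappa$ is the (deterministic) scale furnished by Theorem~\ref{thm:Sigma_rec}. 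Since Theorem~\ref{thm:exp_Pi} forces $\kappa \asymp p/n$, the rescaled tuning satisfies $\tilde{\lambda}_j \asymp \sqrt{\max(\eta_j,\Delta,1)\log(p)/n}$, the conventional Lasso rate.

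The two classical inputs to Lasso theory are then verified. Using $(\Sigma\beta^{(j)} - \Sigma_j)_{-j} = 0$ and $\tilde{\Sigma}\beta^{(j)} - \tilde{\Sigma}_j = \tilde{\Sigma}(\beta^{(j)} - e_j)$, the gradient of the quadratic loss at $\beta^{(j)}$ decomposes as
\begin{equation*}
(\tilde{\Sigma}\beta^{(j)} - \tilde{\Sigma}_j)_{-j} = \{(\tilde{\Sigma} - \E\tilde{\Sigma})(\beta^{(j)} - e_j)\}_{-j} + \{(\E\tilde{\Sigma} - \Sigma)(\beta^{(j)} - e_j)\}_{-j}.
\end{equation*}
The bias term is handled via the $\ell_\infty$ bound of Theorem~\ref{thm:Sigma_rec} combined with $\|\beta^{(j)} - e_j\|_1 \lesssim \sqrt{s_j}$ and a structured argument that exploits the eigenstructure of $\E\tilde{\Sigma}-\Sigma$ identified in Theorem~\ref{thm:exp_Pi}, contributing $O(\sqrt{\Delta\log(p)/n})$. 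For the variance term, I would adapt the Lipschitz Gaussian concentration of Theorem~\ref{thm:conc_Pi} to the scalar $e_l^T V V^T(\beta^{(j)} - e_j)$ for each $l \neq j$; a short calculation using $(\Sigma\beta^{(j)} - \Sigma_j)_{-j} = 0$ and $(\Gamma\Gamma^T)_{jj} \lesssim 1$ shows that the relevant gradient magnitude scales with $\|\beta^{(j)} - e_j\|_\Theta^2 \lesssim 1 + \eta_j$ rather than the entrywise $\Theta_{ll} \lesssim 1$, and a union bound over $l$ yields $\|\{(\tilde{\Sigma}-\E\tilde{\Sigma})(\beta^{(j)}-e_j)\}_{-j}\|_\infty \lesssim \sqrt{(1+\eta_j)\log(p)/n}$. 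Combining these contributions gives $\|(\tilde{\Sigma}\beta^{(j)} - \tilde{\Sigma}_j)_{-j}\|_\infty \leq \tilde{\lambda}_j/2$ for $A$ sufficiently large. The compatibility condition for $\tilde{\Sigma}$ on the cone $\{v: \|v_{S_j^c}\|_1 \leq 3\|v_{S_j}\|_1\}$ follows from $v^T\tilde{\Sigma}v \geq (\sigma_l - 16s\|\tilde{\Sigma}-\Sigma\|_\infty)\|v\|_2^2$, which under $s=o(\sqrt{n/\log(p)})$ gives a positive lower bound with high probability. The classical Lasso oracle inequality then delivers $\|\hat{\beta}^{(j)} - \beta^{(j)}\|_2 \lesssim \sqrt{s_j}\tilde{\lambda}_j$ and $\|\hat{\beta}^{(j)} - \beta^{(j)}\|_1 \lesssim s_j\tilde{\lambda}_j$, and a union bound over $j$ absorbs the resulting failure probability into the existing $\log(p)$ factors.

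The main obstacle is sharpening the variance bound along the direction $\beta^{(j)}-e_j$: since $\hat{\Sigma}_{\rsvp} = VV^T$ depends nonlinearly on the Gaussian data, the rate for $(\hat{\Sigma}_{\rsvp}(\beta^{(j)}-e_j))_l$ is not accessible via Bernstein-type tools, and the Lipschitz gradient computation underlying Theorem~\ref{thm:conc_Pi} must be reworked so that gradient norms are controlled by $\|\beta^{(j)}-e_j\|_\Theta$ rather than $\sqrt{\Theta_{ll}}$. This is precisely what allows $\eta_j$ to emerge as the leading stochastic scale in place of the pessimistic $s_j$. A parallel sharpening of the bias---reducing the crude $\Delta\sqrt{\log(p)/n}$ obtained from H\"older's inequality to $\sqrt{\Delta\log(p)/n}$---relies on the explicit form of $\E\hat{\Sigma}_{\rsvp}$ from Theorem~\ref{thm:exp_Pi} together with $\|\beta^{(j)}\|_2 \lesssim 1$.
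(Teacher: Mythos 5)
Your proposal follows essentially the same route as the paper: rescale by $\kappa \asymp p/n$, split the KKT residual at $\beta^{(j)}$ into a bias part controlled by H\"older together with the population bound \eqref{eq:bias_fin} and a variance part controlled by the bilinear concentration of Lemma~\ref{lem:conc_Pi} (whose statement already covers arbitrary directions, so no reworking of the gradient computation is needed) with the key observation that $\|\Theta^{1/2}(\beta^{(j)}-e_j)\|_2^2 \lesssim 1+\eta_j$, and then transfer the restricted eigenvalue from $\Sigma$ to $\hat{\Sigma}_{\rsvp}$ using $s = o(\sqrt{n/\log p})$ before invoking the standard Lasso oracle inequality and a union bound. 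The only real deviation is your proposed sharpening of the bias contribution from $\Delta\sqrt{\log(p)/n}$ to $\sqrt{\Delta\log(p)/n}$: the paper's own proof does not carry this out (it bounds the term $\text{II}_j$ by $\Delta\sqrt{\log(p)/n}$ via H\"older and absorbs it directly into $\sqrt{\max(\eta_j,\Delta,1)\log(p)/n}$), so this step of your plan is more ambitious than, rather than required by, the argument actually given.
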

Suppose
$\rho_2^2 \lesssim q/p$, $\rho_1 \lesssim 1$, $\gamma_l^2 / \gamma_u \gtrsim \sqrt{sn / \log(p)}$, $q \lesssim \sqrt{n\log(p)/s}$ and $\sigma_u \lesssim p\sqrt{\log(p)/(sn^3)}$;
then $\Delta \lesssim 1$. If in addition $\eta_j \lesssim 1$ for all $j$, we recover the usual estimation error rates for the Lasso:
\[
\|\hat{\beta}^{(j)} - \beta^{(j)}\|_2 \lesssim \sqrt{s_j \log(p)/n}, \qquad \|\hat{\beta}^{(j)} - \beta^{(j)}\|_1 \lesssim s_j \sqrt{\log(p)/n}.
\]
The following simple corollary shows that under a minimum signal strength condition, appropriately thresholding the estimates $\hat{\beta}^{(j)}$ recovers the true CIG.
\begin{cor} \label{cor:CIG_recover}
Consider the setup of Theorem~\ref{thm:NS} and suppose $\max(\Delta, \eta_j)$. Suppose that
\[
\min_{k \in S_j} |\beta^{(j)}_k| \geq C \sqrt{s_j \log(p) / n}
\]
for all $j$ and some $C>0$. For $C$ sufficiently large, with probability at least $1-c/p$ for some $c>0$, there exists $\tau > 0$ such that defining
\[
\hat{S}_j = \{k : |\hat{\beta}^{(j)}_k| \geq \tau \sqrt{s_j \log(p)/n} \},
\]
we have $\hat{S}_j=S_j$ for all $j$.
\end{cor}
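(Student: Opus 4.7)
The plan is to deduce the corollary as a direct consequence of Theorem~\ref{thm:NS}. Under the standing assumption that $\max(\Delta, \eta_j) \lesssim 1$ (as indicated by the corollary's hypothesis), the theorem yields a constant $c_1 > 0$ such that, on an event of probability at least $1-c/p$,
\[
\|\hat{\beta}^{(j)} - \beta^{(j)}\|_2 \leq c_1 \sqrt{s_j \log(p)/n} \quad \text{for all } j = 1,\ldots,p.
\]
Since $\|v\|_\infty \leq \|v\|_2$ for any vector $v$, this immediately gives the coordinate-wise bound $|\hat{\beta}^{(j)}_k - \beta^{(j)}_k| \leq c_1 \sqrt{s_j \log(p)/n}$ for all $j,k$.

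Next, I would work on this high-probability event to separate signal from noise. For $k \in S_j$, the minimum signal strength assumption combined with the reverse triangle inequality gives
\[
|\hat{\beta}^{(j)}_k| \geq |\beta^{(j)}_k| - |\hat{\beta}^{(j)}_k - \beta^{(j)}_k| \geq (C - c_1) \sqrt{s_j \log(p)/n}.
\]
For $k \notin S_j$ (with $k \neq j$), since $\beta^{(j)}_k = 0$, we have $|\hat{\beta}^{(j)}_k| \leq c_1 \sqrt{s_j \log(p)/n}$.

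The two bounds can be separated by a single threshold provided $C > 2c_1$. Choosing $C$ large enough so that this holds and setting $\tau := (C + c_1)/2$ (or any value in $(c_1, C - c_1)$) yields $\hat{S}_j = S_j$ simultaneously for all $j$ on the same high-probability event, completing the proof. There is no substantive obstacle here: the corollary is a routine thresholding argument once Theorem~\ref{thm:NS} is in hand, with the only mild point being to note that the scale-invariance of nodewise regression coefficients (the Lasso criterion for $\hat{\beta}^{(j)}$ using $\hat{\Sigma}_{\rsvp}$ targets $\beta^{(j)}$ rather than a scaled version of it) justifies using the unknown scale factor $\kappa$ implicit in $\hat{\Sigma}_{\rsvp}$ without further adjustment.
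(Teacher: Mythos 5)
Your proof is correct and is exactly the argument the paper intends: the corollary is stated without an explicit proof precisely because it follows from Theorem~\ref{thm:NS} by the standard step of passing from the $\ell_2$ bound to an $\ell_\infty$ bound and thresholding between the noise level $c_1\sqrt{s_j\log(p)/n}$ and the signal level $(C-c_1)\sqrt{s_j\log(p)/n}$. Your reading of the (evidently truncated) hypothesis as $\max(\Delta,\eta_j)\lesssim 1$ and your remark on the scale-invariance of $\beta^{(j)}$ are both consistent with how Theorem~\ref{thm:NS} is set up.
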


While edges in a CIG are typically given a causal interpretation, 
structural equation models \citep{pearl2009causality} and graphical
modelling with directed acyclic graphs \citet{lauritzen96graphical}
offer a more principled approach for causal inference.
Below we explain how the popular PC algorithm
\citep{spirtes2000causation}  may be run with our RSVP estimate as its input to allow for causal structure learning in the presence of hidden confounding.

\subsection{Causal Structure Learning}
In this section we describe how our RSVP estimator may be used for
causal structure learning concerning the unconfounded $w \sim \mathcal{N}_p(
\mu_w, \Sigma)$. If we assume a structural causal model for $w$ with an
underlying \emph{directed acyclic graph} (DAG) encoding parent--children
relationships \citep{pearl2009causality}, then the observational
distribution factorises according to this directed acyclic graph.
The interventional distributions under do-interventions can then be obtained by truncated
factorisations \citep{robins1986new,pearl2009causality} under an assumption known as autonomy \citep{haavelmo1944statistical}.

If the underlying DAG $G$ is unknown, it needs to be estimated from
data; for a general overview of causal structure learning see for
example 
\citet{heinze2018causal}. Under a faithfulness assumption \citep{Meek1995}, the set of conditional independencies in the observational distribution will be exactly those that may be inferred via $d$-separation from $G$. In general, there will be many DAGs compatible with the observational distribution in this way and these form an equivalence class which may be conveniently represented through  a \emph{completed partially directed acyclic
  graph} (CPDAG).  A CPDAG
contains both directed and undirected edges,
and essentially contains all the information relating to causal
structure that may be inferred from a given observational distribution
under the assumption of faithfulness.

Our goal here is to infer the CPDAG corresponding the distribution of
the unconfounded data. To do this we employ the PC algorithm
\citep{spirtes2000causation,kalisch2007estimating}. The population version of the PC algorithm is a procedure for determining the CPDAG $C(G)$ corresponding to a distribution $P$ faithful to a DAG $G$ given a list of conditional independencies satisfied by $P$. In our context where $P = \mathcal{N}_p( \mu_w, \Sigma)$ with $\Sigma$ positive definite, these conditional independencies may be equivalently represented by partial correlations: we have for $w \sim \mathcal{N}_p(\mu_w, \Sigma)$ that
\begin{equation} \label{eq:parcor}
w_j \independent w_k | w_S \;\; \Leftrightarrow \;\; \rho_{jk|S} = 0,
\end{equation}
where the partial correlation $\rho_{jk|S}$ satisfies
\begin{align} \label{eq:parcor_def}
\rho_{jk|S} = -\frac{\Psi_{jk}}{\sqrt{\Psi_{uu} \Psi_{vv}}},
\end{align}
and $\Psi^{-1} = \Sigma_{A, A}$ with $A = \{j\} \cup \{k\} \cup S$ \citep{harris2013pc}. Note that here we have indexed the rows and columns of $\Psi$ according to the elements of $A$.

The sample version of the PC algorithm replaces queries of conditional independence with conditional independence tests. In our case, in analogy with \eqref{eq:parcor} and \eqref{eq:parcor_def} we will consider tests that declare the conditional dependence $w_j \notindependent w_k | w_S$ if and only if
\begin{equation} \label{eq:CI_test}
\frac{|\hat{\Psi}_{jk}|}{\sqrt{\hat{\Psi}_{uu} \hat{\Psi}_{vv}}} \geq \tau,
\end{equation}
where $\hat{\Psi}^{-1} = \hat{\Sigma}_{A, A}$ where $\hat{\Sigma}$ is either $\hat{\Sigma}_{\rsvp}$ or $\hat{\Sigma}_{\srsvp}$; here $A$ is defined as above and threshold $\tau$ is a tuning parameter. If $\hat{\Sigma}_{A, A}$ is not invertible, we will simply accept the null of conditional independence.

In the case where confounding is not present, the PC algorithm requires faithfulness and a certain minimum signal strength condition for partial correlations. We will therefore assume that $\mathcal{N}_p(\mu_w, \Sigma)$ is faithful to a DAG $G$ and our target of inference will be the corresponding CPDAG $C(G) =: C$. We denote the maximum degree of $G$ by $d$. Define also the following parameter controlling minimum signal strength:
\[
\omega := \min\{|\rho_{jk|S}| :j,k \in V, \; S \subseteq V, \; |S| \leq d, \; \rho_{jk|S} \neq 0\}.
\]
It will also be convenient to introduce a particular minimum
restricted eigenvalue $\sigma_r$ of $\Sigma$ defined through $\sigma_r
:= \min_{I:|I| \leq d+2} \lambda_{\min}(\Sigma_{I,I})$. Note that we
always have $\sigma_r \geq \sigma_l$. 

The result below follows directly from the proof of Theorem~8 in \citet{harris2013pc}.
\begin{lem} \label{lem:PC}
Let $\hat{C}_\tau$ be the output of the PC algorithm using conditional independence tests given by \eqref{eq:CI_test} with threshold $\tau$. For any $A \geq 1$ we have
\begin{align*}
\pr\left(\hat{C}_\tau = C \text{ for all } \tau \in  \left[\frac{\omega}{2A}, 1-\frac{\omega}{2A}\right]\right) \geq  \pr\left(\inf_{\kappa>0}\|\kappa \hat{\Sigma} - \Sigma\|_\infty \leq \frac{\omega\sigma_r^2}{(4A + \omega + \sigma_r\omega)(d+2)} \right).
\end{align*}
\end{lem}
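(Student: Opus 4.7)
The plan is to reduce to the population PC algorithm by showing that, on the event in the right-hand probability, every conditional independence test used by the sample PC algorithm agrees with its population counterpart, and then invoke Theorem~8 of \citet{harris2013pc} essentially as a black box. The two crucial observations are (i) the test statistic in \eqref{eq:CI_test} is scale invariant in $\hat{\Sigma}$, since rescaling $\hat{\Sigma}$ by $\kappa>0$ rescales $\hat{\Psi}$ by $1/\kappa$ while leaving $|\hat{\Psi}_{jk}|/\sqrt{\hat{\Psi}_{uu}\hat{\Psi}_{vv}}$ unchanged, and (ii) the population PC algorithm (with the partial correlation characterisation \eqref{eq:parcor}--\eqref{eq:parcor_def}) only ever queries conditioning sets $S$ with $|S|\leq d$, hence only submatrices $\Sigma_{A,A}$ with $|A|\leq d+2$.

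Using (i), I would choose $\kappa^*>0$ attaining (or coming arbitrarily close to) the infimum $\inf_\kappa \|\kappa\hat\Sigma - \Sigma\|_\infty$, and replace $\hat\Sigma$ by $\kappa^*\hat\Sigma$ throughout without affecting any test. Conditioning on the event $\{\|\kappa^*\hat\Sigma - \Sigma\|_\infty\leq\epsilon\}$ with $\epsilon := \omega\sigma_r^2/\{(4A+\omega+\sigma_r\omega)(d+2)\}$, for every $A\subseteq V$ with $|A|\leq d+2$ I get $\|\kappa^*\hat\Sigma_{A,A}-\Sigma_{A,A}\|\leq |A|\epsilon\leq(d+2)\epsilon$ in operator norm. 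Since $\lambda_{\min}(\Sigma_{A,A})\geq\sigma_r$ and $(d+2)\epsilon < \sigma_r$ by construction, the perturbed submatrix is invertible and a standard Neumann-series argument yields $\|\hat\Psi\|\leq 2/\sigma_r$.

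Next, via the resolvent identity $\hat\Psi - \Psi = \hat\Psi(\Sigma_{A,A}-\kappa^*\hat\Sigma_{A,A})\Psi$, I would bound $\|\hat\Psi-\Psi\|$ and then the entries of the partial correlations $\hat\rho_{jk|S}$ and $\rho_{jk|S}$ from \eqref{eq:parcor_def}. A careful accounting of the constants---the normalisation by $\sqrt{\hat\Psi_{uu}\hat\Psi_{vv}}$ versus $\sqrt{\Psi_{uu}\Psi_{vv}}$ introduces the factors $\omega$ and $\sigma_r\omega$ in the denominator of $\epsilon$---delivers the uniform bound $\max_{j,k,S:|S|\leq d}|\hat\rho_{jk|S}-\rho_{jk|S}|\leq \omega/(2A)$. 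This is precisely the perturbation bound derived in the proof of Theorem~8 of \citet{harris2013pc}; I would cite it rather than redo the arithmetic.

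Finally, by the definition of $\omega$ every non-zero population partial correlation has magnitude at least $\omega$, while zero partial correlations remain zero. With sample error at most $\omega/(2A)\leq\omega/2$, any threshold $\tau\in[\omega/(2A),\,1-\omega/(2A)]$ classifies each queried conditional independence correctly, so the sample PC algorithm makes the same decisions as the population PC algorithm on the (faithful) distribution $\mathcal{N}_p(\mu_w,\Sigma)$ and therefore returns $C(G)=C$. The only real obstacle is the perturbation bookkeeping that turns $\|\kappa^*\hat\Sigma-\Sigma\|_\infty\leq\epsilon$ into a uniform bound on partial correlations with the exact constants appearing in the statement, and as noted this is exactly the computation carried out in \citet{harris2013pc}, whose proof applies verbatim once scale invariance of the test statistic is used to neutralise the unknown $\kappa$.
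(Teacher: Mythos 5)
Your proposal is correct and follows essentially the same route as the paper, which likewise obtains the lemma by invoking the perturbation argument in the proof of Theorem~8 of \citet{harris2013pc}; your added observation that the test statistic in \eqref{eq:CI_test} is invariant to rescaling $\hat{\Sigma}$ by $\kappa$ is exactly the (small) adaptation needed to handle the $\inf_{\kappa>0}$ in the statement.
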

Taking $\hat\Sigma$ as either $\hat{\Sigma}_{\rsvp}$ or its
sample-splitting variant $\hat{\Sigma}_{\srsvp}$, by combining
Lemma~\ref{lem:PC} with one of Theorem~\ref{thm:Sigma_rec} or
\ref{thm:Sigma_rec_bag}, we can obtain high probability guarantees on
recovering the CPDAG corresponding to the unconfounded data. As an  example, we consider the setting where the assumptions of Theorem~\ref{thm:Sigma_rec_bag} and those leading to \eqref{eq:optim_m} hold. Additionally, consider an asymptotic regime where
\begin{equation} \label{eq:PC}
\frac{\sigma_u}{\sigma_l^2} + \sqrt{\frac{q \sigma_u}{p}} + \frac{\log(p)}{\sqrt{n}} = o(\omega/d).
\end{equation}
Then using $\hat{\Sigma}_{\srsvp}$ with an optimal subsample size $m \asymp \sqrt{pq/\sigma_u}$ we have the following conclusion: there exists a sequence $a_n \to 0$ and constant $c>0$ such that $\hat{C}_\tau = C$ for all $\tau \in [a_n, 1-a_n]$ with probability at least $1 - c/p$.
We may compare this conclusion to the results obtained in \citet{kalisch2007estimating} that provide similar guarantees for the PC algorithm when confounding is not present. If we assume the final term of $\log(p)/\sqrt{n}$ on the left-hand side of \eqref{eq:PC} is the dominant one, our requirement is $\log(p)/\sqrt{n} = o(\omega / d)$ whereas the equivalent result in \citet{kalisch2007estimating} only requires $\sqrt{\log(p)/n} = o(\omega / \sqrt{d})$. In particular, we see that in our setting, the maximal degree $d$ cannot grow as quickly. This restriction is also present in the analogous result of \citet{harris2013pc} who consider applying the PC algorithm (in the absence of hidden confounding) using conditional independence tests based on partial correlations derived from rank correlations.

\section{Numerical results} \label{sec:experiments}

\subsection{Simulation experiments}
In this section we provide some numerical results for various scenarios and compare
the proposed estimator with the PC-removal estimators, as
employed in POET \citep{fan2013large}. Results for shrinkage
estimators of Ledoit--Wolf type \citep{ledoit2004well} are also be included in our comparison.

\subsubsection{Experimental setups}
We consider five different scenarios described below. For each of these, we generate $n\in \{100,200,500,1000,2000\}$
independent samples from $\mathcal{N}_p(0,\Theta)$ for a covariance
matrix $\Theta \in \mathbb{R}^{p\times p}$ that has an idiosyncratic
component and a component due to confounding $\Theta= \Sigma + \Gamma^T\Gamma$.
The number of variables is varied in $p\in\{100,200,500,1000,2000\}$.
For $q$ latent confounders, the entries of the  matrix
$\Gamma\in \mathbb{R}^{p\times q}$ are sampled independently from a
  standard normal distribution, and  column $k\in \{1,\ldots,q\}$ of
  $\Gamma$ is scaled by a factor $\nu \exp(-k)$ to have a decaying
  spectrum among the latent confounders. The strength $\nu\in \{0.01,0.1,0.5,1,5,20\}$ allows for a  variation of the overall strength of the latent confounding.
The five scenarios considered distinguish themselves by a different
structure of the idiosyncratic covariance matrix $\Sigma$ and the
number of latent confounders $q$. All diagonal entries of $\Sigma$ are
set to 1.

\begin{figure}
\begin{center}
\includegraphics[width=0.95\textwidth]{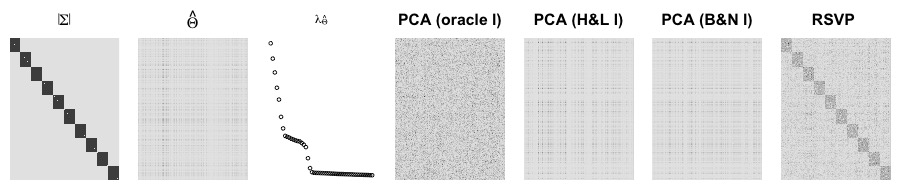}
\caption{An example of  block structure with $p=1000$ and  $n=100$ and
  strong latent confounders ($\nu=20$). The results are presented as in Figure~\ref{fig:example1}. The empirical covariance matrix $\hat{\Theta}$
 and the PC-removal estimates fail to recover the block structure.  \label{fig:example3} }
\end{center}
\end{figure}

\begin{figure}
\begin{center}
\includegraphics[width=0.95\textwidth]{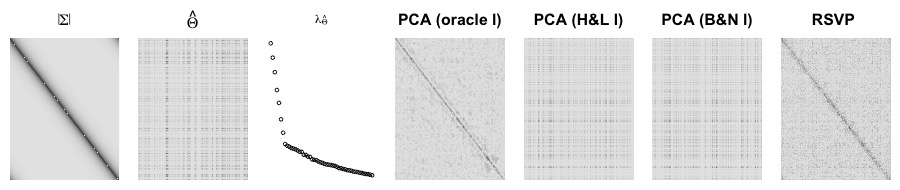}
\caption{The same setup as in Figure~\ref{fig:example3} for a Toeplitz
  stucture of the idiosyncratic covariance and $q=20$ latent confounders.  \label{fig:example4} }
\end{center}
\end{figure}

\begin{enumerate}[]
\item[] {\bf Block structure.} The $p$ variables are divided into ten
  blocks of equal size. The correlation within each block is set
  uniformly  to
  $0.95$ and $0$ outside of blocks, with unit variance for all variables. There are $q=20$ latent
  variables in this scenario.
\item[] {\bf Block structure II.} Half of the variables are divided into ten
  blocks of equal size, similarly to the previous scenario. The remaining
  variables form one large block. The within-block correlation is $0.5$
  and between-block correlation is again $0$, The correlation within each block is set to
  $0.95$ and unit variance for all variables. There are $q=20$ latent
  variables in this scenario.
\item[] {\bf Toeplitz structure.} The inverse idiosyncratic covariance marix is set
  to a unit diagonal and first off-diagonal entries equal to $-0.4999$
  (with circular extension). Variables are then scaled to have unit
  variance.  There are again $q=20$ latent  variables in this
  scenario.
\item[] {\bf Toeplitz structure II.} Identical to the previous
  Toeplitz design, except that the number of latent confounders is reduced to
  $q=3$.
\item[] {\bf Erd\H{o}s--R\'enyi.} The nonzero entries of the inverse
  idiosyncratic covariance are chosen randomly, each edge being
  selected with probability $10/p$. The diagonal of the inverse is set
  to unit values initially, and all off-diagonal entries are set to
  constant such that the sum of all non-diagonal entries in each row
  is bounded by 0.99 and the inverse matrix hence diagonal dominant
  and invertible. The variables are in a second step again scaled to
  have unit diagonal entries in the idiosyncratic covariance $\Sigma$.
\end{enumerate}

Varying the structure, number of samples $n$, dimension $p$, and
strength $\nu$ of the latent confounders, we run 200 simulations
of each unique parameter configuration and compute the following:
\begin{enumerate}[(i)]
\item The estimated covariance matrix
$\hat{\Sigma}_\pca(\ell)$, where the number $\ell$ is chosen first as
$\ell=0$, leading to the empirical covariance matrix. This first
estimator is also the basis for comparisons with Ledoit--Wolf type
shrinkage \citep{ledoit2004well}\footnote{The results for a Ledoit--Wolf covariance estimator with the identity matrix as the shrinkage target are identical to those for PC-removal with $\ell=0$ (i.e.\ the empirical covariance matrix) as the objective we measure will be unchanged by the shrinkage.}. Next we use the oracle value $\ell=q$ (which is of course unavailable in practice)
and then, as suggested in \citet{fan2013large}, the values of the
two estimators of $q$ that are based on the respective first information criteria in~\citet{bai2002determining}
and~\citet{hallin2007determining}. We henceforth refer to these as B\&N and H\&L respectively.
\item The sample-splitting RSVP estimator $\hat{\Sigma}_\srsvp$ for subsample size $m\in
  \{20,50,70\}$.
\end{enumerate}
Other possible approaches such as the sparse--dense decomposition approach of
\citet{chandrasekaran2012latent} are unfortunately computationally
infeasible for these settings.

\begin{figure}
\begin{center}
\includegraphics[width=0.95\textwidth]{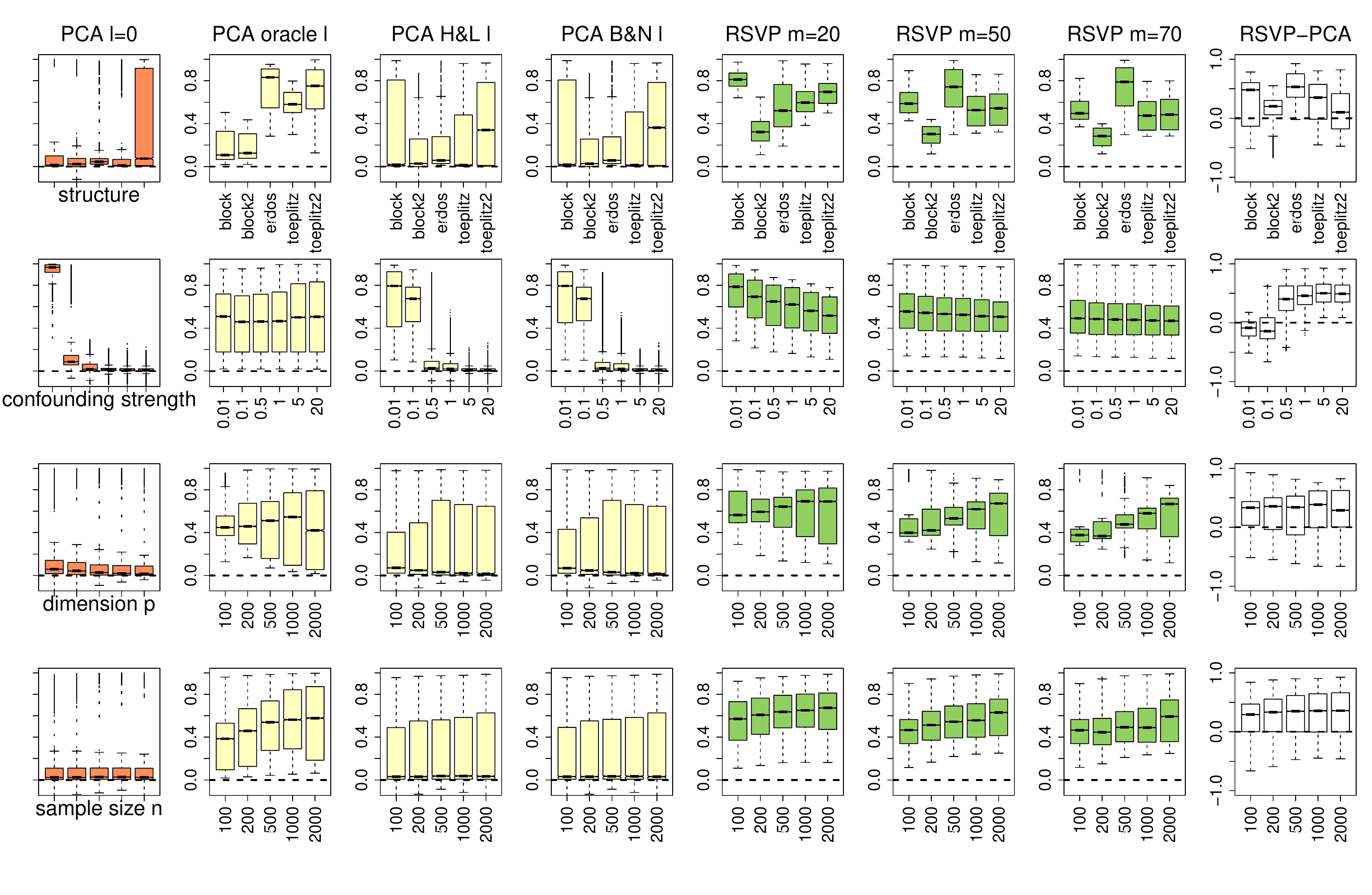}
\caption{  Boxplots of the correlation $\rho_{\Sigma,\hat{\Sigma}}$
  for various methods (columns) and stratified according to (from top to
  bottom row): design matrix structure, strength $\nu$ of the latent
  variables, dimension $p$ and sample size $n$. The methods are as
  in Figure~\ref{fig:example1} but here also include larger
  number $m$ of samples in each subsample for the RSVP estimator. The last column is a
  paired comparison: the difference between the RSVP estimator with
  $m=70$ and the PC-removal estimator with a H\&L choice of the number
  $\ell$ of components to remove. The relative advantage of RSVP
  grows with stronger latent confounding and larger sample size.\label{fig:all} }
\end{center}
\end{figure}

\begin{figure}
\begin{center}
\includegraphics[width=0.95\textwidth]{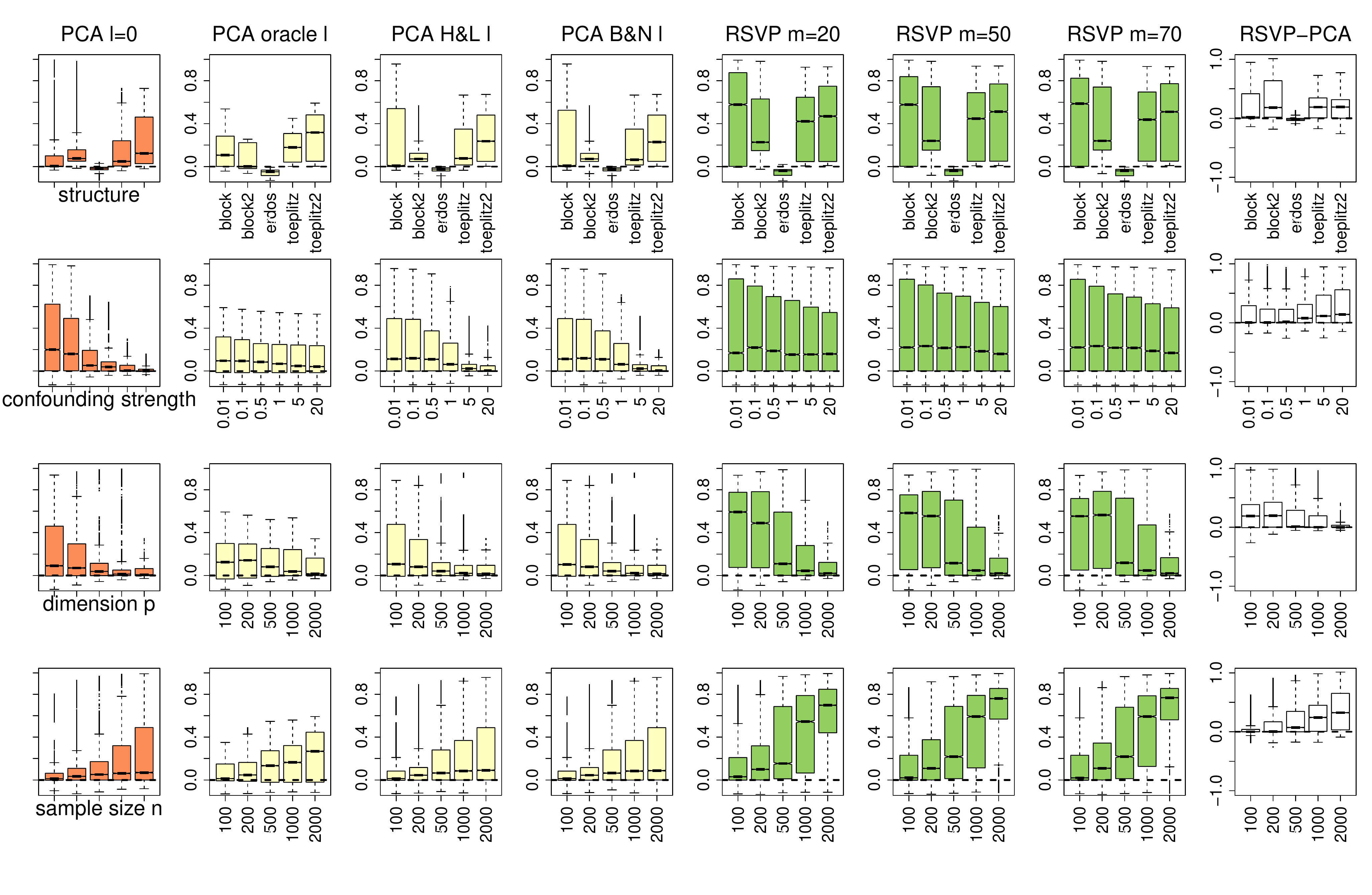}
\caption{  Analogous results to those in Figure~\ref{fig:all} for
  inverse covariance matrix estimation. As before, the relative advantage of RSVP
  grows with stronger latent confounding and larger sample size  \label{fig:allinv} }
\end{center}
\end{figure}

We would like to compare for each estimate its accuracy with respect
to the true idiosyncratic covariance in a suitable
norm, which we chose here for simplicity as the Frobenius norm. To be
invariant with respect to scaling, we may consider
 \[ \inf_{\kappa > 0}  \| \Sigma -\kappa \hat{\Sigma}\|_F ,\]
which is monotonically decreasing with the empirical correlation
$\rho_{\Sigma,\hat{\Sigma}}$
between the vectorized matrices $\Sigma$ and $\hat{\Sigma}$; we
will use $\rho_{\Sigma,\hat{\Sigma}}$ as a criterion for simplicity, and also omit the diagonals from $\Sigma$ and $\hat{\Sigma}$ in the computation.
For \emph{inverse} covariance matrix estimation, we invert the
estimators above using the approach of \citet{meinshausen04consistent} as implemented in the R package \texttt{glasso} \citep{Friedman2018}. The
penalty parameter is set to a very small uniform value of
$\lambda=10^{-6}$ for computational speed and easier comparison
between methods.
Cross-validation of the penalty is also not straightforward to implement here as we do not have access to clean data that would be
free of the influence of the latent confounders.

\subsubsection{Results}
A summary of results from each of the $750=5 \times 5 \times 6 \times 5$ unique
parameter setting is shown in Figure~\ref{fig:all}. The  RSVP estimator with low
number $m=20$ of samples in each subsample in general dominates the other estimators (in terms
of having higher mean correlation and higher quartiles), no matter
whether we stratify according to design matrix structure, strength of
latent confounders, sample size or dimension of the graph. The only exception
seems to be the case of $\nu=0.01$, where the latent confounders are
effectively absent. Here the empirical covariance improves the RSVP
estimator, as expected.

Comparing the various PC-removal approaches, it is noteworthy that
for an increasing strength of the latent confounding, the oracle (true)
value of $q$ performs much better than using any of the suggested
empirical estimates of $q$. In contrast, for weak confounding, removing
all $q$ latent confounders performs worse in general due to the decaying
spectrum of the latent confounding: too much of the
idiosyncratic covariance is removed by the oracle estimate in these
cases. RSVP tends to perform at least as good as the optimal approach
among the three PC-removal approaches across all strengths of the
latent confounding, even though in practice the oracle choice of $q$ for
PC-removal is
clearly not even available.

\begin{figure}
\begin{center}
\includegraphics[width=0.95\textwidth]{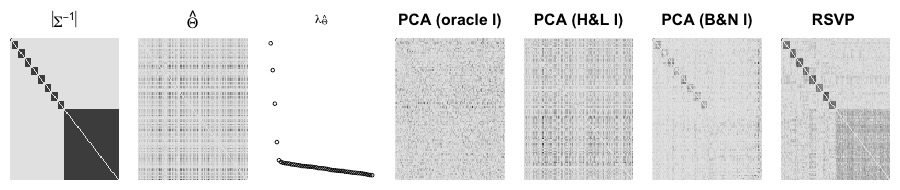}
\caption{An example of  block structure with $p=100$ and  $n=500$ and
  medium strong latent confounders ($\nu=20$). The results presented are analogous to those in
  Figure~\ref{fig:example1}, but here we are interested in inverse
  covariance estimation (via nodewise regression on the estimated
  idiosyncratic covariances). The leftmost panel shows the absolute
  values of  $\Sigma^{-1}$, while the four panels on the right show the
  absolute values of the inverted estimated $\hat{\Sigma}$, using the
  same methods as in Figure~\ref{fig:example1}.  \label{fig:example5} }
\end{center}
\end{figure}

Analogous results for inverse covariance matrix estimation are
shown in Figure~\ref{fig:allinv}, with a single example outcome in Figure~\ref{fig:example5}. The differences between the
RSVP with different number of samples in each subsample are smaller, arguably because the
error introduced by matrix inversion dominates the relatively small
differences. While estimating the covariance of a random
Erd\H{o}s--R\'enyi graph seems easy for the covariance, it becomes relatively
hard for the inverse covariance matrix. Finally, while a dimension of $p=2000$  still
yields very good results in Frobenius norm for covariance estimation,
it seems to become very challenging for inverse covariance
estimation.

\begin{figure}
\begin{center}
\includegraphics[width=0.98\textwidth]{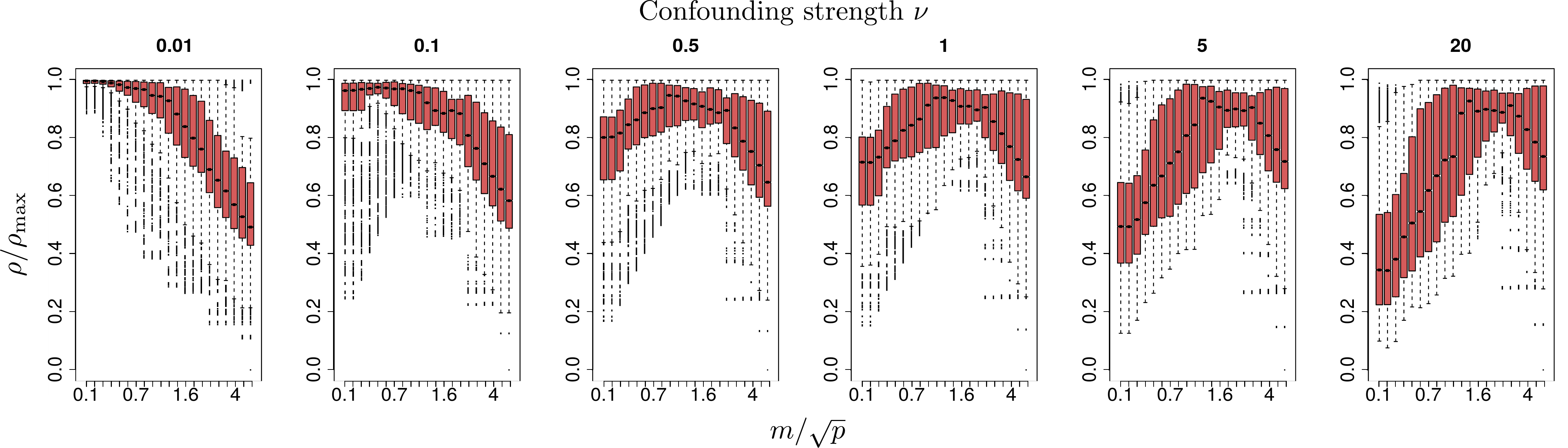}
\caption{
 Performance 
 as a function of sample size $m$ used for the sample-splitting
 version of 
  RSVP. For each scenario, we divide $\rho_{\Sigma,\hat{\Sigma}}$ by
  the maximal value across all subsample sizes $m$. The figure shows the
  boxplots for different strengths of latent confounding as a
  function of $m/\sqrt{p}$. For weak latent confounding (leftmost plot)
  taking $m=1$ is optimal as then RSVP is equivalent to the empirical
  covariance matrix on row-normalised data. For stronger latent confounding, a value $m=c
  \sqrt{p}$ with  $c \approx 2$ performs well across a wide range   of scenarios. \label{fig:optm} }
\end{center}
\end{figure}

The relative performance of the  sample-splitting version of RSVP as a
function of number of samples $m$ in each subsample is shown in Figure~\ref{fig:optm}.
For very weak latent confounding, taking very small values of $m$
performs optimally as the sampling-splitting RSVP estimator then converges to the
empirical covariance matrix.  While the scaling of the optimal $m$ as
proportional to $\sqrt{pq}/\sigma_u$ emerges from the theory,
In our examples the choice $m=2\sqrt{p}$ seems to be a good
rule-of-thumb choice for the  size of the subsamples.

\subsubsection{Model violations}

\begin{figure}
\begin{center}
\includegraphics[width=0.95\textwidth]{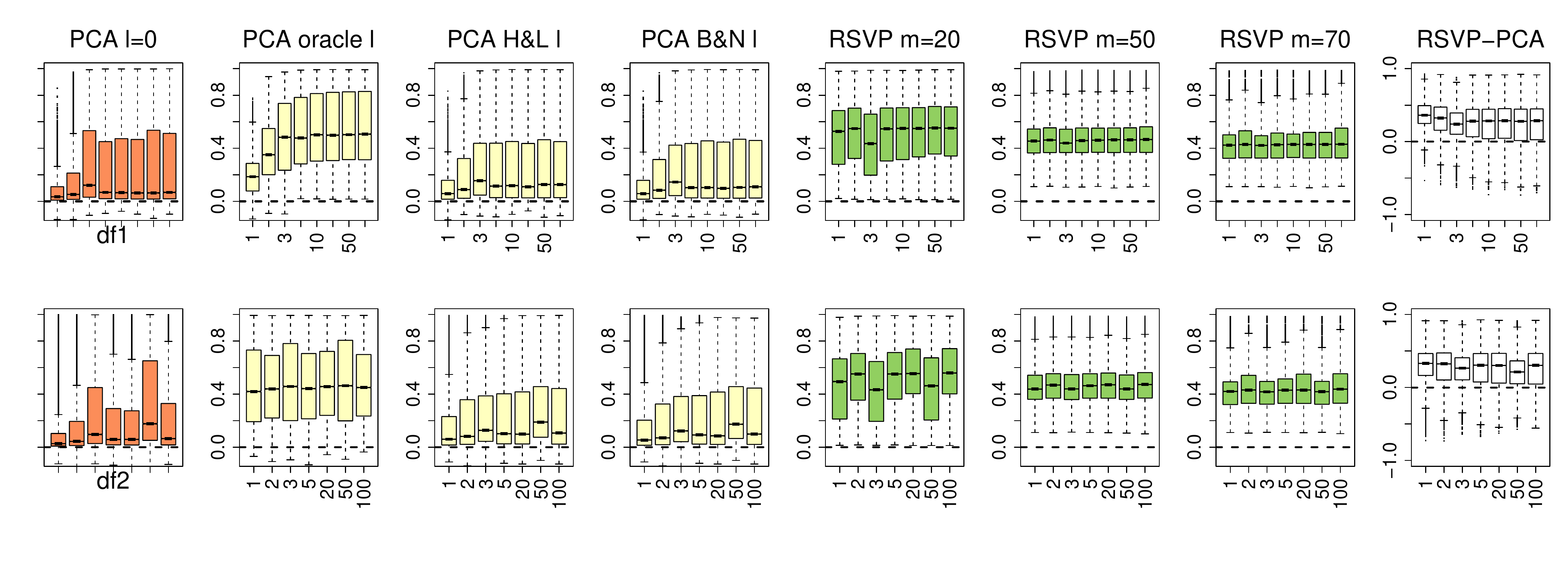}
\caption{ The performance of PC-removal methods and RSVP when
  replacing multivariate normal distributions for $w$ and $h$, and
  generation mechanism for the entries of $\Gamma$ by multivariate
  $t$-distributions with $\mathrm{df}_1$ and $\mathrm{df}_2$ degrees
  of freedom respectively. While the performance of PC-removal approaches deteriorates considerably for small degrees of freedom in the idiosyncratic noise distributions, the performance of RSVP is largely unaffected by these heavy-tailed distributions. \label{fig:t-distr} }
\end{center}
\end{figure}

To investigate robustness against model violations for covariance
estimation, we replace the normal distribution for the idiosyncratic
noise for $X$ and $H$ by multivariate $t$-distributions with
$\mathrm{df}_1$ degrees of freedom, where
$\mathrm{df}_1\in\{1,2,3,5,10,20,50,100\}$. We also generate the
loading matrix $\Gamma$ using a multivariate $t$-distributions with
$\mathrm{df}_2$ degrees of freedom and vary this parameter among the
same set of values as those used for $\mathrm{df}_1$. Analogously to
Figure~\ref{fig:all}, Figure~\ref{fig:t-distr} shows the performance
for covariance estimation marginally as a function of both
$\mathrm{df}_1$ and $\mathrm{df}_2$, where the remaining parameters
(graph structure, dimension, sample size  and strength of confounding)
are averaged out. The cases $\mathrm{df}_1=1$ and
  $\mathrm{df}_2=1$ correspond to Cauchy distributions
  respectively. We comment here that $\Sigma$ does not correspond to a
  covariance if $\mathrm{df}_1 \leq 2$. Nevertheless, $\Sigma$ can
  still be identifiable from the distribution of $w$.

\begin{figure}
\begin{center}
\includegraphics[width=0.95\textwidth]{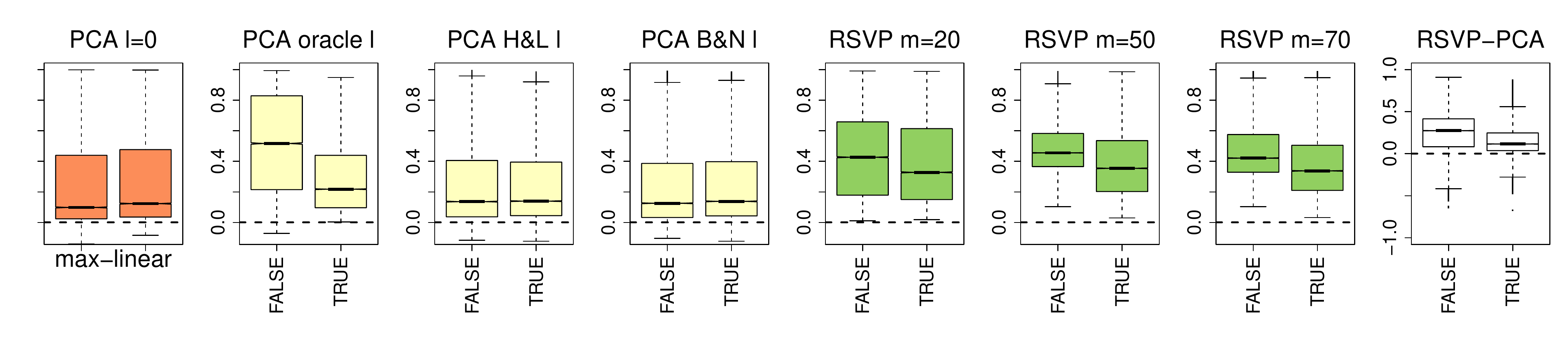}
\caption{ The performance of PC-removal methods and RSVP when the linear structural equation for $x$ is replaced with a max-linear model \eqref{eq:SEM2}.  The average performance for the linear model is shown as a boxplot for max-linear equal to `false', while the max-linear case corresponds to the boxplot with max-linear equal to `true'. We see the advantage of RSVP over PC-removal methods deteriorates under the max-linear model.\label{fig:max-linear} }
\end{center}
\end{figure}
As an additional test of robustness, we consider, in a second set of
experiments, replacing the linear structural equation $x = w + \Gamma
h$ \eqref{eq:SEM1} with a max-linear model \citep{gissibl2018}
\begin{equation} \label{eq:SEM2}
x _j = \max(w_j, (\Gamma h)_j);
\end{equation}
our goal is as before to recover $\Sigma = \Cov(w)$. We present in Figure~\ref{fig:max-linear} the results averaged over all other parameters of our simulation setup (graph structure, dimension, sample size, strength of confounders, $\mathrm{df}_1$, and  $\mathrm{df}_2$). The performances of both oracle PC-removal and RSVP suffer in the max-linear case and drop to similar levels to the data-driven PC-removal methods. However, even in this case, RSVP outperforms data-driven PC-removal approaches; in the case of the H\&L choice of the number of components, RSVP gives better results in more than three quarters of all simulation settings, as can be seen in the rightmost panel of Figure~\ref{fig:max-linear}.

\subsection{GTEX data analysis} \label{sec:GTEX}

In this section we illustrate the key properties of RSVP on a collection
of gene expression datasets made publicly available by the GTEX consortium \citep{Aguet2017}. Such datasets are particularly prone to the type of confounding studied in this paper \citep{Leek2007, Stegle2012, Speed2013}.
Our aim is to determine which genes are biologically related in that they regulate each other.
To validate our results, we use the gene ontology database \citep{Ashburner2000}.

The GTEX consortium conducted a large-scale RNA-seq experiment which resulted in the
the collection
of gene expression data from hundreds of donors in more than 50 human tissues.
In order to carry out their analyses, they estimated
confounders by leveraging external information such as gender and genetic relatedness between donors, and by inferring some
confounders from the data itself using probabilistic estimation of expression residuals (PEER) \citep{Stegle2012}.
Both the confounders and the fully processed, normalised and filtered gene expression data are available on
the website of the consortium\footnote{\url{https://gtexportal.org/home/datasets}. In addition, code to compute RSVP and subsampling versions, and also to reproduce all the results described in this section, is available at \url{https://github.com/benjaminfrot/RSVP}.}.

For each tissue $\mathcal{T}$, where $\mathcal{T}$ is for example whole blood, lung or thyroid,
there is available a data matrix $X_{\mathcal{T}}$ of gene expression levels with dimensions $n_\mathcal{T} \times p_\mathcal{T}$ along with an $n_\mathcal{T} \times q_\mathcal{T}$ matrix of confounders.
We removed tissues for which $n_{\mathcal{T}} \leq 100$; the 44 remaining tissues had a ratio $n_\mathcal{T}/p_\mathcal{T}$ ranging between $0.006$
and $0.03$ and values of $p_\mathcal{T}$ ranging between $14,337$
and $17,855$\footnote{The list of tissues as well the number of samples and variables for each of them can be found in the supplementary materials.}. In line with the analysis methods of the GTEX consortium,
we used all the PEER factors at our disposal\footnote{According to the \emph{Analysis Methods} section of the consortium's website ``the number of PEER factors was determined as function of sample size ($N$): 15 factors for $N<150$, 30 factors for $150\leq N<250$, 45 factors for $250\leq N<350$, and 60 factors for $N\geq350$ (...).''}, resulting in a total number of $q_{\mathcal{T}}$ confounders for each tissue equal to the number of 
PEER factors for that tissue plus five 
confounders derived from external sources (\emph{e.g.} donors' genotypes, gender, etc\dots).
Because these covariates and factors are deemed the most relevant by the GTEX consortium, we refer to a dataset $X_{\mathcal{T}}$ from which all $q_{\mathcal{T}}$ confounders have been 
removed as ``unconfounded''. However, it is possible that there is still unobserved confounding in the datasets.

For each tissue, we create a sequence of datasets by regressing out $0, 1, 2, \ldots, q_{\mathcal{T}}$ confounders.
On each of these datasets, we run RSVP, PC-removal with different values $\ell$ of components removed. We also run the neighbourhood selection with the square-root Lasso \cite{belloni2011} on both the sample covariance matrix of 
the raw dataset  (NS) and on the covariance matrix estimated by RSVP (RSVP + NS) .
Two commonly used proxies for pairs of genes being co-regulated are large off-diagonal entries in the covariance or non-zero entries in the inverse covariance matrix. We therefore form for each estimated covariance matrix, a sequence of estimated co-regulation networks containing edges corresponding to the largest $r$ entries, with $r$ ranging from $1$--$100$. In the case of NS and RSVP + NS, we vary the tuning parameter of the square-root Lasso until we obtain a graph with approximately 100 edges and then form a sequence of 100 networks corresponding to the largest $r$ entries in the estimated inverse covariance matrices, with $1 \leq r \leq 100$.

We first sought to quantify how sensitive the graphs returned by the various methods are to the addition of confounding.
To that end, for each (tissue, method, $r$) triple, we computed the Jaccard similarity between the edge set of a graph
estimated on the unconfounded data and the graph with $r$ edges estimated on the dataset with $k \in \{0, 1, 5, 10, 30\}$ confounders removed.
Figure \ref{fig:jacc_sim}
shows the resulting
Jaccard similarities averaged across
the 44 tissues.
Unsurprisingly the more confounders are removed, the more similar the estimated graphs are to that obtained on the unconfounded data ($k=q_{\mathcal{T}}$).
However, this change for RSVP is only very slight and the method yields large similarities across different numbers of edges and $k$.
This is an encouraging result, particularly given that a number of the confounders, such as gender and genotype data,  were derived entirely from \emph{external} data.
In contrast, the performances of PC-removal and NS are strongly influenced by the presence of
the confounders, with the Jaccard similarity between raw and unconfounded data close to zero.

\begin{figure}
\begin{center}
\includegraphics[width=0.95\textwidth]{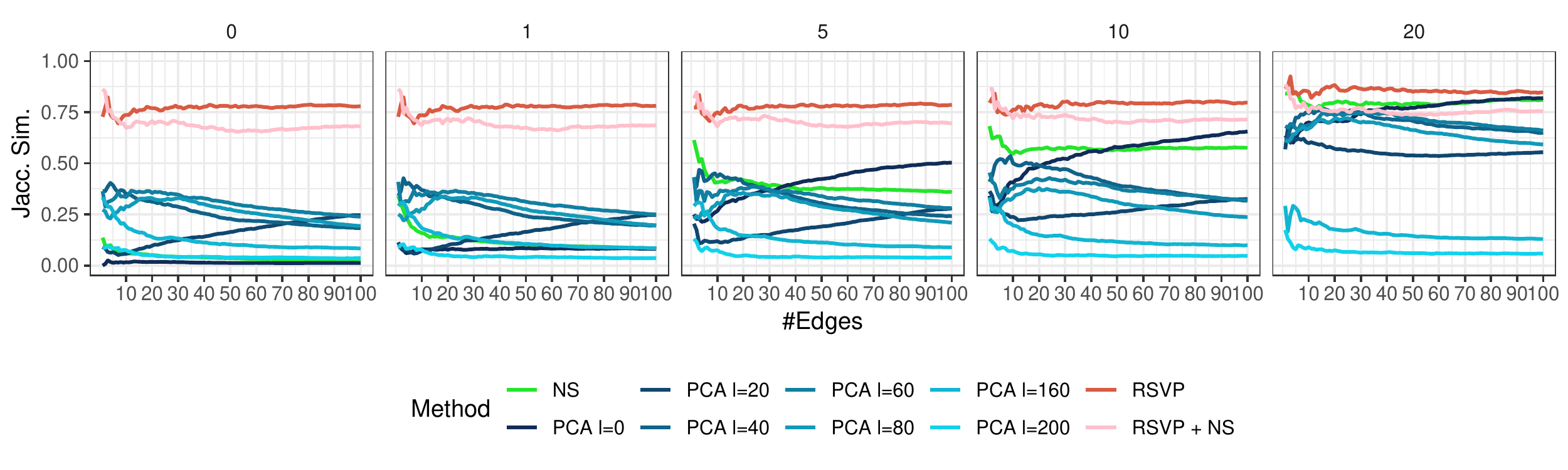}
\caption{
 Average Jaccard similarity between the edge sets of graphs estimated
 on the unconfounded data (with all confounders regressed out) and
 data from which $k$ confounders are removed, for $k = 0, 5, 10, 30$;
 similarities are averaged over 44 tissues. The RSVP estimate is here
 seen to be the most stable with respect to removal of confounders.
The RSVP
  estimator shows highest Jaccard similarity across all graph sizes
  when zero or just a few confounders are regressed out.
  \label{fig:jacc_sim}
  }
\end{center}
\end{figure}

Consistently returning the same set of edges irrespective
of confounding does not imply anything about the
quality of the estimates. To get a sense of their accuracy, we scored the graphs using a reference dataset: the gene ontology \citep{Ashburner2000}.
Briefly, the gene ontology (GO) is a popular database which allows the annotation of each
gene by a set of \emph{terms} classified in three categories:
cellular components, molecular function and biological process.
Genes that tend to perform similar functions or to interact
are expected to be annotated by similar terms.
By mapping each node of each graph to its GO terms,
one can compute a so-called enrichment statistic \citep{FrotEtAl18} reflecting whether
the graph contains edges between related genes more often than
would be expected in a random graph with a similar topology (such a
graph has an expected statistic of 1).
The top plot in Figure~\ref{fig:gtex_enrich_scores} shows
the enrichment scores obtained in
the raw dataset (no confounders regressed out), averaged across all tissues. The bottom plot
gives the average score
as a function of the number of confounders regressed out. In the supplementary materials, the scores for each of the 44 tissues is plotted.
Several comments are in order. RSVP performs well across the datasets, and is the best performer on average when applied to the unconfounded data. 
Interestingly, as shown in the supplementary materials, there is at least one selection of $\ell$ for each tissue where PC-removal performs comparably to RSVP, but the optimal value of $\ell$ changes from tissue to tissue. This would suggest a data-based selection for $\ell$; however the selection criteria of \citet{bai2002determining}
and~\citet{hallin2007determining} both yield $\ell=0$ on every tissue.
The performance of the neighbourhood selection (NS)
steadily increases as more and more
confounders are regressed out, until it outperforms RSVP.
This tends to confirm that the raw data does indeed contain latent confounders masking true biological signal. 
Moreover, the fact that methods forming networks based on the estimated inverse covariances (NS and RSVP + NS) perform best on the unconfounded datasets tends to confirm that
it is indeed the precision matrices which contain relevant signal when it comes to co-regulation networks. 

The computational cost of performing NS, is far greater than RSVP or the PC-removal approaches. We also note that the latter methods may be further sped up by using large inner product search algorithms. For example, the xyz algorithm of \citet{Thanei2018} is able to locate the large entries in the matrix product $VV^T$ that forms RSVP at a fraction of the cost of performing the full matrix multiplication. On these GTEX datasets, it delivers similar performance to regular RSVP but cuts the computational cost by a factor of around $2000$.

\begin{figure}
\begin{center}
\begin{subfigure}{\textwidth}
\includegraphics[width=0.95\textwidth]{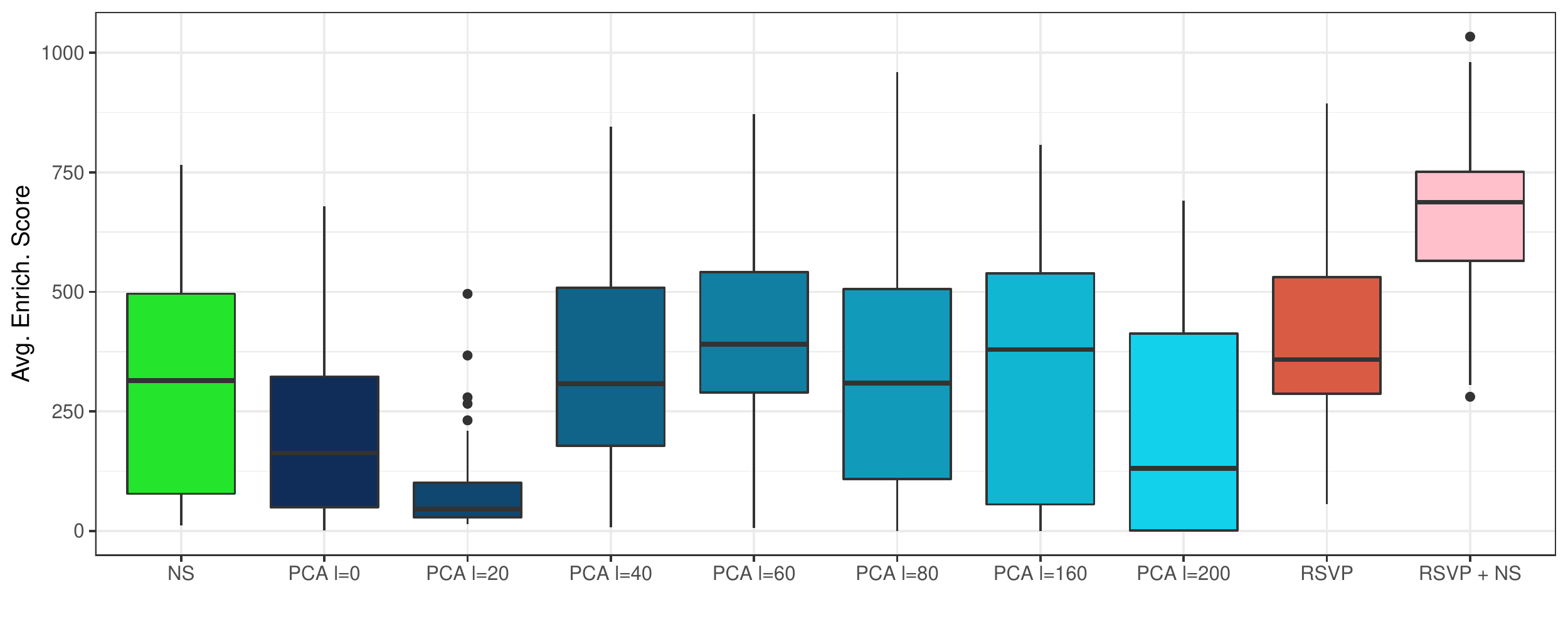}
\end{subfigure}
\\
\vspace{0.5cm}
\begin{subfigure}{\textwidth}
\includegraphics[width=0.95\textwidth]{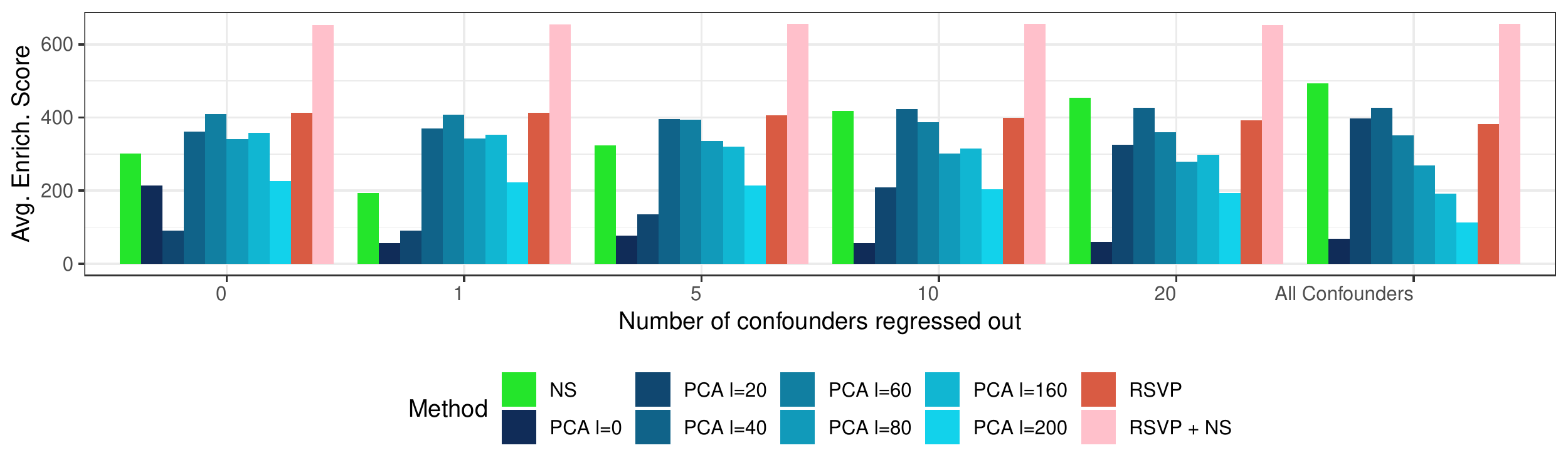}
\end{subfigure}
\end{center}
\caption{Top: Area under the curve (AUC) of the graph of enrichment score as a function of the number of edges, based on the raw data. We plot the distribution of the scores across the 44 tissues.
Bottom: The average of the AUCs across tissues, but for data with varying numbers of confounders regressed out. The individual results for each tissue are presented in Section~\ref{sec:GTEX_indiv} of the supplementary material.
\label{fig:gtex_enrich_scores}}
\end{figure}

\section{Discussion} \label{sec:discuss}
In this work, we have introduced RSVP as a simple and fast method for
estimating the idiosyncratic covariance $\Sigma$ given data where
latent factors are present. A notable aspect of the method is that all
information about $\Sigma$ contained in the spectrum of the empirical covariance matrix is thrown away. Estimation of $\Sigma$, which is permitted to have a diverging condition number, is performed using a scaled multiple of a projection matrix whose eigenvalues are necessarily in $\{0,1\}$. It may seem surprising at first sight that this should work at all, and the success of the method underlines the message that has emerged on the vast theory surrounding high-dimensional PCA and covariance estimation, saying that the eigenvalues of the empirical covariance matrix $\hat{\Theta}$ are extremely noisy. By removing the variance due to these noisy eigenvalues, RSVP is able to cope well even in settings that are particularly challenging for PC-removal approaches where the eigenvalues of the combined covariance $\Theta$ are not well-separated into two groups. A drawback of RSVP is that the scale of $\Sigma$ is lost, but this is of little consequence in a number of applications of interest, and has the advantage of allowing the method to be robust to certain heavy-tailed data, for example.


Our work leaves open a number of questions. For example, it would be interesting to explore whether there are other estimators of the form \eqref{eq:Sigma_H} that depend on the spectrum of $\hat{\Theta}$ such that the scale of $\Sigma$ is not lost, but in a sufficiently smooth way as to not have high variance even in the challenging scenarios mentioned above. Another interesting problem is that of controlling for latent confounding when the influence of the confounding is not linear, such as the max-linear settings \citep{gissibl2018} for example.

\end{cbunit}

\newpage
\setcounter{page}{1}

\begin{cbunit}
\appendix
\section*{Supplementary material}
This supplementary material contains the proofs of results presented in the main text. The proofs of Proposition~\ref{prop:pop}, Theorems~\ref{thm:conc_Pi}, \ref{thm:exp_Pi}, \ref{thm:Sigma_rec}, \ref{thm:Sigma_rec_bag} and \ref{thm:NS}, and derivations of \eqref{eq:rho_bd} and \eqref{eq:exact1} all rely on some basic results stated in Section~\ref{sec:basic}. In addition to the notation laid out in Section~\ref{sec:notation} of the main paper, here we will additionally use $\eqdist$ to denote equality in distribution, and for positive semidefinite matrices $A, B \in \R^{d \times d}$, $A \succeq B$ will mean that $A-B$ is positive semidefinite.
\section{Proof of Proposition~\ref{prop:pop}}
The proof of Proposition~\ref{prop:pop} relies heavily on the so-called Davis--Kahan $\sin(\theta)$ theorem \citep{davis1970rotation}. The following version of the result will be most useful for our purposes.
\begin{thm}[Davis--Kahan $\sin(\theta)$ theorem]
Let $M = R_0 M_0 R_0^T + R_1 M_1 R_1^T$ and $M + E= S_0 \Lambda_0 S_0^T + S_1^T \Lambda_1 S_1^T$ be real symmetric matrices with $(R_0, R_1)$ and $(S_0, S_1)$ orthogonal matrices where $R_0$ and $S_0$ have matching dimensions. If the eigenvalues of $M_0$ are
contained in an interval $(a, b)$, and the eigenvalues of $\Lambda_1$ are excluded from the interval $(a-\delta, b + \delta)$ for some $\delta  > 0$, then
\[
\|S_1^T R_0\| \leq \|S_1^T E R_0\|/\delta.
\]
\end{thm}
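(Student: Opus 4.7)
The plan has two main steps: first, transform the conclusion into a Sylvester-type equation using the block eigenstructure; second, invert this equation by exploiting the spectral gap.

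For the first step, I would exploit the orthogonality of $(R_0, R_1)$ to obtain $MR_0 = R_0 M_0$ (using $R_0^T R_0 = I$ and $R_1^T R_0 = 0$), and the orthogonality of $(S_0, S_1)$ to obtain $S_1^T(M+E) = \Lambda_1 S_1^T$ (using $S_1^T S_0 = 0$ and $S_1^T S_1 = I$). Sandwiching $E = (M+E) - M$ between $S_1^T$ and $R_0$ then gives
\[
S_1^T E R_0 = \Lambda_1 (S_1^T R_0) - (S_1^T R_0) M_0.
\]
Writing $X := S_1^T R_0$ and $Z := S_1^T E R_0$ produces the Sylvester-type equation $\Lambda_1 X - X M_0 = Z$, and our goal reduces to $\|X\| \le \|Z\|/\delta$.

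For the second step, I would diagonalize the symmetric matrices $\Lambda_1 = U_1 D_1 U_1^T$ and $M_0 = U_0 D_0 U_0^T$. Since operator norms are invariant under orthogonal conjugation, it suffices to bound $\|\tilde X\| \le \|\tilde Z\|/\delta$ where $\tilde X := U_1^T X U_0$ and $\tilde Z := U_1^T Z U_0$ satisfy the diagonal relation $D_1 \tilde X - \tilde X D_0 = \tilde Z$. Entrywise this reads $\tilde X_{ij} = \tilde Z_{ij}/(d_{1,i} - d_{0,j})$, and the spectral gap assumption gives $|d_{1,i} - d_{0,j}| \ge \delta$ for all $i,j$.

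The main obstacle is passing from this componentwise inversion to an operator-norm bound, since the Hadamard multipliers $1/(d_{1,i} - d_{0,j})$ control entries but not directly the spectral norm of a matrix product. My plan is to use an integral representation: when $d_{1,i} > d_{0,j}$ (the case where spec$(\Lambda_1)$ lies above $b+\delta$), write $1/(d_{1,i} - d_{0,j}) = \int_0^\infty e^{-t(d_{1,i} - d_{0,j})}\,dt$, which recasts the componentwise identity as the matrix integral
\[
\tilde X = \int_0^\infty e^{-tD_1}\,\tilde Z\,e^{tD_0}\,dt.
\]
The integrand has operator norm at most $\|e^{-tD_1}\|\,\|\tilde Z\|\,\|e^{tD_0}\| \le e^{-t(b+\delta)}\,\|\tilde Z\|\,e^{tb} = e^{-t\delta}\|\tilde Z\|$, so integrating yields $\|\tilde X\| \le \|\tilde Z\|/\delta$. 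The symmetric case where spec$(\Lambda_1)$ lies below $a-\delta$ is handled by the analogous identity $1/(d_{0,j} - d_{1,i}) = \int_0^\infty e^{-t(d_{0,j} - d_{1,i})}\,dt$. For the general situation where spec$(\Lambda_1)$ straddles spec$(M_0)$ with parts above $b+\delta$ and below $a-\delta$, I would split $\Lambda_1$ via its two spectral projections, apply the appropriate integral bound on each piece of the restricted Sylvester equation, and reassemble using the orthogonal decomposition of the $\Lambda_1$-eigenspaces.
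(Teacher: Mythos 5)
The paper does not actually prove this statement: it is quoted as a classical result and attributed to \citet{davis1970rotation}, so there is no internal proof to compare against. Judged on its own terms, your reduction is the standard and correct one: the identities $MR_0 = R_0 M_0$ and $S_1^T(M+E) = \Lambda_1 S_1^T$ do follow from the block orthogonal decompositions, the Sylvester equation $\Lambda_1 X - XM_0 = S_1^T E R_0$ is exactly right, and the integral representation $\tilde X = \int_0^\infty e^{-tD_1}\tilde Z e^{tD_0}\,dt$ together with the bound $\|e^{-tD_1}\|\,\|e^{tD_0}\| \leq e^{-t\delta}$ correctly yields $\|\tilde X\| \leq \|\tilde Z\|/\delta$ in each one-sided case (all of $\mathrm{sp}(\Lambda_1)$ above $b+\delta$, or all of it below $a-\delta$).

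The gap is in the straddling case. Splitting $\Lambda_1$ into the parts above $b+\delta$ and below $a-\delta$ gives row blocks $\tilde X_{\pm}$ with $\|\tilde X_{\pm}\| \leq \|\tilde Z_{\pm}\|/\delta$, but reassembling via $\|\tilde X\|^2 \leq \|\tilde X_+\|^2 + \|\tilde X_-\|^2 \leq (\|\tilde Z_+\|^2 + \|\tilde Z_-\|^2)/\delta^2 \leq 2\|\tilde Z\|^2/\delta^2$ only yields the constant $\sqrt{2}/\delta$, not the claimed $1/\delta$; the two blocks have genuinely different integral representations (with opposite signs and opposite exponential decay), so they cannot be merged into a single integrand of norm $e^{-t\delta}\|\tilde Z\|$. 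The standard way to get the sharp constant in this two-sided situation is to recenter at $c = (a+b)/2$ and set $\rho = (b-a)/2 + \delta$: then $\|(\Lambda_1 - cI)^{-1}\| \leq 1/\rho$ and $\|M_0 - cI\| \leq \rho - \delta$, and the convergent series solution $X = \sum_{n \geq 0}(\Lambda_1 - cI)^{-(n+1)} Z\, (M_0 - cI)^n$ of the (recentered, hence unchanged) Sylvester equation gives $\|X\| \leq \|Z\| \sum_{n\geq 0} (\rho-\delta)^n/\rho^{n+1} = \|Z\|/\delta$. For the use made of the theorem in Proposition~\ref{prop:pop} the lost factor of $\sqrt{2}$ would be immaterial (all bounds there are up to constants, and the application is in fact one-sided), but as a proof of the statement exactly as claimed, the reassembly step needs to be replaced by this disc-separation argument.
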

We apply this result with $M=R_0M_0R_0^T = \Gamma \Gamma^T$ and $E = \Sigma$.
Let $Q \in \R^{p \times q}$ be the matrix of left singular vectors of $\Gamma$, and let $\Gamma = QA$ where $A \in \R^{q \times q}$.
Also let $P_F \in \R^{p \times q}$ and $P_L \in \R^{p \times (p-q)}$ be the matrices of first $q$ and last $p-q$ eigenvectors of $\Theta$. Also let $D_F^2 \in \R^{q \times q}$ and $D_L^2 \in \R^{(p-q)\times (p-q)}$ be the top left and bottom right submatrices of $D^2$ respectively.
The Davis--Kahan theorem in conjunction with Proposition~\ref{prop:basic_eval} then tells us that
\begin{equation} \label{eq:DK}
\|P_L^T Q\| \leq \|P_L^T \Sigma Q\| / (\gamma_l - \sigma_u) \lesssim \rho_1 /\gamma_l.
\end{equation}
Now $Q^T P_L P_L^T Q = Q^T(I-P_F P_F^T)Q = I - Q^T P_F P_F^T Q$ so
\begin{equation} \label{eq:l_max_1}
\lambda_{\max}(Q^T P_L P_L^T Q) = 1 - \lambda_{\min}(Q^T P_F P_F^T Q).
\end{equation}
Also as $P_F^T(I -QQ^T)P_F = I- P_F^T QQ^TP_F$, we have
\begin{align}
\lambda_{\max}(P_F^T(I- QQ^T)P_F) &= 1 - \lambda_{\min}(P_F^T QQ^TP_F) \notag\\
&= 1 - \lambda_{\min}(Q^TP_F P_F^T Q) \notag\\
&= \lambda_{\max}(Q^T P_L P_L^T Q) \label{eq:l_max_2}
\end{align}
from \eqref{eq:l_max_1}. Thus
\begin{equation} \label{eq:l_max_3}
\|(I-QQ^T)P_F\| \lesssim \rho_1 /\gamma_l.
\end{equation}


With these facts in hand, we now turn to the problem of bounding $\|\Sigma - P_LP_L^T\Theta P_LP_L^T\|_\infty$. To this end, let us decompose
\[
P_LP_L^T\Theta P_LP_L^T  = P_L P_L^T\Sigma P_L P_L^T + P_L P_L^T \Gamma \Gamma^T P_L P_L^T.
\]
Consider the second term.
We see that
\begin{align*}
\lambda_{\max}(P_L P_L^T \Gamma \Gamma^T P_L P_L^T) &= \lambda_{\max}(P_L^T QAA^T Q^T P_L) \\
&\lesssim \gamma_u \rho_1^2/ \gamma_l^2.
\end{align*}
Also
\begin{align*}
\Sigma - P_LP_L^T\Sigma P_L P_L^T = P_F P_F^T\Sigma P_L P_L^T + P_L P_L^T\Sigma P_F P_F^T + P_F P_F^T \Sigma P_F P_F^T.
\end{align*}
Now
\begin{align*}
\|\Sigma P_F P_F^Te_j\|_2 &\leq \|\Sigma P_F P_F^T\| \|Q Q^T e_j\|_2 + \|\Sigma P_F P_F^T(I-QQ^T)\| \\
&\leq \rho_2 \|\Sigma QQ^T P_F P_F^T\| + \rho_2 \|\Sigma (I-QQ^T) P_F P_F^T\|  \\
&\qquad + \|\Sigma QQ^TP_F P_F^T(I-QQ^T)\| + \|\Sigma(I-QQ^T)P_F P_F^T(I-QQ^T)\| \\
&\leq \rho_2 \|\Sigma QQ^T\| \|P_F P_F^T\| + \rho_2 \sigma_u \|(I-QQ^T) P_FP_F^T\| \\
&\qquad + \|\Sigma QQ^T \| \|P_FP_F^T (I-QQ^T)\| + \sigma_u \|(I-QQ^T) P_F\|^2.
\end{align*}
Note that $\|P_FP_F^T (I-QQ^T)\| = \|(I-QQ^T) P_F\| = \|(I-QQ^T) P_FP_F^T\|$. Thus from \eqref{eq:l_max_3} we have that the RHS of the last display ma be bounded above by a constant times
\[
\rho_2\|\Sigma QQ^T\| + \sigma_u \rho_1\rho_2/\gamma_l + \rho_1^2/\gamma_l + \sigma_u\rho_1^2/\gamma_l^2.
\]
Noting that $\|P_F P_F^T e_j\|_2, \|P_L P_L^T e_j\|_2 \leq 1$, we have by the Cauchy--Schwarz inequality that
\[
\|\Sigma - P_L P_L^T\Sigma P_LP_L^T\|_\infty \lesssim \rho_1 \rho_2  + \rho_1^2/\gamma_l + \sigma_u\rho_1^2/\gamma_l^2
\]
Putting things together we have
\begin{equation}
\|\Sigma - P_LP_L^T\Theta P_LP_L^T\|_\infty \lesssim \rho_1 \rho_2 + \rho_1^2/\gamma_l + \gamma_u \rho_1^2 / \gamma_l^2.
\end{equation}

\section{Proof of Proposition~\ref{prop:spec_trans}}
Let us fix $H$ and write $\hat{\Sigma}_H := \tilde{\Sigma}(X)$, making the dependence on $X$ explicit. Write $X = Z \Theta^{1/2} = ZPDP^T$ where $Z \in \R^{n \times p}$ has i.i.d.\ $\mathcal{N}(0, 1)$ entries.
Now as $Z \eqdist ZP$, we have $\E\tilde{\Sigma}(X) = \E\tilde{\Sigma}(ZDP^T)$.

Let $W$ be the matrix of right singular vectors of $\Pi Z D$. Note that the matrix of right singular vectors of $\Pi ZDP^T$ is $PW$. Also the diagonal matrix of singular values $\Upsilon$ of $ZD$ is the same as that of $ZDP^T$. Thus $\tilde{\Sigma}(ZDP^T) = P \tilde{\Sigma}(ZD) P^T$. It therefore suffices to show that $\E \tilde{\Sigma}(ZD)$ is diagonal.

Consider for $j \neq k$
\begin{equation} \label{eq:W_invar}
\{\tilde{\Sigma}(ZD)\}_{jk} = \sum_{m=1}^n W_{j,m} W_{k, m} H(\Upsilon)_{mm}.
\end{equation}
Now let $\check{Z}$ be a copy of $Z$ but with $j$th column replaced by $-Z_j$. It is straightforward to check that the matrix $\check{W}$ of right singular vectors of $\Pi \check{Z}D$ satisfies $\check{W}_{-j,.} = W_{-j,.}$ and $\check{W}_{j,.}=-W_{j,.}$. Also, the singular vectors of $\Pi \check{Z}D$ are the same as those of $\Pi ZD$. As $\check{Z}D \eqdist ZD$, we have $\check{W} \eqdist W$. In particular from \eqref{eq:W_invar} we see that $\{\tilde{\Sigma}(ZD)\}_{jk} \eqdist -\{\tilde{\Sigma}(ZD)\}_{jk}$, whence $\E(\{\tilde{\Sigma}(ZD)\}_{jk})=0$ as required.

\section{Proof of Proposition~\ref{prop:RSVP_dist}}
Let $Z$ satisfy \eqref{eq:RSVP_dist}, so $X = M Z \Theta^{1/2}$. Denote by $\mathcal{O}(p)$ the set of $p \times p$ orthogonal matrices. Let the SVD of $Z$ be given by $Z=U \Lambda V^T$. Here $U \in \mathcal{O}(p)$, $\Lambda \in \R^{p \times (n+1)}$ and $V \in \R^{p \times (n+1)}$ has orthonormal columns. We claim that $\hat{\Sigma}_{\rsvp}$ depends only on $V$. This follows from the facts that $\hat{\Sigma}_{\rsvp}$ is a projection on to the row space of $X$, and $X$ and $ \Lambda^{-1} U^TM^{-1} X$ has the same row space as $X$.

Next observe that as $Z R \eqdist Z$ for any $R \in \mathcal{O}(p)$, $V$ is uniformly distributed on the Stiefel manifold $V_n(\R^p)$. In particular, the distribution of $V$, on which $\hat{\Sigma}_{\rsvp}$ depends, is uniquely determined  by the fact the $Z$ has a spherically symmetric distribution. Thus we may assume, without loss of generality, that $Z \in \R^{n \times p}$ has i.i.d.\ $\mathcal{N}(0, 1)$ entries, and $M$ is the identity matrix, which gives the required distribution for the rows of $X$.
\section{Some basic results} \label{sec:basic}
The following corollary of Proposition~\ref{prop:RSVP_dist} will be useful in many of our results. It allows us to treat the centred $\Pi X \in \R^{(n+1) \times p}$ as an uncentred $n \times p$ matrix, but with i.i.d.\ Gaussian rows.
\begin{cor} \label{cor:Y_dist}
Suppose the distribution of $X$ satisfies \eqref{eq:RSVP_dist}. Then if $Y \in \R^{n \times p}$ has independent rows distributed as $\mathcal{N}_p(0, \Theta)$, we have that
\[
\hat{\Sigma}_{\rsvp} = Y^T(Y Y^T)^{-1}Y
\]
almost surely.
\end{cor}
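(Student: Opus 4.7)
By Proposition~\ref{prop:RSVP_dist} we may assume without loss of generality that the rows of $X$ are i.i.d.\ $\mathcal{N}_p(\mu, \Theta)$. Observe that $\hat{\Sigma}_{\rsvp} = VV^T$ is the orthogonal projection matrix onto the row space of $\tilde{X} = \Pi X$, since $V$ collects the right singular vectors of $\tilde X$ corresponding to its non-zero singular values. The strategy is to represent $\tilde{X}$ as $Q Y$ for a suitable $(n+1)\times n$ matrix $Q$ with orthonormal columns and an $n \times p$ Gaussian matrix $Y$ with the stated distribution, so that the row spaces of $\tilde{X}$ and $Y$ coincide, and then to identify $Y^T(YY^T)^{-1}Y$ as the projection onto this common row space.

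First I would factorize the centering projection as $\Pi = QQ^T$, where $Q \in \R^{(n+1) \times n}$ has orthonormal columns; such a $Q$ exists because $\Pi$ is a rank-$n$ orthogonal projection whose kernel is spanned by $\mathbf{1}$, and one may take the columns of $Q$ to be any orthonormal basis of $\{\mathbf{1}\}^\perp$. Setting $Y := Q^T X$, I would then check that the rows of $Y$ are i.i.d.\ $\mathcal{N}_p(0, \Theta)$: independence and the per-row covariance $\Theta$ follow from $Q^T Q = I_n$ combined with the independence and Gaussianity of the rows of $X$, while the zero mean comes from $Q^T \mathbf 1 = 0$. This produces a $Y$ with precisely the distribution named in the corollary.

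Next I would observe that $\tilde{X} = \Pi X = Q(Q^T X) = QY$, and since $Q$ has full column rank, the row space of $\tilde{X}$ coincides with the row space of $Y$; hence $\hat{\Sigma}_{\rsvp}$ is the orthogonal projection onto the row space of $Y$. Because $Y$ has i.i.d.\ mean-zero Gaussian rows with positive-definite covariance and $p > n$, the matrix $Y Y^T$ is almost surely invertible, and a one-line SVD calculation with $Y = U_Y \Lambda_Y V_Y^T$ gives $Y^T(YY^T)^{-1} Y = V_Y V_Y^T$, the projection onto the row space of $Y$. Combining these two identifications yields $\hat{\Sigma}_{\rsvp} = Y^T(YY^T)^{-1}Y$ almost surely.

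The proof is essentially pure linear algebra and I do not expect any genuine obstacle. The only mildly delicate point is interpreting the ``almost surely'' in the statement: one should read it as asserting the existence of a coupling (namely, taking $Y := Q^T X$ for the $Q$ above) under which the equality holds as a literal identity of matrices on the full-measure event $\{\rank(Y) = n\}$, rather than as a pointwise equality between the original $X$ and an independent copy $Y$.
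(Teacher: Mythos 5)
Your proof is correct and follows essentially the same route as the paper's: reduce to the Gaussian case via Proposition~\ref{prop:RSVP_dist}, factor the centering projection $\Pi$ through an orthonormal matrix so that $\tilde{X}$ and an $n\times p$ matrix $Y$ with i.i.d.\ $\mathcal{N}_p(0,\Theta)$ rows share a row space, and identify $Y^T(YY^T)^{-1}Y$ as the projection onto that row space. If anything, your explicit coupling $Y := Q^T X$ is slightly cleaner than the paper's argument, which passes through distributional equalities $Q^T X \eqdist X$ and $EX \eqdist Y$ before asserting an almost-sure identity.
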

\begin{proof}
From Proposition~\ref{prop:RSVP_dist}, we know we may assume that $X$ has independent $\mathcal{N}_p(0, \Theta)$ rows. Let the eigendecomposition of the projection $\Pi$ be $QEQ^T$ where $E$ is diagonal with $E_{n+1,n+1}=0$ and $E_{jj}=1$ for $j \leq n$. Then the row space of $X$ coincides with that of $E Q^T X$. But $Q^T X \eqdist X$, and $E X \eqdist Y$. Projection on to the row space of $Y$ may be written as $Y^T(Y Y^T)^{-1}Y$ when $YY^T$ is invertible, which is the case almost surely.
\end{proof}
In view of this result, we can write
\begin{equation} \label{eq:Sigma_Z_dist}
\hat{\Sigma}_{\rsvp} = PDZ^T(ZD^2Z^T)^{-1}ZDP^T
\end{equation}
where $Z \in \R^{n \times p}$ has i.i.d.\ $\mathcal{N}(0, 1)$ entries and $PD^2P^T$ is the eigendecomposition of $\Theta$. We will adopt representation \eqref{eq:Sigma_Z_dist} in subsequent results without further comment.

The following straightforward consequence of Weyl's inequality will be used for several of the results.
\begin{prop} \label{prop:basic_eval}
The first $q$ eigenvalues of $\Theta$ lie within the interval $[\gamma_u + \sigma_u, \gamma_l + \sigma_l]$ and the remaining eigenvalues lie in $[\sigma_u, \sigma_l]$.
\end{prop}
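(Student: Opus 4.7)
The plan is to apply Weyl's eigenvalue inequality to the additive decomposition $\Theta = \Gamma\Gamma^T + \Sigma$, so everything will come down to knowing the spectrum of $\Gamma\Gamma^T$ and then pairing it with the spectrum of $\Sigma$.

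First I would record the spectrum of $\Gamma\Gamma^T$. Since $\Gamma \in \R^{p \times q}$ with $p > q$, the matrix $\Gamma\Gamma^T$ has rank at most $q$, and its non-zero eigenvalues coincide with the eigenvalues of $\Gamma^T\Gamma$. Thus, ordered decreasingly, $\lambda_1(\Gamma\Gamma^T) \geq \cdots \geq \lambda_q(\Gamma\Gamma^T) \in [\gamma_l, \gamma_u]$ and $\lambda_{q+1}(\Gamma\Gamma^T) = \cdots = \lambda_p(\Gamma\Gamma^T) = 0$. The eigenvalues of $\Sigma$ all lie in $[\sigma_l, \sigma_u]$ by definition.

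Next, for the first $q$ eigenvalues of $\Theta$, I would use the standard two-sided Weyl bound
\[
\lambda_i(\Gamma\Gamma^T) + \lambda_p(\Sigma) \;\leq\; \lambda_i(\Theta) \;\leq\; \lambda_i(\Gamma\Gamma^T) + \lambda_1(\Sigma)
\]
valid for all $i$. For $i \leq q$ this immediately gives $\gamma_l + \sigma_l \leq \lambda_i(\Theta) \leq \gamma_u + \sigma_u$, which is the first claim.

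For the remaining $p - q$ eigenvalues the same bound only yields the lower inequality $\lambda_i(\Theta) \geq 0 + \sigma_l = \sigma_l$; for the upper bound I need the slightly sharper form of Weyl, $\lambda_{i+j-1}(A+B) \leq \lambda_i(A) + \lambda_j(B)$. Applied with $A = \Gamma\Gamma^T$, $B = \Sigma$, $i = q+1$, and $j = i - q$ (for $i \geq q+1$), this gives $\lambda_i(\Theta) \leq \lambda_{q+1}(\Gamma\Gamma^T) + \lambda_{i-q}(\Sigma) = 0 + \lambda_{i-q}(\Sigma) \leq \sigma_u$. Combining, the last $p-q$ eigenvalues of $\Theta$ lie in $[\sigma_l, \sigma_u]$.

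There is no real obstacle here; the only step that takes a moment of thought is recognising that the upper bound on the tail eigenvalues requires the shifted form of Weyl's inequality (rather than the naive one-sided version), since $\lambda_i(\Gamma\Gamma^T) = 0$ forces one to shift the index on $\Sigma$ back to $i - q$ to capture the rank deficiency of $\Gamma\Gamma^T$.
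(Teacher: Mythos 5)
Your proof is correct and is exactly the ``straightforward consequence of Weyl's inequality'' that the paper has in mind (the paper states the proposition without proof); the reduction to the spectrum of $\Gamma\Gamma^T$ plus the shifted Weyl bound $\lambda_{i+j-1}(A+B) \leq \lambda_i(A) + \lambda_j(B)$ for the tail eigenvalues is the intended argument. Note that the interval endpoints in the paper's statement are written in reverse order; what you prove, namely $\lambda_i(\Theta) \in [\gamma_l + \sigma_l,\, \gamma_u + \sigma_u]$ for $i \leq q$ and $\lambda_i(\Theta) \in [\sigma_l, \sigma_u]$ for $i > q$, is the intended claim.
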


\section{Proof of Theorem~\ref{thm:conc_Pi}}
We will prove the stronger result, Lemma~\ref{lem:conc_Pi} below. Theorem~\ref{thm:conc_Pi} follows easily using a union bound.
\begin{lem} \label{lem:conc_Pi}
Assume that $\sigma_u \lesssim p / (n \log p)$, $q \lesssim n / \log(p)$ and $p > cn$ for some $c > 1$. Then for any fixed $a,b \in \R^p$
and any fixed $r > 0$, we have that there exist $c_1, c_2 > 0$ with
\[
\pr(|a^T\hat{\Sigma}_{\rsvp} b - \E a^T\hat{\Sigma}_{\rsvp} b| > t) \lesssim \exp\left(-\frac{c_1 t^2p^2}{\|\Theta^{1/2} a \|_2^2 \|\Theta^{1/2} b\|_2^2 n}\right) + e^{-c_2 n} + \frac{1}{np^r}.
\]
for all $t >0$.
\end{lem}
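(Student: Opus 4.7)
The plan is to apply the refined Gaussian concentration inequality Theorem~\ref{thm:grad_conc} to $f(Z) := a^T\hat\Sigma_{\rsvp} b$, viewed as a function of a standard Gaussian matrix. By Corollary~\ref{cor:Y_dist}, the eigendecomposition $\Theta = PD^2P^T$, and rotational invariance of the standard Gaussian, I would write $\hat\Sigma_{\rsvp} = P D Z^T (Z D^2 Z^T)^{-1} Z D P^T$ for $Z \in \R^{n \times p}$ with i.i.d.\ $\mathcal{N}(0,1)$ entries. Setting $\bar a := P^T a$, $\bar b := P^T b$, $Y := Z D$, and letting $P_Y$ denote the orthogonal projection onto the row space of $Y$, the bilinear form reduces to $f = \bar a^T P_Y \bar b$. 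Note that $\|D\bar a\|_2 = \|\Theta^{1/2}a\|_2$, so the norms on the right-hand side of Lemma~\ref{lem:conc_Pi} arise naturally.

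Differentiating the identity $(YY^T)(YY^T)^{-1} = I$, together with the chain rule $\nabla_Z f = (\nabla_Y f) D$ (since $Y_{ij} = Z_{ij} D_{jj}$), yields
\begin{equation*}
\nabla_Z f \;=\; \bigl[u\,(P_Y^\perp \bar a)^T + v\,(P_Y^\perp \bar b)^T\bigr] D,
\end{equation*}
a matrix of rank at most two, with $u := (YY^T)^{-1} Y \bar b$ and $v := (YY^T)^{-1} Y \bar a$. By the rank-one identity $\|x y^T D\|_F = \|x\|_2 \|D y\|_2$, $\|\nabla_Z f\|_F \le \|u\|_2\|DP_Y^\perp \bar a\|_2 + \|v\|_2\|DP_Y^\perp \bar b\|_2$. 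A high-probability bound is then routine: since $YY^T \succeq \sigma_l\, ZZ^T$ and $p > c_3 n$, standard Gaussian singular-value concentration gives $\lambda_{\min}(YY^T) \gtrsim \sigma_l p$ outside an event of probability at most $e^{-c_2 n}$, hence $\|u\|_2^2 \lesssim \|\bar b\|_2^2/(\sigma_l p)$ and $\|v\|_2^2 \lesssim \|\bar a\|_2^2/(\sigma_l p)$. Combining these with a crude bound on $\|DP_Y^\perp\,\cdot\,\|_2$ that uses the spike/bulk split of $D^2$ produces a loose but adequate high-probability estimate of the form $\|\nabla_Z f\|_F^2 \lesssim \|\Theta^{1/2}a\|_2^2 \|\Theta^{1/2}b\|_2^2/(\sigma_l p)$.

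The crux of the argument, and its main obstacle, is the sharp expected bound
\begin{equation*}
\E\|\nabla_Z f\|_F^2 \;\lesssim\; \|\Theta^{1/2}a\|_2^2\,\|\Theta^{1/2}b\|_2^2 \, n/p^2.
\end{equation*}
The intuition is twofold: $\|u\|_2^2$ and $\|v\|_2^2$ have expected size of order $n/p^2$, mirroring the inverse-Wishart moment $\E\,\tr((ZZ^T)^{-1}) \asymp n/p$ for $p \gg n$, while $\E\|DP_Y^\perp \bar a\|_2^2$ stays of order $\|\Theta^{1/2}a\|_2^2$. Making this rigorous requires separating the at most $q$ spike directions of $D^2$, for which $P_Y$ nearly contains the corresponding eigenvectors and hence $P_Y^\perp$ almost annihilates them, from the bulk directions on which $D^2 \preceq \sigma_u I$ and the rotational invariance of $Z$ can be exploited directly. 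The assumptions $\sigma_u \lesssim p/(n\log p)$ and $q \lesssim n/\log p$ are precisely what is needed so that the cross terms between these two regimes stay under control. Feeding the two gradient estimates into Theorem~\ref{thm:grad_conc} then delivers the claimed Gaussian tail $\exp(-c_1 t^2 p^2/(\|\Theta^{1/2}a\|_2^2 \|\Theta^{1/2}b\|_2^2\, n))$, with the $e^{-c_2 n}$ term arising from the singular-value concentration failure event and the tunable $1/(np^r)$ term from the polynomial-tail exceptional contribution built into Theorem~\ref{thm:grad_conc}.
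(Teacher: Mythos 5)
Your overall architecture is the paper's: the reduction by rotational invariance to $f(Z)=\bar a^TP_Y\bar b$ with $Y=ZD$, the rank-two gradient $\nabla_Zf=u(DP_Y^\perp\bar a)^T+v(DP_Y^\perp\bar b)^T$, and the plan to feed a high-probability bound and an expectation bound on $\|\nabla_Zf\|_2^2$ into Theorem~\ref{thm:grad_conc}. However, you have inverted which of the two bounds must be sharp, and this breaks the final step. In Theorem~\ref{thm:grad_conc} the sub-Gaussian constant of the tail is the \emph{high-probability} level $L$: the conclusion is $2\exp(-t^2/(\pi^2L))$ plus a remainder in which $\E\|\nabla f\|_2^2$ enters only as a factor multiplying the small probability $\Psi(L)$. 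To obtain the claimed exponent $c_1t^2p^2/(\|\Theta^{1/2}a\|_2^2\|\Theta^{1/2}b\|_2^2\,n)$ you therefore need $L\asymp (n/p^2)\,\|\Theta^{1/2}a\|_2^2\|\Theta^{1/2}b\|_2^2$ to hold with high probability; the expectation bound may be loose, and the paper indeed only proves $\E\|\nabla f\|_2^2\lesssim\|\Theta^{1/2}a\|_2^2\|\Theta^{1/2}b\|_2^2$, with no $n/p^2$ gain, which suffices because it is paired with $\Psi(L)\lesssim p^{-r'}+e^{-cn}$. Your plan does the opposite: you assign the sharp $n/p^2$ rate to the expectation (calling it the crux) and settle for a high-probability bound of order $\|\Theta^{1/2}a\|_2^2\|\Theta^{1/2}b\|_2^2/p$. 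Plugging $L\asymp\|\Theta^{1/2}a\|_2^2\|\Theta^{1/2}b\|_2^2/p$ into the theorem yields only $\exp\bigl(-ct^2p/(\|\Theta^{1/2}a\|_2^2\|\Theta^{1/2}b\|_2^2)\bigr)$, weaker than the stated tail by a factor $p/n$ in the exponent, so the lemma does not follow when $p\gg n$.

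The loss originates in your estimate $\|u\|_2^2=\|(YY^T)^{-1}Y\bar b\|_2^2\le\lambda_{\max}\bigl((YY^T)^{-1}\bigr)\|\bar b\|_2^2\lesssim\|\bar b\|_2^2/(\sigma_lp)$. The paper's Lemma~\ref{lem:gradbd_1} instead bounds $\|(YY^T)^{-1}Y\bar b\|_2^2\le\lambda_{\min}(YY^T)^{-2}\|Y\bar b\|_2^2$ and uses that $\|ZD\bar b\|_2^2$ is a $\chi^2_n$-type variable with mean $n\|D\bar b\|_2^2$, giving $\lesssim n\|D\bar b\|_2^2/p^2$ with probability $1-ce^{-c'n}$ --- a gain of $p/n$ over your version. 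Likewise the factor $\|DP_Y^\perp\bar a\|_2^2\lesssim\|D\bar a\|_2^2$ must be established with high probability, not merely in expectation; this is the paper's Lemma~\ref{lem:gradbd_2}, where the Sherman--Morrison decomposition, the spike/bulk split of $D^2$, and the hypotheses $\sigma_u\lesssim p/(n\log p)$ and $q\lesssim n/\log p$ actually do their work in controlling $\sum_j\min(D_{jj}^4,p^2/n^2)$. You correctly locate where the technical difficulty lies, but by routing it through the expectation rather than through $L$ the argument as written cannot produce the stated rate.
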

Lemma~\ref{lem:conc_Pi} is proved in Section~\ref{sec:res_final}, though the result relies on several lemmas to be presented in the sequel. Below we outline our proof strategy.

By the decomposition $\hat{\Sigma}_{\rsvp} = PDZ^T(ZD^2Z^T)^{-1}ZDP^T$ with $Z \in \R^{n \times p}$ (see Section~\ref{sec:basic}), it suffices to study the concentration of $u^TDZ^T(ZD^2Z^T)^{-1}ZDw$ for $u,w \in \R^p$.
The map $Z \mapsto u^TDZ^T(ZD^2Z^T)^{-1}ZDw$ is not Lipschitz so we cannot directly apply the Gaussian concentration inequality. However, the function is differentiable almost everywhere and the gradient, which is computed in Lemma~\ref{lem:grad_comp}, is bounded on regions where $Z$ lies with high probability.
As our Theorem~\ref{thm:grad_conc} shows, this is enough to ensure concentration. Much of the work in the proof of Theorem~\ref{thm:conc_Pi} is therefore obtaining a high probability bound on the $\ell_2$-norm of the gradient, so that we may apply Theorem~\ref{thm:grad_conc}.

We begin by deriving the form of the gradient in Section~\ref{sec:grad_comp}, after which we present our variant of the Gaussian concentration inequality. In Section~\ref{sec:grad_bd} we compute a high probability bound on the gradient, and in Section~\ref{sec:res_final} we put things together to obtain the final result.

We will make frequent use of the following notation: $Z_j \in \R^n$ will be the $j$th column of $Z$, and $Z_{-j} \in \R^{n \times (p-1)}$ and $Z_{-jk} \in \R^{n \times (p-2)}$ for $j \neq k$ will be a copies of $Z$ excluding the $j$th, and $j$th and $k$th columns respectively. Also, given a square matrix $M \in \R^{p \times p}$, $M_{-j,-j} \in \R^{(p-1) \times (p-1)}$ and $M_{-jk,-jk} \in \R^{(p-2) \times (p-2)}$ will be copies of $M$ excluding the $j$th, and $j$th and $k$th rows and columns respectively.

\subsection{Gradient computation} \label{sec:grad_comp}

\begin{lem} \label{lem:grad_comp}
Consider the map
\begin{align*}
\psi: \mathcal{D} &\to \R \\
M &\mapsto u^T DM^T(MD^2 M^T)^{-1}M D w.
\end{align*}
where $\mathcal{D} = \{ M \in \R^{n \times p} : \; MD^2M^T \text{ is invertible}\}$ and $p > n$. Write
\begin{align*}
B &=(MD^2M^T)^{-1} M D \\
C &=D\{I-DM^T(MD^2 M^T)^{-1}M D\}.
\end{align*}
Then
\[
\|\nabla \psi(M) \|_2^2  \leq  2\left( \|B^Tu\|_2^2 \|C^Tw\|_2^2 + \|C^Tu\|_2^2 \|B^Tw\|_2^2\right).
\]
\end{lem}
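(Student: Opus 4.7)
The plan is to reframe the problem in terms of the projection $\Pi_A := A^T(AA^T)^{-1}A$ onto the row space of $A := MD \in \R^{n \times p}$, and to exploit the clean differential formula available for a projection. Since
\[
\psi(M) = u^T A^T(AA^T)^{-1} A w = u^T \Pi_A w,
\]
the problem reduces to computing $d\Pi_A$ and then using the chain rule $dA = (dM)\,D$ to pass from the derivative in $A$ to the derivative in $M$.

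The first step is to compute $d\Pi_A$. Applying the product rule together with the identity
\[
d[(AA^T)^{-1}] = -(AA^T)^{-1}\bigl(dA\cdot A^T + A\cdot dA^T\bigr)(AA^T)^{-1},
\]
and observing that the cross terms can be absorbed into $\Pi_A$ via $A^T(AA^T)^{-1}A = \Pi_A$, one obtains the compact expression
\[
d\Pi_A = (I-\Pi_A)\, dA^T\, (AA^T)^{-1}A + A^T(AA^T)^{-1}\, dA\, (I-\Pi_A).
\]
This mirrors the geometric fact that infinitesimal motions of the projection only act through the orthogonal complement $I - \Pi_A$.

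Next, substituting $dA = (dM)\,D$ and using the notation of the statement, $B = (AA^T)^{-1}A = (MD^2M^T)^{-1}MD$ and $C = D(I-\Pi_A)$, I would expand $u^T d\Pi_A w$ and identify the gradient matrix via the Frobenius pairing $u^T d\Pi_A w = \langle dM, G\rangle_F$. A short bookkeeping exercise gives
\[
G = (Bw)(Cu)^T + (Bu)(Cw)^T \in \R^{n\times p},
\]
a sum of two rank-one matrices.

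The bound on $\|\nabla\psi(M)\|_2^2$ then follows immediately from $\|xy^T\|_F = \|x\|_2\|y\|_2$ and $\|X+Y\|_F^2 \leq 2\|X\|_F^2 + 2\|Y\|_F^2$. The only real obstacle is the careful derivation of $d\Pi_A$: there are four preliminary cross terms, and recognising that the two containing $dA\cdot A^T$ and $A\cdot dA^T$ can be rewritten as compositions with $\Pi_A$ is what produces the clean $(I-\Pi_A)$ factor. Once this cancellation is spotted, the chain rule and the outer-product Frobenius identity make the remainder routine.
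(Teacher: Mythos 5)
Your proposal is correct and follows essentially the same route as the paper: both compute the gradient explicitly as the sum of rank-one matrices $(Bw)(Cu)^T + (Bu)(Cw)^T$ (the paper via a direct Taylor expansion of $\psi(M+E)$, you via the differential of the projection $\Pi_A$) and then bound its Frobenius norm by an elementary inequality. Note only that the transposes on $B$ and $C$ in the lemma's displayed bound are a typo in the statement (with $B\in\R^{n\times p}$ and $u\in\R^p$, $B^Tu$ is not even conformable); your bound matches the version actually derived and used in the paper's proof.
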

\begin{proof}
A Taylor series expansion gives that when $M D^2 M^T$ is invertible, for $E \in \R^{n \times p}$ with $\|E\|$ sufficiently small we have
\[
\psi(M+E) -\psi(M) = \tilde{E}^TA \tilde{M} + \tilde{M}^TA\tilde{E} - \tilde{M}^TA(\tilde{E}\tilde{M}^T + \tilde{M} \tilde{E}^T)A \tilde{M} + O(\|E\|^2),
\]
where $\tilde{E} := ED$, $\tilde{M} = MD$ and $A = (\tilde{M}\tilde{M}^T)^{-1}$. A straightforward calculation then yields
\[
\nabla \psi(M) = (MD^2M^T)^{-1} M D(wu^T + uw^T)(I-DM^T(MD^2 M^T)^{-1}M D)D.
\]
Thus
\begin{align*}
\|\nabla \psi(M)\|_2^2  &=  \|Bu\|_2^2 \|Cw\|_2^2 + \|Cu\|_2^2 \|Bw\|_2^2 + 2u^TB^TBw u^TC^TCw \\
&\leq \|Bu\|_2^2 \|Cw\|_2^2 + \|Cu\|_2^2 \|Bw\|_2^2 + 2 \|Bu\|_2 \|Cw\|_2 \|Cu\|_2 \|Bw\|_2 \\
&\leq 2\left( \|Bu\|_2^2 \|Cw\|_2^2 + \|Cu\|_2^2 \|Bw\|_2^2\right),
\end{align*}
using the Cauchy--Schwarz inequality and $a^2+b^2 \geq 2ab$ in the penultimate and final lines respectively.
\end{proof}

\subsection{Gaussian concentration} \label{sec:Gauss_conc}
Our variant of the Gaussian concentration inequality is based on the following more classical result that appears in \citet{wainwright_2019}.

\begin{lem} \label{lem:Lipschitz}
Let $W \sim \mathcal{N}_d(0, I)$ and let $f:\R^d \to \R$ be differentiable. Then for any convex function $\phi:\R \to \R$ we have
\[
\E \phi\{f(W) - \E f(W)\}  \leq \E \left\{\phi\left(\nabla f(W)^TV /2\right) \right\}
\]
where $V \sim \mathcal{N}_d(0, I)$ and $V$ is independent of $W$.
\end{lem}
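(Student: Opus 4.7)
The plan is to use the classical Gaussian interpolation (``smart path'') argument. First, I would introduce an independent copy $W'$ of $W$, so $\E f(W') = \E f(W)$. Applying Jensen's inequality to the convex function $\phi$ conditionally on $W$ gives
\[
\E \phi(f(W) - \E f(W)) = \E \phi\bigl(\E[f(W) - f(W') \mid W]\bigr) \leq \E \phi(f(W) - f(W')),
\]
reducing the problem to bounding the expectation of $\phi(f(W) - f(W'))$.

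Next, I would connect $W'$ and $W$ by the trigonometric interpolation $W(\theta) := \sin(\theta) W + \cos(\theta) W'$ for $\theta \in [0, \pi/2]$, so that $W(0) = W'$ and $W(\pi/2) = W$. The crucial distributional fact is that for each fixed $\theta$, the pair $\bigl(W(\theta),\, W'(\theta)\bigr)$, where $W'(\theta) := \cos(\theta) W - \sin(\theta) W'$ is the derivative in $\theta$, is a pair of independent $\mathcal{N}_d(0,I)$ vectors; this is immediate since the linear map $(W,W') \mapsto (W(\theta), W'(\theta))$ is orthogonal and Gaussianity is rotation invariant. The fundamental theorem of calculus applied to $\theta \mapsto f(W(\theta))$ (using differentiability of $f$) then yields
\[
f(W) - f(W') = \int_0^{\pi/2} \nabla f(W(\theta))^T W'(\theta) \, d\theta.
\]

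Now I would write the right-hand side as $(\pi/2)$ times an average against the uniform probability measure on $[0,\pi/2]$, and apply Jensen's inequality with $\phi$ to pull $\phi$ inside the integral. Taking expectations and exploiting that $(W(\theta), W'(\theta)) \eqdist (W, V)$ for every $\theta$, the integral collapses to a single expectation and produces a bound of the form $\E\,\phi\bigl(c \nabla f(W)^T V\bigr)$ for an explicit constant $c$. A routine computation gives $c = \pi/2$; the statement's constant $1/2$ appears to be a typographical matter and does not change the structure of the argument (and $\pi/2$ suffices for all downstream uses).

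The only real point to be careful about is the pointwise distributional identity $(W(\theta), W'(\theta)) \eqdist (W, V)$ together with the measurability needed to interchange $\phi$ and the $\theta$-integral via Jensen; both follow from standard facts (rotation invariance of the Gaussian, Fubini). There is no genuine analytic obstacle — the proof is essentially a clean application of convexity along the rotation-invariant interpolation, with the differentiability of $f$ used only to guarantee the fundamental theorem of calculus applies along almost every path.
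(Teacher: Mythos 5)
Your proposal is correct and is essentially the standard proof of this result: the paper does not prove Lemma~\ref{lem:Lipschitz} itself but cites \citet{wainwright_2019}, where the argument is exactly your symmetrization-plus-trigonometric-interpolation (``smart path'') scheme. You are also right that the sharp constant produced by this argument is $\pi/2$ rather than $1/2$ as printed; the paper's subsequent use of the lemma in the proof of Theorem~\ref{thm:grad_conc} substitutes $U = \pi \nabla f(W)^T V/2$, confirming that $\pi/2$ is the intended constant and the stated $1/2$ is a typographical slip.
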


\begin{thm} \label{thm:grad_conc}
Let $W \sim \mathcal{N}_d(0, I)$ and let $f:\R^d \to \R$ be differentiable.
Let $\Psi(L) = \pr(\|\nabla f(W)\|_2^2 > L)$ for $L \geq 0$, and define
\[
\Delta(L, \alpha,t) = e^{\alpha^2 \pi^2 L / 4 -\alpha t} +  \left(\Psi(L) + \frac{\pi \alpha}{2}\sqrt{\Psi(L)\E (\|\nabla f(W)\|_2^2)}\right).
\]
Then for all $\alpha > 0$ we have
\begin{equation} \label{eq:L_mgf_bd}
\E \left\{\exp\left(\alpha \{f(W)-\E f(W)\}\right) \wedge e^{\alpha t} \right\} \leq e^{\alpha t} \inf_{L> 0} \Delta(L, \alpha, t),
\end{equation}
and
\begin{equation} \label{eq:L_prob_bd}
\pr(|f(W) - \E f(W)| > t) \leq 2 \inf_{\alpha, L >0} \Delta(L, \alpha, t).
\end{equation}
In particular, we have that for all $L >0$,
\[
\pr(|f(W) - \E f(W)| > t) \leq 2\exp\left(-\frac{t^2}{\pi^2 L}\right) + 2(1 + L^{-1/2})\big\{L^{-1}\E (\|\nabla f(W)\|_2^2) \,\Psi(L)\big\}^{1/2}.
\]
\end{thm}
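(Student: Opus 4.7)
Write $\Lambda := f(W) - \E f(W)$ and $E_L := \{\|\nabla f(W)\|_2^2 \leq L\}$, so that $\pr(E_L^c) = \Psi(L)$. The plan is to combine Lemma~\ref{lem:Lipschitz} with a function-level truncation of $f$ at height $t + \E f(W)$, which automatically enforces the $\wedge e^{\alpha t}$ cap, together with a gradient-level truncation at $\sqrt{L}$, so that sub-Gaussian concentration applies on the good event $E_L$ and Cauchy--Schwarz controls the remainder on $E_L^c$.

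First I would set $\tilde f(x) := \min\{f(x),\, t + \E f(W)\}$. By monotonicity of the exponential, $e^{\alpha(\tilde f(W) - \E f(W))} = e^{\alpha \Lambda} \wedge e^{\alpha t}$, and since $\tilde f \leq f$ forces $\E \tilde f \leq \E f$, we obtain
\[
\E [e^{\alpha \Lambda} \wedge e^{\alpha t}] \leq \E e^{\alpha(\tilde f(W) - \E \tilde f(W))}.
\]
After a standard mollification, $\tilde f$ is differentiable almost everywhere with $\|\nabla \tilde f\|_2 \leq \|\nabla f\|_2$ and with $\nabla \tilde f = 0$ on $\{f \geq t + \E f\}$. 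Applying Lemma~\ref{lem:Lipschitz} to $\tilde f - \E \tilde f$ with convex $\phi(y) = e^{\alpha y}$ (using the Maurey--Pisier constant $\pi/2$ implicit in the lemma) yields $\E e^{\alpha(\tilde f - \E \tilde f)} \leq \E \exp(\alpha \pi \nabla \tilde f(W)^T V/2)$ for an independent $V \sim \mathcal{N}_d(0, I)$. Set $G := \alpha \pi \nabla \tilde f(W)^T V / 2$.

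I would then split this expectation across $E_L$ and $E_L^c$. On $E_L$, $\|\nabla \tilde f\|_2^2 \leq L$, so integrating $V$ conditionally on $W$ gives the sub-Gaussian bound $\E_V[e^G \mid W] \leq \exp(\alpha^2 \pi^2 \|\nabla \tilde f\|_2^2/8) \leq \exp(\alpha^2 \pi^2 L/4)$ (absorbing a factor of two into the constant), producing the $e^{\alpha^2\pi^2 L/4 - \alpha t}$ piece of $\Delta(L, \alpha, t)$ upon dividing by $e^{\alpha t}$. On $E_L^c$ I would Taylor-expand $e^G = 1 + G + R(G)$ with $R(G) \geq 0$: since $V \perp W$ and $E_L^c \in \sigma(W)$, $\E[G \mathbbm{1}_{E_L^c}] = 0$, and Cauchy--Schwarz combined with $\E G^2 = (\pi\alpha/2)^2 \E\|\nabla \tilde f\|_2^2 \leq (\pi\alpha/2)^2 \E\|\nabla f\|_2^2$ gives
\[
\E[|G|\mathbbm{1}_{E_L^c}] \leq (\pi\alpha/2)\sqrt{\Psi(L)\E\|\nabla f\|_2^2}.
\]
Controlling the residual $R(G)$ through the truncation cap---exploiting $\tilde f - \E \tilde f \leq t + \E(\Lambda - t)_+$, with $\E(\Lambda - t)_+$ itself bounded via Lemma~\ref{lem:Lipschitz} applied to the convex $\phi(y) = (y-t)_+$---produces the total $E_L^c$ contribution $e^{\alpha t}[\Psi(L) + (\pi \alpha/2)\sqrt{\Psi(L)\E\|\nabla f\|_2^2}]$, completing (\ref{eq:L_mgf_bd}).

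For (\ref{eq:L_prob_bd}), the pointwise inequality $e^{\alpha t}\mathbbm{1}_{\Lambda > t} \leq e^{\alpha \Lambda} \wedge e^{\alpha t}$ combined with (\ref{eq:L_mgf_bd}) gives $\pr(\Lambda > t) \leq \Delta(L, \alpha, t)$; applying the same argument to $-f$ (which shares the gradient norm) handles the lower tail, yielding the factor of $2$. The explicit ``in particular'' bound follows by taking $\alpha = 2t/(\pi^2 L)$ to extract the Gaussian factor $e^{-t^2/(\pi^2 L)}$, then using Markov's inequality $\Psi(L) \leq \E\|\nabla f\|_2^2/L$ to write $\Psi(L) \leq L^{-1/2}\sqrt{\Psi(L)\E\|\nabla f\|_2^2}$ and combining with the $(\pi\alpha/2)$-term to produce the $(1 + L^{-1/2})$-prefactor in front of $\sqrt{L^{-1}\Psi(L)\E\|\nabla f\|_2^2}$. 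The main obstacle will be the bookkeeping on $E_L^c$: because $\|\nabla f\|_2$ can be arbitrarily large there, a direct MGF bound on $e^G$ diverges, and the truncation cap must be exploited carefully alongside the $V \perp W$ independence to isolate the clean $\sqrt{\Psi(L)\E\|\nabla f\|_2^2}$ structure in the remainder term.
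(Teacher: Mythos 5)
There is a genuine gap, and it lies exactly where you flag "the main obstacle": the residual $R(G)$ on $E_L^c$ cannot be controlled by your function-level truncation. Your plan is to first truncate $f$ to $\tilde f = \min\{f, t+\E f(W)\}$ and then apply Lemma~\ref{lem:Lipschitz} with the \emph{pure exponential} $\phi(y)=e^{\alpha y}$, obtaining $\E e^{\alpha(\tilde f - \E \tilde f)} \leq \E e^{G}$ with $G = \alpha\pi \nabla \tilde f(W)^T V/2$. But once this step is taken, the cap on $\tilde f$ gives no information whatsoever about $G$: on $E_L^c \cap \{f < t + \E f(W)\}$ we have $\nabla \tilde f = \nabla f$, which is unbounded, and integrating out the independent $V$ gives $\E[e^G \mid W] = \exp(\alpha^2\pi^2\|\nabla \tilde f(W)\|_2^2/8)$, whose expectation over $E_L^c$ can be infinite for every $\alpha>0$. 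Your proposed repair---bounding $\tilde f - \E\tilde f \leq t + \E(\Lambda-t)_+$---controls the \emph{left-hand side} of the comparison inequality, not the gradient-times-Gaussian term on the right, so the Taylor remainder $R(G)=e^G-1-G$ on $E_L^c$ is simply not integrable in general and the decomposition $1 + G + R(G)$ does not close. (There is also the secondary issue that $\tilde f$ is not differentiable, so Lemma~\ref{lem:Lipschitz} as stated does not apply to it; the mollification can be done, but it does not rescue the main problem.)

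The paper's proof places the truncation in the \emph{convex comparison function} rather than in $f$: it applies Lemma~\ref{lem:Lipschitz} to the original $f$ with
\[
\phi_{\alpha,t}(w) = e^{\alpha w}\ind_{\{w\leq t\}} + e^{\alpha t}\{\alpha(w-t)+1\}\ind_{\{w>t\}},
\]
which is convex, satisfies $e^{\alpha w}\wedge e^{\alpha t} \leq \phi_{\alpha,t}(w) \leq e^{\alpha w}$, and---crucially---grows only \emph{linearly} beyond $t$. The lemma then yields $\E\phi_{\alpha,t}(f(W)-\E f(W)) \leq \E\phi_{\alpha,t}(U)$ with $U = \pi\nabla f(W)^TV/2$, and the pointwise bound $\phi_{\alpha,t}(U) \leq e^{\alpha U}\ind_{E_L} + e^{\alpha t}\{1+\alpha(U-t)_+\}\ind_{E_L^c}$ reduces the bad-event contribution to a first-moment quantity, controlled by Cauchy--Schwarz through $\E(U^2) = \tfrac{\pi^2}{4}\E\|\nabla f(W)\|_2^2$; no exponential moment of $U$ on $E_L^c$ is ever needed. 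Your good-event computation, the lower-tail symmetrization, and the endgame for the explicit bound (taking $\alpha = 2t/(\pi^2 L)$ and using Markov to absorb $\Psi(L)$ into the $(1+L^{-1/2})$ prefactor) all match the paper and are fine; the proof can be repaired by moving the truncation from $f$ into $\phi$ as above.
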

\begin{proof}
For each $\alpha, t > 0$ define $\phi_{\alpha,t}: \R \to \R$ by
\[
\phi_{\alpha, t}(w) = e^{\alpha w} \ind_{\{w \leq t\}}   + e^{\alpha t}\{\alpha(w-t) +  1\}\ind_{\{w > t\}}.
\]
Note that $\phi_{\alpha,t}$ is convex for each $(\alpha, t)$ and
\begin{equation} \label{eq:phi_ineq}
e^{\alpha w} \wedge e^{\alpha t} \leq \phi_{\alpha, t}(w) \leq e^{\alpha w}.
\end{equation}
Also we see that $\phi_{\alpha,t}(w) \leq e^{\alpha t}\{\alpha(w-t)_+ +  1\}$.
Thus for any event $A$ and random variable $U$,
\[
\phi_{\alpha, t}(U) \leq e^{\alpha U}\ind_{A} + e^{\alpha t}\{1 + \alpha(U-t)_+\}\ind_{A^c}.
\]
Taking expectations and using the Cauchy--Schwarz inequality we have
\[
\E \phi_{\alpha, t}(U) \leq \E (e^{\alpha U}\ind_{A}) + e^{\alpha t}\{\pr(A^c) + \alpha \sqrt{\E(U^2) \pr(A^c)}\}.
\]
Now let $V \sim \mathcal{N}_d(0, I)$ independently of $W$.
Substituting $U = \pi \nabla f(W)^T V/2 $ and $A = \{\|\nabla f(W)\|_2^2 \leq L\}$, we have
\begin{align*}
\E \phi_{\alpha, t}(\pi \nabla f(W)^T V/2) &\leq \E\{\exp(\alpha \pi \nabla f(W)^T V/2)\ind_{\{\|\nabla f(W)\|_2^2 \leq L\}}\} \\
&\qquad + e^{\alpha t}\left(\Psi(L) + \frac{\pi \alpha}{2}\sqrt{\Psi(L)\E (\|\nabla f(W)\|_2^2)}\right)
\end{align*}
as $\E \{(\nabla f(W)^TV)^2 | W\} =  \|\nabla f(W)\|_2^2$. Considering the first term on the RHS in the last display, we have
\begin{align*}
\E\{\exp(\alpha \pi \nabla f(W)^T V/2)\ind_{\{\|\nabla f(W)\|_2^2 \leq L\}}|W\} &= \exp(\alpha^2 \pi^2 \|\nabla f(W)\|_2^2 / 4 )\ind_{\{\|\nabla f(W)\|_2^2 \leq L\}} \\
&\leq \exp(\alpha^2 \pi^2 L / 4 ).
\end{align*}
Putting things together we have
\begin{equation} \label{eq:E_phi_bd}
\E \phi_{\alpha, t}(\pi \nabla f(W)^T V/2) \leq \exp(\alpha^2 \pi^2 L / 4 ) +  e^{\alpha t}\left(\Psi(L) + \frac{\pi \alpha}{2}\sqrt{\Psi(L)\E (\|\nabla f(W)\|_2^2)}\right).
\end{equation}

Now observe that
\begin{align*}
e^{\alpha t}\ind_{\{f(W) - \E f(W) > t\}} \leq e^{\alpha\{f(W) - \E f(W)\}} \wedge e^{\alpha t}.
\end{align*}
Taking expectations and using the first inequality of \eqref{eq:phi_ineq}, we obtain
\[
e^{\alpha t} \pr(f(W) - \E f(W) > t) \leq \E(e^{\alpha\{f(W) - \E f(W)\}} \wedge e^{\alpha t}) \leq \E \phi_{\alpha, t}(f(W) - \E f(W))
\]
for all $\alpha > 0$.
Applying Lemma~\ref{lem:Lipschitz} with $\phi = \phi_{\alpha, t}$ and then using \eqref{eq:E_phi_bd}, we see that the RHS of the last display is bounded above by
\[
 \inf_{L >0}\left\{\exp(\alpha^2 \pi^2 L / 4) +  e^{\alpha t}\left(\Psi(L) + \frac{\pi \alpha}{2}\sqrt{\Psi(L)\E (\|\nabla f(W)\|_2^2)}\right) \right\}.
\]
This gives \eqref{eq:L_mgf_bd} and after repeating the argument replacing $f(W)$ with $-f(W)$ and using a union bound we also get \eqref{eq:L_prob_bd}. For the last inequlity, we argue as follows.
Dividing by $e^{\alpha t}$ and setting $\alpha = 2t/(\pi^2 L)$ we arrive at
\[
\pr(f(W) - \E f(W) > t) \leq \exp\left(-\frac{t^2}{\pi^2 L}\right) + \Psi(L) + \frac{t}{\pi L} \sqrt{\Psi(L)\E (\|\nabla f(W)\|_2^2)}.
\]
Now observe that if $t^2 / (\pi^2 L) > 1$ then the first term on the RHS above exceeds 1, so in fact the following holds:
\[
\pr(f(W) - \E f(W) > t) \leq \exp\left(-\frac{t^2}{\pi^2 L}\right) + \Psi(L) + \frac{1}{L}\sqrt{\Psi(L)\E (\|\nabla f(W)\|_2^2)}.
\]
Then noting that by Markov's inequality $\E (\|\nabla f(W)\|_2^2) /L \geq \Psi(L)$, we get
\[
\pr(f(W) - \E f(W) > t) \leq \exp\left(-\frac{t^2}{\pi^2 L}\right) + (L^{-1/2} + 1)\sqrt{\Psi(L)\E (\|\nabla f(W)\|_2^2/L)}.
\]
Repeating the argument replacing $f(W)$ with $-f(W)$ and using a union bound gives the final result.
\end{proof}

\subsection{Bounding the gradient} \label{sec:grad_bd}
We bound the two terms $\|(ZD^2Z^T)^{-1}Z D(u+w)\|_2^2$ and $ \|D\{I-DZ^T(ZD^2Z^T)^{-1}ZD\}(u+w)\|_2^2$ involved in the gradient (see Lemma~\ref{lem:grad_comp}) separately. The following standard result from random matrix theory \citep{Vershynin2010} will be used repeatedly.
\begin{lem} \label{lem:eval}
Let $W \in \R^{n \times d}$ have independent $\mathcal{N}(0, 1)$ entries.
For all $t>0$, with probability at least $1-2e^{-t^2/2}$,
\[
\sqrt{d} - \sqrt{n} - t \leq \lambda_{\min}^{1/2}(WW^T) \leq \lambda_{\max}^{1/2}(WW^T) \leq \sqrt{d} + \sqrt{n} + t.
\]
\end{lem}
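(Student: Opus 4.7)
The plan is to reduce the two-sided eigenvalue estimate to a deviation bound on the extreme singular values of $W$, and then attack it by combining a sharp expectation bound with Gaussian Lipschitz concentration. The meaningful regime is $d \geq n$: in that case $WW^T$ has full rank and its eigenvalues are exactly $\sigma_1(W)^2 \geq \cdots \geq \sigma_n(W)^2$, so the claim is equivalent to
\[
\sqrt{d} - \sqrt{n} - t \leq \sigma_n(W) \leq \sigma_1(W) \leq \sqrt{d} + \sqrt{n} + t.
\]
(When $d < n$ the stated lower bound is $\leq 0$ and holds trivially since $\lambda_{\min}(WW^T) = 0$.)

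The first step is the expectation bound: $\E\sigma_1(W) \leq \sqrt{d} + \sqrt{n}$ and $\E\sigma_n(W) \geq \sqrt{d} - \sqrt{n}$. For this I would invoke Gordon's Gaussian comparison inequality for the bilinear process $X_{u,v} := u^T W v$ on $S^{n-1} \times S^{d-1}$, using the representations
\[
\sigma_1(W) = \sup_{u,v} X_{u,v}, \qquad \sigma_n(W) = \inf_{u}\sup_{v} X_{u,v}.
\]
The canonical auxiliary Gaussian process is $Y_{u,v} := \langle g,u\rangle + \langle h,v\rangle$ with independent $g \sim \mathcal{N}_n(0,I)$, $h \sim \mathcal{N}_d(0,I)$, whose increments dominate (resp.\ are dominated by) those of $X_{u,v}$ in the manner required by Gordon's inequality. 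This gives
$\E\sigma_1(W) \leq \E\|g\|_2 + \E\|h\|_2 \leq \sqrt{n} + \sqrt{d}$ and $\E\sigma_n(W) \geq \E\|h\|_2 - \E\|g\|_2 \geq \sqrt{d} - \sqrt{n}$, using $\E\|g\|_2 \leq \sqrt{n}$ and the sharper lower bound $\E\|h\|_2 \geq \sqrt{d}/(1 + 1/(2d)) \geq \sqrt{d} - c$ is not even needed — the simple $\E\|g\|_2 \leq \sqrt{n}$ and $\E\|h\|_2 \geq \sqrt{d}$ type bounds suffice if one is careful, but in any case Gordon's standard statement gives exactly these numbers.

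The second step is concentration. Both $W \mapsto \sigma_1(W)$ and $W \mapsto \sigma_n(W)$ are $1$-Lipschitz from $(\R^{n\times d}, \|\cdot\|_F)$ to $\R$ by Weyl's perturbation inequality,
\[
|\sigma_k(W) - \sigma_k(W')| \leq \|W - W'\|_{\mathrm{op}} \leq \|W - W'\|_F,
\]
so viewing $W$ as a standard Gaussian vector in $\R^{nd}$, the classical Gaussian concentration inequality for Lipschitz functions yields, for every $t > 0$,
\[
\pr\bigl(\sigma_1(W) > \E\sigma_1(W) + t\bigr) \leq e^{-t^2/2}, \qquad \pr\bigl(\sigma_n(W) < \E\sigma_n(W) - t\bigr) \leq e^{-t^2/2}.
\]
A union bound combined with the expectation bounds from the first step then delivers the desired conclusion with failure probability at most $2e^{-t^2/2}$.

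The only nontrivial ingredient is Gordon's comparison inequality in step one; the Lipschitz argument in step two and the Weyl bound are routine, as is the reduction to the case $d \geq n$. Consequently the main obstacle, were one to write this from scratch, is the Gaussian comparison lemma, which is however a standard textbook tool and the reason the paper simply cites \citet{Vershynin2010}.
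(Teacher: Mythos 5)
Your proof is correct and is exactly the standard Gordon-comparison-plus-Gaussian-Lipschitz-concentration argument from \citet{Vershynin2010} (Corollary 5.35 there); the paper itself offers no proof and simply cites that reference, so there is nothing to compare beyond noting you have reconstructed the cited argument faithfully. One small caution: your aside that ``$\E\|h\|_2 \geq \sqrt{d}$ type bounds suffice'' is false as written (Jensen gives $\E\|h\|_2 < \sqrt{d}$ strictly), and the correct route to $\E\|h\|_2 - \E\|g\|_2 \geq \sqrt{d}-\sqrt{n}$ is the monotonicity of $k \mapsto \sqrt{k}-\E\|g_k\|_2$, which is precisely what the standard statement of Gordon's theorem that you ultimately invoke already packages, so the slip is not load-bearing.
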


\begin{lem} \label{lem:gradbd_1}
Consider the setup of Theorem~\ref{thm:conc_Pi}.
We have that with probability at least $1 - c_1 e^{-c_2 n}$, for each fixed $v \in \R^d$,
\[
\|(ZD^2Z^T)^{-1}ZDv\|_2^2 \lesssim \frac{n}{p^2} \|Dv\|_2^2.
\]
\end{lem}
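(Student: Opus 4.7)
The plan is to lower-bound $\lambda_{\min}(ZD^2Z^T)$ by a quantity of order $p$, and to upper-bound $\|ZDv\|_2^2$ by a quantity of order $n\|Dv\|_2^2$, both with exponentially high probability. These then combine through the simple operator bound
\[
\|(ZD^2Z^T)^{-1}ZDv\|_2^2 \;\leq\; \|(ZD^2Z^T)^{-1}\|^2\,\|ZDv\|_2^2.
\]

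For the lower bound on $\lambda_{\min}(ZD^2Z^T)$, I would split the columns of $Z$ and the diagonal of $D$ according to the first $q$ and the last $p-q$ eigen-directions of $\Theta$: writing $Z=[Z_F\;Z_L]$ with $Z_L\in\R^{n\times(p-q)}$ and $D=\mathrm{diag}(D_F,D_L)$, one has
\[
ZD^2Z^T \;=\; Z_F D_F^2 Z_F^T + Z_L D_L^2 Z_L^T \;\succeq\; \sigma_l\, Z_L Z_L^T,
\]
since by Proposition~\ref{prop:basic_eval} the eigenvalues of $D_L^2$ lie in $[\sigma_l,\sigma_u]$ and $Z_FD_F^2 Z_F^T\succeq 0$. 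The matrix $Z_L$ has i.i.d.\ $\mathcal{N}(0,1)$ entries, so Lemma~\ref{lem:eval} (applied with a sufficiently small multiple of $\sqrt n$ in place of $t$) yields, with probability at least $1-2e^{-c_2 n}$, a bound of the form $\lambda_{\min}(Z_LZ_L^T)\gtrsim p$. Here the crucial input is that $p > c_3 n$ with $c_3>1$ and $q\lesssim n/\log(p)=o(p)$, so that $\sqrt{p-q}-(1+\epsilon)\sqrt n \gtrsim \sqrt p$ for small enough $\epsilon>0$; since $\sigma_l$ is bounded away from $0$ by Condition~\ref{cond}, we conclude $\lambda_{\min}(ZD^2Z^T)\gtrsim p$.

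For the upper bound on $\|ZDv\|_2^2$, one simply notes that $ZDv\sim\mathcal{N}_n(0,\|Dv\|_2^2 I_n)$, so $\|ZDv\|_2^2/\|Dv\|_2^2\sim\chi_n^2$ and standard chi-square concentration gives $\|ZDv\|_2^2 \leq 2n\|Dv\|_2^2$ with probability at least $1-e^{-c_2' n}$. Intersecting the two high-probability events via a union bound and combining gives
\[
\|(ZD^2Z^T)^{-1}ZDv\|_2^2 \;\lesssim\; \frac{1}{p^2}\cdot n\|Dv\|_2^2 \;=\; \frac{n}{p^2}\|Dv\|_2^2,
\]
as required.

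The only point of real care is Step~1: when $p/n$ is only barely larger than $1$, one must not be too wasteful when invoking Lemma~\ref{lem:eval}. This is handled by first discarding the at-most-$q$ spiked directions (where eigenvalues of $D^2$ may be arbitrarily large, but which contribute a positive semidefinite term that can only help) and then applying Lemma~\ref{lem:eval} to the well-behaved rectangular block $Z_L\in\R^{n\times(p-q)}$, using $q=o(n)$ to ensure $p-q\asymp p$. Everything else is a one-line chi-square tail bound and a trivial operator-norm inequality.
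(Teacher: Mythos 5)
Your proof is correct and follows essentially the same route as the paper: the trivial operator-norm bound, a lower bound of order $p$ on the smallest eigenvalue of $ZD^2Z^T$ via Lemma~\ref{lem:eval}, and a $\chi^2_n$ tail bound for $\|ZDv\|_2^2$. The only (harmless) difference is that your splitting off of the spiked block $Z_F$ is unnecessary: since every diagonal entry of $D^2$ is an eigenvalue of $\Theta$ and hence at least $\sigma_l$, the paper simply uses $ZD^2Z^T\succeq \sigma_l ZZ^T$ and applies Lemma~\ref{lem:eval} to the full matrix $Z$.
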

\begin{proof}
We have
\begin{align*}
\|(ZD^2Z^T)^{-1}ZDv\|_2^2 &\leq \|ZDv\|_2^2 \{\lambda_{\max}((ZD^2Z^T)^{-1})\}^2 \\
&\leq \|ZDv\|_2^2 \{\lambda_{\min}(D^2)\}^{-2} \{\lambda_{\min}(ZZ^T)\}^{-2}.
\end{align*}
Taking $t=p(1-c^{-1/2})/4$ in Lemma~\ref{lem:eval} (recall that $p > cn$ with $c > 1$), we know that with probability at least $1-e^{-c_1 p}$, we have $\lambda_{\min}^{1/2}(ZZ^T) \gtrsim \sqrt{p}$. Also, each component of $ZDv$ is independent and distributed as $\mathcal{N}(0, \|Dv\|_2^2)$. Thus from Lemma~\ref{lem:gen_chi_sq} we have
\[
\pr\left( \|ZDv\|_2^2 \gtrsim n\|Dv\|_2^2\right) \leq e^{-c_2 n}.
\]
Putting things together, we see that
\[
\|(ZD^2Z^T)^{-1}ZDv\|_2^2 \lesssim \frac{n}{p^2} \|Dv\|_2^2
\]
with probability at least $1 - c_1 e^{-c_2 n}$.
\end{proof}

\begin{lem} \label{lem:gradbd_2}Consider the setup of Theorem~\ref{thm:conc_Pi}.
There exist positive constants $c_1, c_2$ such that for all $t>0$ and fixed $v \in \R^d$, with probability at least $1-pe^{-t^2/2} - c_1pe^{-c_2n}$,
\[
\|D\{I-DZ^T(ZD^2Z^T)^{-1}ZD\}v\|_2^2 \lesssim \|Dv\|_2^2 + \frac{nt^2}{p^2}\|Dv\|_2^2 \sum_j \min\left(D_{jj}^4, \frac{p^2}{n^2} \right).
\]
\end{lem}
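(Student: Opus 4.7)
Via a Sherman--Morrison / leave-one-out decomposition, split off the $j$-th column of $Z$ to write $ZD^2Z^T = A_j + D_{jj}^2 Z_j Z_j^T$ with $A_j := Z_{-j}D_{-j,-j}^2 Z_{-j}^T$, and $ZDv = D_{jj}v_j Z_j + \tilde{Z}v_{-j}$ where $\tilde{Z} := Z_{-j}D_{-j,-j}$. Applying Sherman--Morrison and simplifying, the $j$-th entry of the vector inside the norm on the left-hand side equals
\[
\frac{D_{jj}v_j - D_{jj}^2 \beta_j}{1 + D_{jj}^2 \alpha_j}, \qquad \alpha_j := Z_j^T A_j^{-1} Z_j, \quad \beta_j := Z_j^T A_j^{-1} \tilde{Z}v_{-j}.
\]
Using $(a-b)^2 \leq 2a^2 + 2b^2$ and summing over $j$ bounds the left-hand side by $2\|Dv\|_2^2 + 2\sum_j D_{jj}^4 \beta_j^2/(1 + D_{jj}^2 \alpha_j)^2$, so the first term of the claimed bound is automatic and it remains to control the sum.

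Conditional on $Z_{-j}$, the Gaussian vector $Z_j$ is independent of both $A_j$ and $\tilde{Z}v_{-j}$, so $\beta_j \mid Z_{-j} \sim \mathcal{N}(0, \sigma_j^2)$ with $\sigma_j^2 = \|A_j^{-1}\tilde{Z}v_{-j}\|_2^2$. The key bound is $\sigma_j^2 \leq \|A_j^{-1}\|^2 \, \|\tilde{Z}v_{-j}\|_2^2$: the operator-norm factor is $\lesssim 1/p^2$ because $A_j \succeq \sigma_l Z_{-j}Z_{-j}^T$ under Condition~\ref{cond} and Lemma~\ref{lem:eval} gives $\lambda_{\min}(Z_{-j}Z_{-j}^T) \gtrsim p$ (as in the proof of Lemma~\ref{lem:gradbd_1}), while $\tilde{Z}v_{-j} \sim \mathcal{N}(0, \|D_{-j,-j}v_{-j}\|_2^2 I_n)$ unconditionally so Lemma~\ref{lem:gen_chi_sq} yields $\|\tilde{Z}v_{-j}\|_2^2 \lesssim n\|Dv\|_2^2$. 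A union bound over $j$ gives $\sigma_j^2 \lesssim n\|Dv\|_2^2/p^2$ uniformly on an event of probability at least $1 - c_1 p e^{-c_2 n}$, after which a Gaussian tail bound with union bound delivers $\beta_j^2 \leq t^2 \sigma_j^2 \lesssim t^2 n\|Dv\|_2^2/p^2$ for every $j$ with probability at least $1 - 2pe^{-t^2/2}$.

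The main step -- and principal obstacle -- is to obtain a matching lower bound $\alpha_j \gtrsim n/p$ uniformly in $j$, since this collapses $(1+D_{jj}^2\alpha_j)^2 \asymp \max(1,\, n^2 D_{jj}^4/p^2)$ and turns $D_{jj}^4/(1+D_{jj}^2\alpha_j)^2 \asymp \min(D_{jj}^4, p^2/n^2)$, producing exactly the advertised sum. A naive bound such as $A_j \preceq (\gamma_u+\sigma_u) Z_{-j}Z_{-j}^T$ only gives $\alpha_j \gtrsim n/(p(\gamma_u+\sigma_u))$, which is too weak when $D_{jj}^2$ is one of the large eigenvalues of $\Theta$. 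The trick is to invoke the AM--HM inequality $\mathrm{tr}(A_j^{-1}) \geq n^2/\mathrm{tr}(A_j)$: since $\mathrm{tr}(A_j) = \sum_{k \neq j} D_{kk}^2 \|Z_k\|_2^2$ concentrates around $n(\mathrm{tr}(D^2) - D_{jj}^2) \lesssim np$ by chi-squared tail bounds together with $\mathrm{tr}(D^2) = \mathrm{tr}(\Theta) \leq p \max_j \Theta_{jj} \lesssim p$ under Condition~\ref{cond}, we obtain $\mathrm{tr}(A_j^{-1}) \gtrsim n/p$. Hanson--Wright concentration of $\alpha_j$ around $\mathrm{tr}(A_j^{-1})$, with fluctuation of order $\sqrt{n}/p$ since $\|A_j^{-1}\|_F^2 \leq n/\lambda_{\min}(A_j)^2 \lesssim n/p^2$, then yields $\alpha_j \gtrsim n/p$ uniformly in $j$ with probability at least $1 - c_1 p e^{-c_2 n}$. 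Combining with the bound on $\beta_j^2$ and summing over $j$ gives the claimed inequality with the stated probability.
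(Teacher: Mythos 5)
Your proof is correct and follows essentially the same route as the paper's: the same Sherman--Morrison leave-one-out decomposition of $b_j$ into the two terms, the same conditional-Gaussian bound on the cross term via $\|A_j^{-1}\| \lesssim 1/p$ and $\|\tilde{Z}v_{-j}\|_2^2 \lesssim n\|Dv\|_2^2$, and the same key lower bound $\alpha_j = A_j^2 \gtrsim n/p$ that turns $D_{jj}^4/(1+D_{jj}^2\alpha_j)^2$ into $\min(D_{jj}^4, p^2/n^2)$. The only (immaterial) difference is in establishing that lower bound: you use the deterministic AM--HM inequality $\tr(A_j^{-1}) \geq n^2/\tr(A_j)$ plus concentration of $\tr(A_j)$ and Hanson--Wright for $\alpha_j$, whereas the paper's Lemma~\ref{lem:A_j_bd} obtains the same expectation bound via Jensen and concentrates $A_j^2$ directly through Lemma~\ref{lem:A_j_conc}.
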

\begin{proof}
Consider the $j$th component $b_j$ of $D\{I-DZ^T(ZD^2Z^T)^{-1}ZD\}v$.
Appealing to the Sherman--Morrison formula, we see that
\begin{align}
Z_j^T(ZD^2Z^T)^{-1} &= Z_j^T(Z_{-j}D_{-j,-j}^2Z_{-j}^T)^{-1} - \frac{D_{jj}^2 A_j^2}{1 + D_{jj}^2 A_j^2} Z_j^T(Z_{-j}D_{-j,-j}^2Z_{-j}^T)^{-1} \notag\\
&= \frac{1}{1 + D_{jj}^2 A_j^2} Z_j^T(Z_{-j}D_{-j,-j}^2Z_{-j}^T)^{-1}, \label{eq:Sherman-Morrison}
\end{align}
where $A_j^2 = Z_j^T(Z_{-j}D_{-j,-j}^2Z_{-j}^T)^{-1}Z_j$.
Thus we have
\[
b_j=D_{jj}v_j - D_{jj} \frac{D_{jj}Z_j^T(Z_{-j}D_{-j,-j}^2Z_{-j}^T)^{-1}ZDv}{1 + D_{jj}^2 A_j^2}.
\]
Next, writing $ZDV = Z_j D_{jj}v_j + Z_{-j}D_{-j,-j}v_{-j}$, we have
\[
b_j =  \frac{v_j D_{jj}}{1 + D_{jj}^2A_j^2} - \frac{D_{jj}^2Z_{j}^T(Z_{-j}D_{-j,-j}^2Z_{-j}^T)^{-1}Z_{-j}D_{-j,-j}v_{-j}}{1 + D_{jj}^2A_{j}^2} =: \text{I} - \text{II}.
\]
Considering the numerator of $\text{II}$, observe that
\[
Z_{j}^T(Z_{-j}D_{-j,-j}^2Z_{-j}^T)^{-1}Z_{-j}D_{-j,-j}v_{-j} | Z_{-j} \sim \mathcal{N}(0, \|(Z_{-j}D_{-j,-j}^2Z_{-j}^T)^{-1}Z_{-j}D_{-j,-j}v_{-j}\|_2^2).
\]
Thus with probability $1 - pe^{-t^2/2}$,
\begin{align*}
|Z_{j}^T(Z_{-j}D_{-j,-j}^2 Z_{-j}^T)^{-1}Z_{-j}D_{-j,-j}v_{-j}| &\leq t \|(Z_{-j}D_{-j,-j}^2Z_{-j}^T)^{-1}Z_{-j}D_{-j,-j}v_{-j}\|_2
\end{align*}
for all $j$.
From Lemma~\ref{lem:gen_chi_sq} and a union bound, we have
\[
 \pr(\|Z_{-j}D_{-j,-j}v_{-j}\|_2^2  \leq n(1+r) \|Dv\|_2^2) \geq 1 - pe^{nr^2/8}
\]
for $r \in (0, 1)$ and all $j$. Furthermore, Lemma~\ref{lem:eval} gives
\[
\lambda_{\max}\{(Z_{-j}D_{-j,-j}^2Z_{-j}^T)^{-1}\} \leq \lambda_{\min}^{-1}(D^2) \lambda_{\min}^{-1}(Z Z^T) \lesssim \frac{1}{p}\lambda_{\min}^{-1}(D^2)
 \]
with probability at least $1-e^{-cp}$.
Thus, we have that with probability at least $1-pe^{-t^2/2} - c_1pe^{-c_2n}$,
\[
|Z_{j}^T(Z_{-j}D_{-j,-j}^2Z_{-j}^T)^{-1}Z_{-j}D_{-j,-j}v_{-j}| \lesssim  t \frac{\sqrt{n}}{p}\|Dv\|_2\lambda_{\min}^{-1}(D^2)
\]
for all $j$.

We see from Lemma~\ref{lem:A_j_bd} that with probability at least $1-c_1pe^{-c_2 n}$, $A_j^2 \gtrsim  n/p$ for all $j$.
Thus with probability at least $1-c_1pe^{-c_2 n}$ we have
\[
\frac{D_{jj}^2}{1 + D_{jj}^2 A_j^2} \lesssim \min(D_{jj}^2, p/n).
\]

Putting things together we have that with probability at least $1-pe^{-t^2/2} - c_1pe^{-c_2n}$,
\[
| \text{II}| \lesssim \frac{t\sqrt{n}}{p} \|Dv\|_2 \min\left(D_{jj}^2, \frac{p}{n} \right).
\]
for all $j$.
Squaring and summing over $j$ we get
\[
\|b\|_2^2 \lesssim \|Dv\|_2^2 + \frac{nt^2}{p^2}\|Dv\|_2^2 \sum_j \min\left(D_{jj}^4, \frac{p^2}{n^2} \right).
\]
with probability at least $1-pe^{-t^2/2} - c_1pe^{-c_2 n}$.
\end{proof}

\begin{lem} \label{lem:A_j_bd}
Consider the setup of Lemma~\ref{lem:conc_Pi}.
Let $A_j^2 = Z_j^T (Z_{-j}D_{-j,-j}^2Z_{-j}^T)^{-1}Z_j$.
There exists $c_1>0$ such that for all $j$ we have $c_1^{-1}n/p>\E(A_j^2) \geq n / \tr(D^2) > c_1n/p$. Furthermore,
\[
\pr(\max_j |A_j^2 -\E A_j^2| > t \E A_j^2 ) \lesssim pe^{-c_3 n t^2} + e^{-c_4 p}.
\]
\end{lem}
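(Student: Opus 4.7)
The key observation is that $Z_j$ is independent of $Z_{-j}$, so conditional on $Z_{-j}$, $A_j^2 = Z_j^T M_j Z_j$ is a Gaussian quadratic form with $M_j := (Z_{-j}D_{-j,-j}^2 Z_{-j}^T)^{-1}$, giving $\E[A_j^2 \mid Z_{-j}] = \tr(M_j)$ and thus $\E A_j^2 = \E \tr(M_j)$. The proof splits into computing this mean and then showing the two-stage concentration $A_j^2 \to \tr(M_j) \to \E\tr(M_j)$.

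\textbf{Mean bounds.} For the lower bound I would apply AM--HM on the eigenvalues of $Z_{-j}D_{-j,-j}^2 Z_{-j}^T$, giving pointwise $\tr(M_j) \ge n^2/\tr(Z_{-j}D_{-j,-j}^2 Z_{-j}^T)$, then take expectations and use Jensen's inequality ($\E[1/X] \ge 1/\E X$) together with $\E\tr(Z_{-j}D_{-j,-j}^2 Z_{-j}^T) = n\tr(D_{-j,-j}^2) \le n\tr(D^2) \le c_2 np$ (the last by Condition~\ref{cond}, since $\tr(\Theta) = \sum_j \Var(x_j) \le c_2 p$). This yields $\E A_j^2 \ge n/\tr(D^2) \gtrsim n/p$. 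For the matching upper bound, I would use $D_{-j,-j}^2 \succeq \sigma_l I$ (from Condition~\ref{cond}, since the bottom eigenvalue of $D^2$ is at least $\sigma_l$ by Proposition~\ref{prop:basic_eval}), whence $M_j \preceq \sigma_l^{-1}(Z_{-j}Z_{-j}^T)^{-1}$, and the standard inverse-Wishart expectation $\E\tr((Z_{-j}Z_{-j}^T)^{-1}) = n/(p-n-2) \lesssim n/p$ (valid since $p > c_3 n$).

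\textbf{Conditional concentration.} Given $Z_{-j}$, apply the Hanson--Wright inequality to obtain, for any $s > 0$,
\[
\pr\bigl(|A_j^2 - \tr(M_j)| > s \,\big|\, Z_{-j}\bigr) \lesssim \exp\!\bigl(-c\min(s^2/\|M_j\|_F^2,\; s/\|M_j\|)\bigr).
\]
Define the good event $\mathcal{E}_j = \{\lambda_{\min}(Z_{-j}Z_{-j}^T) \ge c'p\}$, which by Lemma~\ref{lem:eval} (with $t \asymp p$) holds with probability at least $1 - 2e^{-c_2 p}$. On $\mathcal{E}_j$, $\|M_j\| \le (\sigma_l c' p)^{-1} \lesssim 1/p$ and $\|M_j\|_F^2 \le n\|M_j\|^2 \lesssim n/p^2$. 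Choosing $s = (t/2)\E A_j^2 \asymp tn/p$ and restricting to $t \le 1$, both arguments of the $\min$ are bounded below by constants times $t^2 n$, giving the desired $\exp(-c_3 n t^2)$ bound. A union bound over $j$ contributes the factor $p$ and pushes the bad-event contribution to $pe^{-c_2 p} \le e^{-c_4 p}$ (using Condition~\ref{cond}).

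\textbf{Concentration of $\tr(M_j)$ and main obstacle.} The remaining step is to show $|\tr(M_j) - \E\tr(M_j)| \le (t/2)\E A_j^2$ on a sub-event of probability $\ge 1 - e^{-c_4 p}$; this is the main difficulty. I would apply the paper's own Gaussian concentration result (Theorem~\ref{thm:grad_conc}) to $f(Z_{-j}) := \tr(M_j)$, using the gradient $\nabla f = -2 M_j^2 Z_{-j} D_{-j,-j}^2$ and controlling $\|\nabla f\|_F^2 = 4\tr(M_j^4 \, Z_{-j} D_{-j,-j}^4 Z_{-j}^T)$ on $\mathcal{E}_j$. After cyclic manipulation and the SVD of $Z_{-j} D_{-j,-j}$, this reduces to bounding $\tr(\Sigma^{-6} V^T D_{-j,-j}^2 V)$ where $\Sigma_{ii}^2 \gtrsim \sigma_l p$; the naive estimate gives $\|\nabla f\|_F^2 \lesssim n\lambda_{\max}(D^2)/p^3$, which after tracing through Theorem~\ref{thm:grad_conc} is just sharp enough provided one uses the high-probability bound on $\|\nabla f\|_F^2$ rather than a uniform Lipschitz constant (the whole point of Theorem~\ref{thm:grad_conc}). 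This is the delicate place where the condition $p \gg n$, together with the tail-bound form of the gradient entering, does the required work. Combining the three ingredients and union-bounding over $j$ yields the stated inequality.
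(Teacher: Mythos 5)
Your proposal is correct and follows essentially the same route as the paper: the mean bounds via Jensen and the inverse-Wishart formula (your AM--HM plus scalar Jensen is a cosmetic variant of the paper's operator-Jensen step), and the two-stage concentration $A_j^2 \to \tr(M_j) \to \E\tr(M_j)$, where your Hanson--Wright step is the paper's explicit Chernoff bound for Gaussian quadratic forms and your $\tr(M_j)$ step is exactly the paper's Lemma~\ref{lem:A_j_conc} built on Theorem~\ref{thm:grad_conc} with the same gradient bound $\lesssim \lambda_{\max}(D^2)\,n/p^3$.
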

\begin{proof}
We first bound $A_j^2$ from below. We have by Jensen's inequality
\begin{align*}
\E A_j^2 &= \E\tr \{(Z_{-j}D_{-j,-j}^2 Z_{-j}^T)^{-1}\} \\
&\geq \tr \{\E(Z_{-j}D_{-j,-j}^2Z_{-j}^T)\}^{-1} \\
&= \frac{n}{\tr(D_{-j,-j}^2)} \geq \frac{n}{\tr(D^2)}.
\end{align*}
Next, from Lemma~\ref{lem:mat_ineq1} we have $(Z_{-j}D_{-j,-j}^2 Z_{-j}^T)^{-1} \preceq \sigma_l^{-1}(Z_{-j} Z_{-j}^T)^{-1}$.
The first part of the result then follows from applying the formula for the mean of an inverse Wishart distribution.

Next we apply Lemma~\ref{lem:A_j_conc} taking $W=Z_{-j}$, $\Lambda = D_{-j,-j}^2$, which easily yields the final result.
\end{proof}

\subsection{Proof of Lemma~\ref{lem:conc_Pi}} \label{sec:res_final}
Let us write $Pb = u$ and $Pa=w$.
In order to apply Theorem~\ref{thm:grad_conc} we need an upper bound on the expectation of the $\ell_2$-norm squared of the gradient; such a bound is given by Lemma~\ref{lem:grad_comp} as
\begin{equation} \label{eq:exp_grad}
2 \E \left(\|B u\|_2^2 \|Cw\|_2^2 + \|C u\|_2^2 \|B w \|_2^2 \right) =: \text{I},
\end{equation}
where
\begin{align*}
B &=(ZD^2Z^T)^{-1} Z D \\
C &=D(I-DZ^T(ZD^2 Z^T)^{-1}Z D).
\end{align*}
Now we have
\begin{align*}
\|Cu\|_2^2 &\leq \lambda_{\max}(D^2) \|\{I-DZ^T(ZD^2Z^T)^{-1}ZD\}u\|_2^2 \\
&\leq \lambda_{\max}(\Theta)\|u\|_2^2,
\end{align*}
using the fact that $I-DZ^T(ZD^2Z^T)^{-1}ZD$ is a projection in the last line.
It thus remains to bound $\E \|B w\|_2^2 $. Note that by Lemma~\ref{lem:mat_ineq1},
\begin{align*}
\|B w\|_2^2 \leq \lambda_{\min}^{-2}(D^2)\|(ZZ^T)^{-1}Z Dw\|_2^2.
\end{align*}
Next observe that
\[
(ZZ^T)^{-1}Z v \eqdist (ZZ^T)^{-1}ZRv
\]
for any orthogonal matrix $R \in \R^{p \times p}$. By choosing a rotation on to the $j$th unit vector $e_j$, we see that
\[
\|(ZZ^T)^{-1}Z Dw\|_2^2 \eqdist e_j^TZ^T(ZZ^T)^{-2}Ze_j \|Dw\|_2^2.
\]
Since this holds for all $j$, we have
\[
\E\|(ZZ^T)^{-1}Z D w\|_2^2 = \frac{1}{p} \E \{\tr(ZZ^T)^{-1}\}\|Dw\|_2^2 = \frac{1}{p-n-1}\|Dw\|_2^2,
\]
using the formula for the mean of an inverse Wishart distribution.

Putting things together we have that
\[
\text{I} \lesssim \frac{\lambda_{\max}(\Theta)(\|u\|_2^2 \|D w\|_2^2  +\|w\|_2^2 \|D u\|_2^2)}{p-n-1} \lesssim \|\Theta^{1/2} a \|_2^2 \|\Theta^{1/2} b \|_2^2
\]
using that $p \gtrsim \tr(\Theta) \gtrsim \lambda_{\max}(\Theta)$, and $\lambda_{\min}(\Theta) \geq \sigma_l \gtrsim 1$.

We can now apply Theorem~\ref{thm:grad_conc}. Adopting the notation from that result, let us take $L$ to be the bound for
\begin{equation} \label{eq:grad_bound}
\|B u\|_2^2 \|Cw\|_2^2 + \|C u\|_2^2 \|B w \|_2^2
\end{equation}
given by 
sum of the products of the bounds from Lemmas~\ref{lem:gradbd_1} and~\ref{lem:gradbd_2}. The latter bound requires a choice of $t^2$ which we take as $c_1 \log(p)$ for a suitable constant $c_1>0$. With this choice, we have with high probability that
\begin{align*}
\|Cw\|_2^2 \lesssim \|D w \|_2^2  + \frac{n \log(p)\|D w\|_2^2}{p^2}\left(q \frac{p^2}{n^2} + \sum_{j : D_{jj}^2 \leq \sigma_u} D_{jj}^4 \right).
\end{align*}
Now the sum on the right is maximised when as many of the $D_{jj}^4$ as possible take the value $\sigma_u^2$ subject to $\sum_j D_{jj}^2 \lesssim p$; thus is it bounded above by a constant times $p \sigma_u$. Using the facts that $\sigma_u \lesssim p/(n \log(p))$, $q \lesssim n / \log(p)$, we see that with high probability $\|Cw\|_2^2 \lesssim \|D w \|_2^2$. Our bound for \eqref{eq:grad_bound} thus takes the form a constant times
\[
\frac{n}{p^2} \|D u \|_2^2 \|D w\|_2^2 = \frac{n}{p^2} \|\Theta^{1/2} a \|_2^2 \|\Theta^{1/2} b\|_2^2.
\]
We may therefore take $L$ equal to the above multiplied by a constant $c_2 > 0$ to obtain
\[
(1 + L^{-1/2}) \sqrt{\text{I} \cdot \Psi(L) / L} \lesssim \frac{1}{np^r} + e^{-c_3 n}
\]
for any fixed $r>0$ (by taking $c_1$ sufficiently large).
Applying Theorem~\ref{thm:grad_conc} gives the result.

\subsection{Auxiliary lemmas}
\begin{lem} \label{lem:A_j_conc}
Let $W \in \R^{n \times d}$ and $u \in \R^n$ have independent $\mathcal{N}(0, 1)$ entries, and suppose $d \geq c_1n$ with $c_1 > 1$. Let $\Lambda \in \R^{d \times d}$ be a symmetric positive definite matrix with $\lambda_{\min}(\Lambda)$ bounded away from 0, $\lambda_{\max}(\Lambda)$ bounded above by $c_2 d$ for some constant $c_2>0$. Then
\[
\pr\left(|u^T(W\Lambda W^T)^{-1}u -\E\{\tr(W\Lambda W^T)^{-1}\}| > t\right) \lesssim e^{-c_3t^2d^2/n} + e^{-c_4 d}.
\]
with all constants depending only on $\lambda_{\max}(\Lambda), \lambda_{\min}(\Lambda)$ and $c_1$.
\end{lem}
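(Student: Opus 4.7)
The plan is to split the target into a quadratic form fluctuation with $W$ fixed and a fluctuation of the trace itself,
\[
u^T M^{-1} u - \E \tr(M^{-1}) = \bigl[u^T M^{-1} u - \tr(M^{-1})\bigr] + \bigl[\tr(M^{-1}) - \E \tr(M^{-1})\bigr],
\]
where $M := W \Lambda W^T$, and then bound each piece by $t/2$ on a common high-probability event $E$ for $W$. I would take $E := \{\lambda_{\min}(WW^T) \geq c d\}$ for a suitable constant $c>0$; Lemma~\ref{lem:eval} applied with $d \geq c_1 n$, $c_1>1$, gives $\pr(E^c) \lesssim e^{-c_4 d}$. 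On $E$, Lemma~\ref{lem:mat_ineq1} yields $\lambda_{\min}(M) \geq \lambda_{\min}(\Lambda) \lambda_{\min}(WW^T) \gtrsim d$, so $\|M^{-1}\| \lesssim 1/d$ and $\|M^{-1}\|_F^2 \leq n\|M^{-1}\|^2 \lesssim n/d^2$.

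For the first piece I would condition on $W \in E$ and apply the Hanson--Wright inequality to the Gaussian quadratic form $u^T M^{-1} u$: since $u$ is independent of $W$ and $M^{-1}$ is symmetric, the operator/Frobenius norm bounds above give
\[
\pr\bigl(|u^T M^{-1} u - \tr(M^{-1})| > t/2 \,\big|\, W\bigr) \lesssim \exp\!\bigl(-c\min(t^2 d^2/n,\; t d)\bigr),
\]
uniformly on $E$, which is sub-Gaussian in $t$ in the regime $t \lesssim n/d$; outside this regime the sub-exponential tail is dominated by $e^{-c_4 d}$, so the target bound for this piece follows after accounting for $\pr(E^c)$.

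For the second piece I would apply the Gaussian concentration variant Theorem~\ref{thm:grad_conc} to $f(W) := \tr((W\Lambda W^T)^{-1})$, viewed as a function of the $nd$ independent standard Gaussian entries of $W$. A direct calculation (in the style of Lemma~\ref{lem:grad_comp}) gives $\nabla_W f = -2 M^{-2} W \Lambda$, hence
\[
\|\nabla_W f\|_F^2 = 4 \tr\bigl(M^{-4} W \Lambda^2 W^T\bigr).
\]
The key step is the PSD inequality $\Lambda^2 \preceq \|\Lambda\|\,\Lambda$, which yields $W\Lambda^2 W^T \preceq \|\Lambda\| M$ and therefore
\[
\|\nabla_W f\|_F^2 \leq 4\|\Lambda\|\tr(M^{-3}) \leq \frac{4 n \|\Lambda\|}{\lambda_{\min}(M)^3} \lesssim \frac{n}{d^2}
\]
on $E$, using $\|\Lambda\| \leq c_2 d$. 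Taking $L \asymp n/d^2$ in Theorem~\ref{thm:grad_conc} (the requirement that $\E\|\nabla_W f\|_F^2$ be finite is immediate since $d \geq c_1 n$ with $c_1 > 1$ gives more than enough inverse Wishart moments) produces $\pr(|\tr(M^{-1}) - \E\tr(M^{-1})| > t/2) \lesssim e^{-c_3 t^2 d^2 / n} + e^{-c_4 d}$. A final union bound combining the two pieces with $\pr(E^c)$ delivers the claim.

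The main obstacle is the gradient bound in the third step: the naive estimate $\|\nabla_W f\|_F^2 \leq 4\|\Lambda\|^2 \tr(M^{-2}) \lesssim \|\Lambda\|^2 n/d^2$ is a full factor of $d$ too large when $\|\Lambda\| \asymp d$, yielding only $O(1)$ scale concentration for $\tr(M^{-1})$ which is insufficient whenever $d^2 \gg n$. Reabsorbing the inner quadratic form $W\Lambda^2 W^T$ into a multiple of $M = W\Lambda W^T$ via $\Lambda^2 \preceq \|\Lambda\|\Lambda$, so that three of the four copies of $M^{-1}$ collapse into $\tr(M^{-3})$, is what recovers the correct $n/d^2$ rate.
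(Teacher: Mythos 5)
Your proposal is correct and matches the paper's proof in all essentials: the same gradient computation $\nabla f(W) = -2(W\Lambda W^T)^{-2}W\Lambda$, the same key collapse $W\Lambda^2 W^T \preceq \lambda_{\max}(\Lambda)\,W\Lambda W^T$ giving $\|\nabla f(W)\|_2^2 \leq 4\lambda_{\max}(\Lambda)\tr\{(W\Lambda W^T)^{-3}\} \lesssim n/d^2$ via Lemma~\ref{lem:mat_ineq1}, Lemma~\ref{lem:eval} and inverse-Wishart moments, and an application of Theorem~\ref{thm:grad_conc} with $L \asymp n/d^2$. The only organisational difference is that you bound the two fluctuations separately by $t/2$ (using Hanson--Wright for the quadratic form conditional on the good event for $W$), whereas the paper folds both into a single truncated-MGF Chernoff argument via Lemma~\ref{lem:gen_chi_sq} and the bound \eqref{eq:L_mgf_bd}; the two routes are equivalent up to constants, and both share the same harmless imprecision in the crossover regime $n/d \ll t \ll 1$, which is irrelevant for how the lemma is used.
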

\begin{proof}
Let us write $(W \Lambda W^T)^{-1} = A$, and let $\lambda \in \R^n$ be the vector of eigenvalues of $A$. We may assume, without loss of generality, that $\Lambda$ is diagonal as $W \eqdist WR$ for all orthogonal matrices $R$. Note that $\E(u^T A u|A) = \tr(A)$.
The standard Chernoff method gives us that
\begin{align} \label{eq:Chernoff}
\pr\left(u^TAu -\E \tr(A) \geq t |A \right) \leq \E [\exp \{\alpha(u^TA u - \E \tr(A) -t)\}|A] \wedge 1
\end{align}
for all $\alpha > 0$.
Note that conditional on $A$, $u^T A u \eqdist \sum \lambda_j u_j^2$, a weighted sum of independent $\chi^2$ random variables.
Thus using Lemma~\ref{lem:gen_chi_sq} we have that the RHS of the last display is bounded above by
\[
\left( \exp\left(2\alpha^2 \|\lambda\|_2^2 \right) \exp\{ \alpha(\tr A - \E\tr A)\} e^{-\alpha t} \vee \ind_{\{|\alpha| > \|\lambda\|_\infty^{-1} /4\}}\right) \wedge 1.
\]
As $\ind_{\{|\alpha| > \|\lambda\|_\infty^{-1}/4\}}\ind_{\{\|\lambda\|_\infty \leq s\}} = 0$ when $|\alpha| \leq s^{-1}/4$ we see the above display is in turn is bounded above by
\[
\exp\left(2\alpha^2 n s^2 \right) \exp\{ \alpha(\tr A - \E\tr A)\} e^{-\alpha t} \wedge 1 + \ind_{\{\|\lambda\|_\infty > s\}}
\]
provided $|\alpha| \leq s^{-1}/4$.

With a view to applying Lemma~\ref{lem:Lipschitz} to bound the expectation of the first term, observe that if $f(W) = \tr(W\Lambda W^T)^{-1}$, then
\[
\nabla f(W) = -2(W \Lambda W^T)^{-2} W \Lambda.
\]
Thus
\begin{align}
\|\nabla f (W) \|_2^2 &=  4\tr\{(W\Lambda W^T)^{-4}(W\Lambda^2 W^T)\} \notag\\
&\leq 4 \lambda_{\max}(\Lambda) \tr\{(W\Lambda W^T)^{-3}\} \notag\\ \label{eq:grad_bd}
&\leq 4 \lambda_{\max}(\Lambda) \lambda_{\min}^{-3}(\Lambda) \tr\{(WW^T)^{-3}\}.
\end{align}
Expectations of moments of inverse Wishart distributions are computed in \citet{Dietrich_von_Rosen1988-nc}. From here we have that
\[
\E\tr\{(WW^T)^{-3}\} = \frac{8n}{(d-n+1)(d-n-1)(d-n-3)} \lesssim \frac{n}{d^3}
\]
whence
\[
\E \|\nabla f (W) \|_2^2 \lesssim \frac{n}{d^2}.
\]
From Lemma~\ref{lem:eval} we know that with probability at least $1-c_2e^{-c_2 d}$, we have $\lambda_{\min}^{1/2}(WW^T) \gtrsim \sqrt{d}$.
Using $\tr\{(WW^T)^{-3}\} \leq n \lambda_{\min}^{-3}(WW^T)$, we have from \eqref{eq:grad_bd} that
\begin{align*}
\pr(\|\nabla f (W) \|_2^2 \gtrsim n/d^2) \lesssim e^{-c_1 d}.
\end{align*}
Applying \eqref{eq:L_mgf_bd} of Theorem~\ref{thm:grad_conc} we have
\begin{align*}
\E\left[e^{2\alpha^2 n s^2 - \alpha t} \exp\{ \alpha(\tr A - \E\tr A)\} \wedge 1 \right] \lesssim e^{2\alpha^2 n s^2 - \alpha t} \exp( c_1 \alpha^2  n/d^2) + e^{-c_2 d}.
\end{align*}
Next
\[
\|\lambda\|_\infty = \lambda_{\max}\{(W\Lambda W^T)^{-1}\} \leq \lambda_{\min}^{-1}(\Lambda) \lambda_{\min}^{-1}(WW^T),
\]
so Lemma~\ref{lem:eval} gives
\[
\pr(\|\lambda\|_\infty > c_1/d) \lesssim e^{-c_2 d}
\]
for some $c_1, c_2 > 0$.
Taking $s=c_1/d$, and returning to \eqref{eq:Chernoff} we have
\begin{align*}
\pr(u^TAu - \E\tr(A)\geq t) &\leq \inf_{0<\alpha<c_1d}\exp(c_2 \alpha^2 n/d^2 - \alpha t) + c_3e^{-c_4d} \\
&\leq e^{-c_5t^2d^2/n} + c_2 e^{-c_3 d}.
\end{align*}
We now repeat the argument replacing $u^TAu -\E \tr(A)$ by $\E \tr(A)-u^TAu$ in \eqref{eq:Chernoff}.
\end{proof}

\begin{lem} \label{lem:gen_chi_sq}
Let $W \sim \mathcal{N}_d(0, I)$ and let $a \in (0,\infty)^d$ be a vector. Then
\begin{align*}
\E \bigg[ \exp\bigg\{\alpha \bigg(\sum_{j=1}^d a_jW_j^2 - \|a\|_1\bigg)\bigg\}\bigg] &\leq \exp(2\alpha^2 \|a\|_2^2), \\
\pr\bigg( \bigg|\sum_{j=1}^d a_jW_j^2 - \|a\|_1\bigg| \geq t\|a\|_2^2 \bigg) &\leq 2e^{-\|a\|_2^2 t^2/8}
\end{align*}
for $|\alpha| \leq \|a\|_\infty^{-1}/4$ and $0<t< \|a\|_\infty^{-1}$.
\end{lem}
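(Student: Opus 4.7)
The approach will be the standard Chernoff/Bernstein route, exploiting the fact that $\sum_j a_j W_j^2$ is a weighted sum of independent $\chi^2_1$ random variables with an explicit moment generating function. First I will use independence to write
\[
\E \exp\Bigl\{\alpha \sum_{j=1}^d a_j W_j^2\Bigr\} = \prod_{j=1}^d (1-2\alpha a_j)^{-1/2},
\]
valid whenever $2\alpha a_j < 1$ for all $j$, using the standard MGF of a $\chi^2_1$ variable. Centering and taking logarithms gives
\[
\log \E\exp\Bigl\{\alpha \Bigl(\sum_j a_jW_j^2 - \|a\|_1\Bigr)\Bigr\} = \sum_{j=1}^d \Bigl(-\tfrac{1}{2}\log(1-2\alpha a_j) - \alpha a_j\Bigr).
\]

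Next I will control each summand via Taylor expansion: for $|x| \leq 1/2$ we have
\[
-\tfrac{1}{2}\log(1-x) - \tfrac{x}{2} = \sum_{k \geq 2}\frac{x^k}{2k} \leq \frac{x^2}{4(1-|x|)} \leq \frac{x^2}{2}.
\]
Applying this with $x = 2\alpha a_j$, which satisfies $|x| \leq 1/2$ under the hypothesis $|\alpha| \leq \|a\|_\infty^{-1}/4$, I obtain $-\tfrac{1}{2}\log(1-2\alpha a_j) - \alpha a_j \leq 2\alpha^2 a_j^2$. Summing over $j$ yields the MGF bound $\exp(2\alpha^2 \|a\|_2^2)$, establishing the first inequality.

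For the tail bound I will apply the Chernoff method: for any $\alpha \in (0, \|a\|_\infty^{-1}/4]$,
\[
\pr\Bigl(\sum_j a_j W_j^2 - \|a\|_1 \geq t\|a\|_2^2\Bigr) \leq \exp\bigl(2\alpha^2 \|a\|_2^2 - \alpha t \|a\|_2^2\bigr).
\]
Optimising over $\alpha$ gives the unconstrained minimiser $\alpha^* = t/4$, which lies in the admissible range precisely when $t \leq \|a\|_\infty^{-1}$; substituting back produces the exponent $-t^2 \|a\|_2^2/8$. The symmetric argument with $\alpha < 0$ handles the lower tail, and a union bound yields the factor of $2$.

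I do not anticipate any genuine obstacle here: the result is a textbook Bernstein-type concentration for quadratic forms in Gaussians, and the only subtlety is bookkeeping to ensure that the admissible range of $\alpha$ is compatible with both the Taylor remainder bound and the Chernoff optimiser, which is exactly why the constraint $t < \|a\|_\infty^{-1}$ appears in the statement.
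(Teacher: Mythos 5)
Your proof is correct and follows essentially the same route as the paper: both factor the moment generating function over the independent $\chi^2_1$ summands and reduce the first bound to the elementary inequality $e^{-\alpha}(1-2\alpha)^{-1/2}\leq e^{2\alpha^2}$ for $|\alpha|\leq 1/4$ (which you prove via Taylor expansion, while the paper simply cites it), then apply the Chernoff method with the optimiser $\alpha^*=t/4$ for the tail bound. No gaps.
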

\begin{proof}
Using the facts that $\E e^{\alpha W_1^2} = (1-2\alpha)^{-1/2}$ for $\alpha < 1/2$ and $e^{-\alpha}/\sqrt{1-2\alpha}\leq e^{2\alpha^2}$ for $|\alpha|<1/4$, we have
\begin{align*}
\E \bigg[ \exp\bigg\{\alpha \bigg(\sum_{j=1}^d a_jW_j^2 - \|a\|_1\bigg)\bigg\}\bigg] &= \prod_{j=1}^d \{(1-2\alpha a_j)^{-1/2}e^{-\alpha a_j}\} \leq e^{2\alpha^2 \|a\|_2^2}
\end{align*}
for $|\alpha|\|a\|_\infty  \leq 1/4$.
The final bound follows easily by the Chernoff method.
\end{proof}

\begin{lem} \label{lem:mat_ineq1}
Let $A \in \R^{n \times d}$ and $B \in \R^{d \times d}$ be a symmetric positive definite matrix. Suppose $AA^T$ is invertible. Then for $k > 0$ we have
\[
(ABA^T)^{-k} \preceq \lambda_{\min}(B)^{-k}(AA^T)^{-k}.
\]
\end{lem}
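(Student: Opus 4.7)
My plan is to reduce the statement to the single matrix inequality $ABA^T \succeq \lambda_{\min}(B)\, AA^T$ and then invoke the Löwner--Heinz theorem on operator antitonicity of the map $X \mapsto X^{-k}$ on positive definite matrices.

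First, let $\beta := \lambda_{\min}(B)$, which is strictly positive since $B$ is positive definite. By the spectral theorem applied to the symmetric matrix $B$, we have $B \succeq \beta I$. For every $y \in \R^n$, setting $z := A^T y \in \R^d$ one obtains
\[
y^T\!\left(ABA^T - \beta AA^T\right)\! y \;=\; z^T(B - \beta I)z \;\geq\; 0,
\]
so $ABA^T \succeq \beta\, AA^T$. The assumption that $AA^T$ is invertible (and hence positive definite, since it is automatically positive semidefinite) combined with $\beta > 0$ ensures both sides are strictly positive definite, so inverse powers are well defined.

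Next I would invoke the Löwner--Heinz inequality: for $r \in (0, 1]$, the map $X \mapsto X^{-r}$ is operator antitone on the positive definite cone, i.e.\ $X \succeq Y \succ 0$ implies $X^{-r} \preceq Y^{-r}$. Applying this with $X = ABA^T$, $Y = \beta AA^T$ and $r = k$ yields
\[
(ABA^T)^{-k} \;\preceq\; (\beta AA^T)^{-k} \;=\; \beta^{-k}(AA^T)^{-k},
\]
which is exactly the claimed inequality, since scalar multiples commute with the matrix functional calculus.

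The main obstacle I anticipate is that operator antitonicity of $X \mapsto X^{-k}$ on positive definite matrices holds \emph{a priori} only for $k \in (0,1]$; for $k > 1$ the map is not operator antitone in general, so an additional argument would be required to reach the full range. The $k = 1$ case is the most transparent instance (inversion reverses the Löwner order directly), and this already suffices for the downstream applications of the lemma in the paper.
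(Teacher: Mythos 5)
Your argument is correct for $k \in (0,1]$ and takes a genuinely different route from the paper: you derive $ABA^T \succeq \lambda_{\min}(B)\,AA^T$ directly from the quadratic form and then invoke operator antitonicity of $X \mapsto X^{-k}$, whereas the paper reduces via the SVD $A = UDV^T$ to the inequality $(V^TBV)^{-k} \preceq \lambda_{\min}(B)^{-k} I$ conjugated by $D^{-k}$. More importantly, your reservation about $k>1$ is not just an artefact of your proof strategy: the statement itself is false for $k>1$. Take
\[
A = \begin{pmatrix} 1 & 1 \\ 0 & 1 \end{pmatrix}, \qquad B = \begin{pmatrix} 1 & 0 \\ 0 & 4\end{pmatrix}, \qquad \lambda_{\min}(B)=1,
\]
so that $AA^T = \left(\begin{smallmatrix} 2 & 1 \\ 1 & 1\end{smallmatrix}\right)$ and $ABA^T = \left(\begin{smallmatrix} 5 & 4 \\ 4 & 4\end{smallmatrix}\right)$. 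Then
\[
(ABA^T)^2 - (AA^T)^2 = \begin{pmatrix} 41 & 36 \\ 36 & 32 \end{pmatrix} - \begin{pmatrix} 5 & 3 \\ 3 & 2\end{pmatrix} = \begin{pmatrix} 36 & 33 \\ 33 & 30 \end{pmatrix},
\]
which has determinant $-9 < 0$, so $(AA^T)^2 \not\preceq (ABA^T)^2$ and hence, inverting, $(ABA^T)^{-2} \not\preceq \lambda_{\min}(B)^{-2}(AA^T)^{-2}$. The paper's own proof commits the corresponding error at the step $(ABA^T)^{-k} = UD^{-k}(V^TBV)^{-k}D^{-k}U^T$: this identity holds only for $k=1$, since $D^{-1}$ and $(V^TBV)^{-1}$ need not commute; equivalently, the argument implicitly uses that $X \preceq Y$ implies $X^k \preceq Y^k$, which fails for $k>1$.

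Your closing remark that the $k=1$ case suffices downstream is essentially right, but needs one more observation: the paper does invoke the lemma with $k=2$ and $k=3$ (e.g.\ for $\tr(A^{-2})$ in the proof of Theorem~\ref{thm:exp_Pi} and for $\tr\{(WW^T)^{-3}\}$ in Lemma~\ref{lem:A_j_conc}), but only ever through traces. The trace version $\tr\{(ABA^T)^{-k}\} \leq \lambda_{\min}(B)^{-k}\tr\{(AA^T)^{-k}\}$ does hold for every $k>0$, because your inequality $ABA^T \succeq \lambda_{\min}(B)\,AA^T$ forces $\lambda_i(ABA^T) \geq \lambda_{\min}(B)\,\lambda_i(AA^T)$ for each $i$ by the Courant--Fischer characterisation, and one may then compare eigenvalues termwise. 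So your first step, combined with eigenvalue monotonicity rather than operator monotonicity, recovers everything the paper actually needs; the Löwner-order conclusion of the lemma should simply be restricted to $k \in (0,1]$ or restated as a trace inequality.
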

\begin{proof}
Let the SVD of $A$ be given by $A=UDV^T$ where $U \in \R^{n \times n}$, $D \in \R^{n \times n}$ and $V \in \R^{d \times n}$. Then $D^{-k}(V^T B V)^{-k} D^{-k}  \preceq \lambda_{\min}^{-k}(B) D^{-2k}$ as $\lambda_{\min}(B) = \lambda_{\min}(V^T B V)$. But then
\[
(ABA^T)^{-k} = U D^{-k}(V^T B V)^{-k} D^{-k}U^T \preceq \lambda_{\min}^{-k}(B)U  D^{-2k} U^T = \lambda_{\min}(B)^{-k}(AA^T)^{-k}. \qedhere
\]
\end{proof}

\section{Proof of Theorem~\ref{thm:exp_Pi} and derivation of \eqref{eq:exact1}} \label{sec:exp_Pi}
From \eqref{eq:Sigma_Z_dist} and Proposition~\ref{prop:spec_trans}, we know that
\[
\E \hat{\Sigma}_{\rsvp} = \E PDZ^T(ZD^2Z^T)^{-1}ZDP^T=: PC^2P^T.
\]
Here $Z \in \R^{n\times p}$ has i.i.d.\ $\mathcal{N}(0, 1)$ entries and $C$ is diagonal.
In what follows, we will make frequent use of the following notation: $Z_j \in \R^n$ will be the $j$th column of $Z$, and $Z_{-j} \in \R^{n \times (p-1)}$ and $Z_{-jk} \in \R^{n \times (p-2)}$ for $j \neq k$ will be a copies of $Z$ excluding the $j$th, and $j$th and $k$th columns respectively. Also, given a square matrix $M \in \R^{p \times p}$, $M_{-j,-j} \in \R^{(p-1) \times (p-1)}$ and $M_{-jk,-jk} \in \R^{(p-2) \times (p-2)}$ will be copies of $M$ excluding the $j$th, and $j$th and $k$th rows and columns

From \eqref{eq:Sherman-Morrison} we have that
\begin{equation} \label{eq:C_j}
C^2_{jj} = \E \left(\frac{D_{jj}^2A_j^2}{1 + D_{jj}^2A_j^2}\right),
\end{equation}
where $A_j^2 = Z_{j}^T(Z_{-j}D_{-j,-j,}^2Z_{-j}^T)^{-1}Z_j$.
Using the inequality $(1+x)^{-1} \geq 1-x$ we have
\begin{align} \label{eq:C_j_bd_all}
D_{jj}^2 \E A_j^2 - D_{jj}^4 \E A_j^4 \leq C_{jj}^2 \leq D_{jj}^2 \E A_j^2.
\end{align}
We note that Lemma~\ref{lem:A_j4bd} below shows $\E A_j^4 \lesssim n^2/p^2$ uniformly in $j$. For $j \geq q+1$ Proposition~\ref{prop:basic_eval} gives us that $D_{jj}^2 \leq \sigma_u$. Thus we have that
\begin{equation} \label{eq:C_j_bd}
\E A_j^2 - \sigma_u \frac{n^2}{p^2} \lesssim \frac{C_{jj}^2}{D_{jj}^2} \leq \E A_j^2
\end{equation}
for all $j \geq q$.

\subsection{Proof of Theorem~\ref{thm:exp_Pi}}
We now turn to the proof of Theorem~\ref{thm:exp_Pi}. For $j,k \geq q+1$ we have from \eqref{eq:C_j_bd} that
\[
\frac{C_{jj}^2}{D_{jj}^2} - \frac{C_{kk}^2}{D_{kk}^2}  \lesssim \E A_j^2 - \E A_k^2 + \sigma_u \frac{n^2}{p^2}.
\]
Nw
\begin{align*}
\E (A_j^2 - A_k^2) = \E\tr\{(Z_{-j}D_{-j,-j}^2Z_{-j}^T)^{-1} - (Z_{-k}D_{-k,-k}^2Z_{-k}^T)^{-1}\}.
\end{align*}
Let us write $A = Z_{-jk}D_{-jk,-jk}^2Z_{-jk}^T$.
The Sherman--Morrison formula gives us that
\begin{align*}
\E \tr\{(Z_{-j}D_{-j,-j}^2Z_{-j}^T)^{-1} - (Z_{-k}D_{-k,-k}^2Z_{-k}^T)^{-1}\} &\leq \max_j \E \left(\frac{D_{jj}^2 Z_j^T A^{-2} Z_j}{1 + D_{jj}^2 Z_j^T A^{-1} Z_j}\right) \\
&\leq \sigma_u \E \tr(A^{-2}).
\end{align*}
By Lemma~\ref{lem:mat_ineq1}, $(Z_{-jk}D_{-jk,-jk}^2Z_{-jk}^T)^{-2} \preceq \sigma_{l}^{-2} (Z_{-jk} Z_{-jk}^T)^{-2}$, so $\tr(A^{-2}) \lesssim n/p^2$ using the formula for the second moment of an inverse Wishart \citep{Dietrich_von_Rosen1988-nc}. Putting things together we have
\[
\max_{j,k \in \{q+1,\ldots, p\}} \abs{\frac{C_{jj}^2}{D_{jj}^2} - \frac{C_{kk}^2}{D_{kk}^2}} \lesssim \sigma_u \frac{n^2}{p^2}.
\]

\subsection{Derivation of \eqref{eq:exact1}}
We must bound the expectation of $\tr \{(Z_{-j} D_{-j,-j}^2 Z_{-j}^T)^{-1}\}$ from above and below.

By Proposition~\ref{prop:basic_eval}, $D^2$ has its first $q$ diagonal entries in $[\gamma_l - \sigma_u, \gamma_u + \sigma_u]$ with the remaining diagonal entries in $[\sigma_l, \sigma_u]$. Let us fix $j \geq q+1$. To simply notation, let us write $D^2_{-j,-j} = \Lambda$ and let $\Lambda_F$ be the $q \times q$ top left submatrix of $D^2_{-j,-j,}$ , and let $\Lambda_L$ be the bottom right submatrix containing the remaining entries of $D^2_{-j,-j}$. Let us also write $W=Z_{-j}$ and $W_F \in \R^{n \times q}$ for the submatrix of $W$ consisting of the first $q$ columns of $Z_{-j}$, and let $W_L$ be the remaining $p-q-1$ columns.
We may decompose $W \Lambda W^T$ as follows.
\begin{align*}
(W \Lambda W^T)^{-1} &= (W_L \Lambda_L W_L^T)^{-1} - (W_L \Lambda_L W_L^T)^{-1}W_F\{\Lambda_F^{-1} + W_F^T (W_L \Lambda_L W_L^T)^{-1} W_F\}^{-1} W_F^T (W_L \Lambda_L W_L^T)^{-1} \\
&=: \text{I} - \text{II}.
\end{align*}
Now by Lemma~\ref{lem:eval_conc}, we know there exist constants $c_1$ and $c_2$ depending only on $\sigma_l$ and $\sigma_u$ such that with probability $1-e^{-c_1n}$
\begin{align} \label{eq:eig_bd}
\tr(\Lambda_L) - c_2\sqrt{n\tr(\Lambda_L)} \leq \lambda_{\min}(W\Lambda W^T) \leq \lambda_{\max}(W\Lambda W^T) \leq \tr(\Lambda_L) + c_2\sqrt{n\tr(\Lambda_L)}.
\end{align}
For all $r>0$, we have
\begin{align} \label{eq:eig_arg}
\tr\{(W_L \Lambda_L W_L^T)^{-1}\} \leq \frac{n}{r} \ind_{\{\lambda_{\min}(W_L \Lambda_L W_L^T) \geq r\}} +  \E[\tr\{(W_L \Lambda_L W_L^T)^{-1}\}]\ind_{\{\lambda_{\min}(W_L \Lambda_L W_L^T) < r\}}.
\end{align}
Taking expectations, setting $r=\tr(\Lambda_L) - c_2\sqrt{n\tr(\Lambda_L)}$ and using the Cauchcy--Schwarz inequality for the second term, we have
\begin{align*}
\E\tr\{(W_L \Lambda_L W_L^T)^{-1}\} \leq \frac{n}{\tr(\Lambda_L) - c_1\sqrt{n\tr(\Lambda_L)}} + \sqrt{\E\left([\tr\{(W_L \Lambda_L W_L^T)^{-1}\}]^2\right)} c_2 e^{-c_3 n}.
\end{align*}
Now
\begin{equation} \label{eq:I_bd1}
\E\left([\tr\{(W_L \Lambda_L W_L^T)^{-1}\}]^2\right) \leq \sigma_l^{-2}\E [\{\tr(\Omega)\}^2]
\end{equation}
where $\Omega$ has an inverse Wishart with $p-q-1$ degrees of freedom. From \citet{Dietrich_von_Rosen1988-nc} we have
\begin{equation} \label{eq:I_bd2}
\E [\{\tr(\Omega)\}^2] \leq n^2 \E(\Omega_{11}^2) = \frac{4n^2}{(p-q-n)(p-q-n-2)}.
\end{equation}
By Jensen's inequality we also have the lower bound
\[
\E\tr\{(W_L \Lambda_L W_L^T)^{-1}\} \geq \tr[ \{\E(W_L \Lambda_L W_L^T)\}^{-1}] = \frac{n}{\tr(\Lambda_L)}.
\]
Putting things together we obtain
\begin{align*}
\frac{n}{\tr(\Lambda_L)} \leq \E\tr\{(W_L \Lambda_L W_L^T)^{-1}\} \leq \frac{n}{\tr(\Lambda_L)}\{1 + O(\sqrt{n/p})\},
\end{align*}
using the fact that $\tr(\Lambda_L) \geq (p-q-2)\sigma_l \gtrsim p$.
We now turn to $\text{II}$. We have
\begin{align*}
\tr(\text{II}) &\leq \lambda_{\max}^2\{(W_L \Lambda_L W_L^T)^{-1}\} \tr[W_F\{\Lambda_F^{-1}+W_F^T(W_L \Lambda_L W_L^T)^{-1}W_F\}^{-1} W_F^T] \\
&\leq \frac{\lambda_{\max}(W_L \Lambda_L W_L^T)}{\lambda_{\min}^2(W_L \Lambda_L W_L^T)}q
\end{align*}
using Lemma~\ref{lem:tr_bd2}. Note that
$\text{II} \leq \tr\{(W_L \Lambda_LW_L^T)^{-1}\}$, so \eqref{eq:I_bd1} and \eqref{eq:I_bd2} provide an upper bound on $\E(\text{II}^2)$.
Considering events on which the inequality \eqref{eq:eig_bd} occurs and arguing as in \eqref{eq:eig_arg}, we may then arrive at
\[
\E\tr(\text{II}) \leq \frac{q}{\tr(\Lambda_L)}\{1 + O(\sqrt{n/p})\}.
\]
For the lower bound we again appeal to Lemma~\ref{lem:tr_bd2} to obtain
\begin{align*}
\tr(\text{II}) &\geq \lambda_{\min}^2\{(W_L \Lambda_L W_L^T)^{-1}\} \tr[W_F\{\Lambda_F^{-1}+W_F^T(W_L \Lambda_L W_L^T)^{-1}W_F\}^{-1} W_F^T] \\
&\geq \frac{\lambda_{\min}(W_L \Lambda_L W_L^T)}{\lambda_{\max}^2(W_L \Lambda_L W_L^T)}[q-\tr\{(I + \lambda_{\min}^{-1}(W_L \Lambda_L W_L^T)\Lambda_F^{1/2}W_F^TW_F\Lambda_F^{1/2})^{-1}\}].
\end{align*}
Now
\begin{align*}
\tr\{(I + \lambda_{\min}^{-1}(W_L \Lambda_L W_L^T)\Lambda_F^{1/2}W_F^TW_F\Lambda_F^{1/2})^{-1}\} &\leq q \lambda_{\min}^{-1}(I + \lambda_{\min}^{-1}(W_L \Lambda_L W_L^T)\Lambda_F^{1/2}W_F^TW_F\Lambda_F^{1/2}) \\
&\leq q \{1 + \lambda_{\min}^{-1}(W_L \Lambda_L W_L^T) \lambda_{\min}(\Lambda_F) \lambda_{\min}(W_F^TW_F)\}^{-1}.
\end{align*}
We also have from Lemma~\ref{lem:eval} that $\lambda_{\min}(W_F^TW_F) \geq n - c_2\sqrt{nq}$ with probability at least $1-e^{c_3 n}$.
When \eqref{eq:eig_bd} occurs and also $\lambda_{\min}(W_F^TW_F) \geq n - c_2\sqrt{nq}$ we have
\begin{align*}
\tr(\text{II}) &\geq q \frac{\tr(\Lambda_L)-c\sqrt{n\tr(\Lambda_L)}}{(\tr(\Lambda_L)+c\sqrt{n\tr(\Lambda_L)})^2}\frac{(\gamma_l -\sigma_u)(n - c_2\sqrt{nq})}{(\gamma_l-\sigma_u) (n - c_2\sqrt{nq}) + c_3\{p + c_4\sqrt{np}\}} \\
&\geq \frac{q}{\tr(\Lambda_L)} \left\{1 + O(\sqrt{n/p}) + O(p/(\gamma_l n))\right\}.
\end{align*}
Noting that the event in question has a probability decreasing exponentially in $n$, we arrive at
\[
\E\tr\{(W\Lambda W^T)^{-1}\} = \frac{n-q}{\tr(\Lambda_L)} \left\{ 1 + O\left(\sqrt{\frac{n}{p}}\right) + O\left(\frac{p}{\gamma_l n}\right)\right\}.
\]
Thus substituting in \eqref{eq:C_j_bd} we have that for all $j \geq q+1$,
\[
\frac{C_{jj}^2}{D_{jj}^2} = \frac{n-q}{\tr(\Lambda_L)} \left\{ 1 + O\left(\sqrt{\frac{n}{p}}\right) + O\left(\frac{p}{\gamma_l n}\right)\right\}.
\]

\subsection{Auxilliary lemmas}
\begin{lem} \label{lem:A_j4bd}
Consider the setup of Theorem~\ref{thm:exp_Pi} and define
\[
A_j^2 = Z_j^T(Z_{-j} D_{-j,-j}^2Z_{-j}^T)^{-1} Z_j.
\]
We have
\[
\E A_j^4 \lesssim \frac{n^2}{p^2}.
\]
\end{lem}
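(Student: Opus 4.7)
My plan is to condition on $Z_{-j}$ and then exploit the fact that $A_j^2 = Z_j^T M_j Z_j$ is a quadratic form in the Gaussian vector $Z_j \sim \mathcal{N}_n(0,I)$, where $M_j := (Z_{-j} D_{-j,-j}^2 Z_{-j}^T)^{-1}$ is independent of $Z_j$. The standard identity for moments of Gaussian quadratic forms gives
\[
\E(A_j^4 \mid Z_{-j}) = \{\tr(M_j)\}^2 + 2\,\tr(M_j^2),
\]
so it suffices to bound $\E\{(\tr M_j)^2\}$ and $\E\tr(M_j^2)$ each by a constant times $n^2/p^2$ and $n/p^2$ respectively.

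Next, I would remove the dependence on $D$ by invoking Lemma~\ref{lem:mat_ineq1}, which yields $M_j \preceq \sigma_l^{-1} (Z_{-j} Z_{-j}^T)^{-1}$. Since $\sigma_l$ is bounded away from $0$ by Condition~\ref{cond}, this gives the uniform bounds $\tr(M_j) \lesssim \tr\{(Z_{-j}Z_{-j}^T)^{-1}\}$ and $\tr(M_j^2) \lesssim \tr\{(Z_{-j}Z_{-j}^T)^{-2}\}$. The matrix $(Z_{-j}Z_{-j}^T)^{-1}$ is an $n\times n$ inverse Wishart with $p-1$ degrees of freedom, so the required moments can be read off from the formulas of \citet{Dietrich_von_Rosen1988-nc} already used elsewhere in the supplement. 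Since $p > c_3 n$ with $c_3>1$, we have $p - n - O(1) \asymp p$, and the formulas give
\[
\E\tr\{(Z_{-j}Z_{-j}^T)^{-2}\} \lesssim \frac{n}{p^2}, \qquad \E\big[\{\tr(Z_{-j}Z_{-j}^T)^{-1}\}^2\big] \lesssim \frac{n^2}{p^2}.
\]

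Combining these via the tower property yields
\[
\E A_j^4 \lesssim \frac{n^2}{p^2} + \frac{n}{p^2} \lesssim \frac{n^2}{p^2},
\]
as claimed. There is no real obstacle here; the only thing to be careful about is keeping track of which inverse-Wishart moment to quote (the squared trace $\E[(\tr\Omega)^2]$ rather than $\E\tr(\Omega^2)$ for the first term, since the $(\tr M_j)^2$ piece dominates), and verifying that the condition $p > c_3 n$ makes every denominator of the form $p - n - O(1)$ comparable to $p$ up to constants, which is immediate.
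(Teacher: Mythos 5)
Your proposal is correct and follows essentially the same route as the paper: both bound $(Z_{-j}D_{-j,-j}^2Z_{-j}^T)^{-1}\preceq\sigma_l^{-1}(Z_{-j}Z_{-j}^T)^{-1}$ via Lemma~\ref{lem:mat_ineq1}, reduce $\E A_j^4$ to $\E\{\tr(\Omega)\}^2+\E\tr(\Omega^2)$ for the inverse Wishart $\Omega=(Z_{-j}Z_{-j}^T)^{-1}$ (the paper by expanding the fourth Gaussian moment termwise, you by quoting the quadratic-form variance identity conditionally on $Z_{-j}$), and finish with the moment formulas of von Rosen. The only cosmetic difference is the order in which the matrix domination and the Gaussian moment computation are applied.
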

\begin{proof}
From Lemma~\ref{lem:mat_ineq1} we have $(Z_{-j} D_{-j,-j}^2Z_{-j}^T)^{-1} \preceq \sigma_l^{-1} (Z_{-j}Z_{-j}^T)^{-1}$. Fix $j$ and let $\Omega = (Z_{-j}Z_{-j}^T)^{-1}$ and $u = Z_j$. We have
\begin{align*}
\E A_j^4 &\lesssim \E (u^T\Omega u)^2 \\
&= \sum_{j,k,l,m} \E (u_j \Omega_{jk}u_ku_l\Omega_{lm}um) \\
&\leq \sum_{j,k} \E u_j \Omega_{jj}u_ju_k \Omega_{kk} u_k + 2 \sum_{j,k} \E(u_j\Omega_{jk}u_k)^2 \\
&\lesssim \E\{\tr(\Omega)\}^2 + \E \tr(\Omega^2) \\
&\lesssim \frac{n^2}{p^2}
\end{align*}
using results on inverse Wishart distributions from \citet{Dietrich_von_Rosen1988-nc}, and specifically Corollary 3.1 (v) in that paper.
\end{proof}

\begin{lem} \label{lem:eval_conc}
Let $W \in \R^{n \times d}$ have independent $\mathcal{N}(0, 1)$ entries. Let $\Lambda \in \R^{d \times d}$ be a symmetric positive definite matrix.
Define $\tr(\Lambda) = N$ and $\tr(\Lambda^2) = M$. Suppose $\lambda_{\max}(\Lambda) \sqrt{8\log(9) n/M} < c$ for constant $c < 1$.
There exist constants $c_1, c_2, c_3>0$ such that with probability at least $1-c_1 e^{-c_2n}$ we have
\[
N - c_3\sqrt{nM} \leq \lambda_{\min}(W\Lambda W^T) \leq \lambda_{\max}(W\Lambda W^T) \leq N + c_3\sqrt{nM}.
\]
\end{lem}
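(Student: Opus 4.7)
The plan is a standard $\varepsilon$-net argument combined with a pointwise concentration bound from Lemma~\ref{lem:gen_chi_sq}. First, by rotational invariance of the standard Gaussian (if $\Lambda = U D U^T$ with $U$ orthogonal and $D$ diagonal, then $W \Lambda W^T = (WU) D (WU)^T$ and $WU \eqdist W$), I may assume without loss of generality that $\Lambda$ is diagonal. Crucially, this reduction preserves $N$, $M$, $\lambda_{\max}(\Lambda)$, and the eigenvalues of $W\Lambda W^T$. Define $A := W\Lambda W^T - NI$; the goal is to show $\|A\| \lesssim \sqrt{nM}$ with high probability.

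For a fixed unit vector $v \in \R^n$, I would observe that $W^T v \sim \mathcal{N}_d(0, I)$, so that $v^T W \Lambda W^T v = \sum_{j=1}^d \Lambda_{jj} g_j^2$ with $g \sim \mathcal{N}_d(0, I)$. Applying Lemma~\ref{lem:gen_chi_sq} with the choice $a = (\Lambda_{11}, \ldots, \Lambda_{dd})$ (so that $\|a\|_1 = N$, $\|a\|_2^2 = M$, $\|a\|_\infty = \lambda_{\max}(\Lambda)$) yields
\[
\pr\bigl(|v^T A v| \geq tM\bigr) \leq 2 e^{-M t^2 / 8} \qquad \text{for } 0 < t < \lambda_{\max}(\Lambda)^{-1}.
\]

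Next, I would take a $1/4$-net $\mathcal{N}_{1/4}$ of the unit sphere in $\R^n$ of cardinality at most $9^n$. The standard approximation argument gives
\[
\|A\| \leq \frac{16}{7} \sup_{v \in \mathcal{N}_{1/4}} |v^T A v|,
\]
since $1 - 2(1/4) - (1/4)^2 = 7/16$. A union bound over $\mathcal{N}_{1/4}$ therefore yields
\[
\pr\bigl(\|A\| \geq (16/7)\,tM\bigr) \leq 2 \cdot 9^n e^{-M t^2 / 8}
\]
provided $t < \lambda_{\max}(\Lambda)^{-1}$.

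Finally, I would set $t = \alpha \sqrt{n/M}$, so the bound reads $2 \exp\{n(\log 9 - \alpha^2/8)\}$ and $(16/7)tM = (16/7)\alpha \sqrt{nM}$. The main (and only) delicate point is to choose $\alpha$ both large enough that $\alpha^2/8 > \log 9$ (to produce exponential decay $e^{-c_2 n}$) and small enough that the constraint $t < \lambda_{\max}(\Lambda)^{-1}$ holds. The latter amounts to $\alpha < \lambda_{\max}(\Lambda)^{-1}\sqrt{M/n}$, and the hypothesis $\lambda_{\max}(\Lambda)\sqrt{8\log(9)\,n/M} < c$ rearranges to $\lambda_{\max}(\Lambda)^{-1}\sqrt{M/n} > \sqrt{8\log 9}/c$. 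Since $c < 1$, the interval $\bigl(\sqrt{8\log 9},\ \sqrt{8\log 9}/c\bigr)$ is nonempty, and any $\alpha$ strictly inside it (say $\alpha = \sqrt{8\log 9 \cdot (1+\delta)}$ with $\delta$ small depending on $c$) works. Taking $c_3 := (16/7)\alpha$ and $c_2 := (\alpha^2/8 - \log 9)$ then delivers the bound on $\lambda_{\max}(W\Lambda W^T - NI)$, which simultaneously controls both $\lambda_{\max}(W\Lambda W^T)$ from above and $\lambda_{\min}(W\Lambda W^T)$ from below by $\pm \|A\|$ around $N$.
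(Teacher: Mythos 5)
Your proposal is correct and follows essentially the same route as the paper's proof: pointwise concentration of $v^T W\Lambda W^T v$ via Lemma~\ref{lem:gen_chi_sq}, a $1/4$-net of $S^{n-1}$ of size at most $9^n$ with a union bound, and a choice of $t \asymp \sqrt{n/M}$ that exploits the hypothesis $\lambda_{\max}(\Lambda)\sqrt{8\log(9)n/M} < c < 1$ to stay within the validity range $t < \lambda_{\max}(\Lambda)^{-1}$ of the tail bound. The only cosmetic differences are your net-approximation constant ($16/7$ instead of the factor $2$ from Vershynin's Lemma 5.4) and your slightly more explicit bookkeeping of why $c<1$ leaves room for the exponent to beat $\log 9$.
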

\begin{proof}
Let $\mathcal{N}$ be a $1/4$-net of $S^{n-1}$. From Lemma 5.4 of \citet{Vershynin2010} we have that $|\mathcal{N}| \leq 9^n$ and
\begin{equation} \label{eq:eps_net}
\max_{v: \|v\|_2=1} |v^TW\Lambda W^Tv/N - 1| \leq 2 \max_{v \in \mathcal{N}} |v^TW\Lambda W^Tv/N - 1|.
\end{equation}
By Lemma~\ref{lem:gen_chi_sq}, we have that for each fixed $v$
\[
\pr\left(|v^TW\Lambda W^Tv - N| > tM \right) \leq 2e^{-Mt^2/8}
\]
for $t < \lambda_{\max}^{-1}(\Lambda)$.
Thus
\begin{align*}
\pr(\max_{v \in \mathcal{N}}|v^TW\Lambda W^Tv - N| > tM) \leq 2 \cdot 9^n \cdot e^{-Mt^2/8}.
\end{align*}
Choosing $t = \sqrt{8 \log(9) n/M}/c$ and appealing to \eqref{eq:eps_net} we see that
\begin{align*}
\pr\left( \max_{x:\|x\|_2=1} |x^TW\Lambda W^Tx / N - 1| \leq c \sqrt{nM}/N \right) \leq 2e^{-c_1 n}.
\end{align*}
From this the result follows easily.
\end{proof}


\begin{lem} \label{lem:tr_bd2}
Let $A \in \R^{n\times d}$ and let $B \in \R^{n \times n}$ be symmetric positive semi-definite. Then
\begin{align*}
\frac{1}{\lambda_{\max}(B)}[d-\tr\{(I+\lambda_{\max}(B)A^T A)^{-1}\}] &\leq \tr\{A(I+A^TBA)^{-1}A^T\} \\
&\leq \frac{1}{\lambda_{\min}(B)}[d-\tr\{(I+\lambda_{\min}(B)A^T A)^{-1}\}].
\end{align*}
\end{lem}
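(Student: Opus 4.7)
The plan is to establish an operator inequality relating $(I+A^TBA)^{-1}$ to $(I+cA^TA)^{-1}$ for the two extremal choices $c\in\{\lambda_{\min}(B),\lambda_{\max}(B)\}$, conjugate by $A$, take traces, and then reduce the resulting trace to a scalar quantity.

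First I would use that $B$ is symmetric positive semi-definite to write $\lambda_{\min}(B)I\preceq B\preceq \lambda_{\max}(B)I$. Multiplying by $A^T$ on the left and $A$ on the right, and adding $I$, yields
\[
I+\lambda_{\min}(B)A^TA\;\preceq\; I+A^TBA\;\preceq\; I+\lambda_{\max}(B)A^TA.
\]
All three matrices are positive definite (the shift by $I$ ensures this, assuming $\lambda_{\min}(B)>0$ for the lower end), so inverting reverses the L\"owner order:
\[
(I+\lambda_{\max}(B)A^TA)^{-1}\;\preceq\;(I+A^TBA)^{-1}\;\preceq\;(I+\lambda_{\min}(B)A^TA)^{-1}.
\]
Conjugating by $A$ preserves the L\"owner order, and taking traces therefore gives
\[
\tr\{A(I+\lambda_{\max}(B)A^TA)^{-1}A^T\}\leq \tr\{A(I+A^TBA)^{-1}A^T\}\leq \tr\{A(I+\lambda_{\min}(B)A^TA)^{-1}A^T\}.
\]

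It remains to evaluate $\tr\{A(I+cA^TA)^{-1}A^T\}$ for $c>0$. By cyclicity, this equals $\tr\{A^TA(I+cA^TA)^{-1}\}$. Since $A^TA$ commutes with $(I+cA^TA)^{-1}$, one has the identity $cA^TA(I+cA^TA)^{-1}=I-(I+cA^TA)^{-1}$, so taking traces and dividing by $c$ gives
\[
\tr\{A(I+cA^TA)^{-1}A^T\}=\frac{1}{c}\bigl[d-\tr\{(I+cA^TA)^{-1}\}\bigr].
\]
Applying this with $c=\lambda_{\max}(B)$ on the left of the trace chain and $c=\lambda_{\min}(B)$ on the right yields exactly the stated bounds. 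The only subtlety is justifying the inversions when $\lambda_{\min}(B)=0$; in the usage of this lemma (where $B$ arises as an inverse Wishart-type matrix), $B$ is strictly positive definite almost surely, and the degenerate case $\lambda_{\min}(B)=0$ is recovered by taking $c\downarrow 0$, which turns the upper bound into the trivial $\tr(AA^T)$. There is no real obstacle; the identity in the final display is the crux and it is a one-line manipulation.
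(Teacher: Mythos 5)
Your proposal is correct and follows essentially the same route as the paper's proof: establish the L\"owner sandwich $I+\lambda_{\min}(B)A^TA\preceq I+A^TBA\preceq I+\lambda_{\max}(B)A^TA$, invert, conjugate by $A$, take traces, and then reduce $\tr\{A(I+cA^TA)^{-1}A^T\}$ to $c^{-1}[d-\tr\{(I+cA^TA)^{-1}\}]$ via the identity $cA^TA(I+cA^TA)^{-1}=I-(I+cA^TA)^{-1}$, which is the same one-line manipulation the paper performs by adding and subtracting $\lambda_{\min}^{-1}(B)I$. Your extra remark on the degenerate case $\lambda_{\min}(B)=0$ is a reasonable (and harmless) addition that the paper does not address.
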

\begin{proof}
It is easy to see that $I + \lambda_{\min}(B) A^T A \preceq I+A^TBA \preceq I + \lambda_{\max}(B) A^T A$ whence $(I + \lambda_{\min}(B) A^T A)^{-1} \succeq (I+A^TBA)^{-1} \succeq (I + \lambda_{\max}(B) A^T A)^{-1}$. Therefore we also have
\[
A(I + \lambda_{\min}(B) A^T A)^{-1}A^T \succeq A(I+A^TBA)^{-1}A^T \succeq A(I + \lambda_{\max}(B) A^T A)^{-1}A^T.
\]
It remains only to show that
\[
\tr\{A(I + \lambda_{\min}(B) A^T A)^{-1}A^T\} =  \frac{1}{\lambda_{\min}(B)}[d-\tr\{(I+\lambda_{\min}(B)A^T A)^{-1}\}]
\]
and a similar equality involving $\lambda_{\max}(B)$. We have
\begin{align*}
\lambda_{\min}(B)\tr\{A(I + \lambda_{\min}(B) A^T A)^{-1}A^T\}  &= \tr[A^TA\{\lambda_{\min}^{-1}(B)I +  A^T A\}^{-1}] \\
&= \tr[\{A^TA + \lambda_{\min}^{-1}(B)I - \lambda_{\min}^{-1}(B)I\}\{\lambda_{\min}^{-1}(B)I +  A^T A\}^{-1}\} \\
&= d -\lambda_{\min}^{-1}(B) \tr[\{\lambda_{\min}^{-1}(B)I +A^TA\}^{-1}] \\
&= d - \tr[\{I + \lambda_{\min}(B)A^TA\}^{-1}].
\end{align*}
A similar argument involving $\lambda_{\max}(B)$ completes the proof of the result.
\end{proof}

\section{Proofs of Theorems~\ref{thm:Sigma_rec} and \ref{thm:Sigma_rec_bag}}
\subsection{Proof of Theorem~\ref{thm:Sigma_rec}}

Let $P_F \in \R^{p \times q}$ and $P_L \in \R^{p \times (p-q)}$ be the matrices of first $q$ and last $p-q$ eigenvectors of $\Theta$. Also let $D_F^2 \in \R^{q \times q}$ and $D_L^2 \in \R^{(p-q)\times (p-q)}$ be the top left and bottom right submatrices of $D^2$ respectively.

We have
\[
\Sigma - \kappa\E \hat{\Sigma}_{\rsvp} = \Sigma - P_LP_L^T\Theta P_LP_L^T + P_LP_L^T\Theta P_LP_L^T - \kappa \E \hat{\Sigma}_{\rsvp}
\]
for each $\kappa$.
Proposition~\ref{prop:pop} gives us that
\begin{equation}
\|\Sigma - P_LP_L^T\Theta P_LP_L^T\|_\infty \lesssim \rho_1 \rho_2 + \rho_1^2/\gamma_l + \gamma_u \rho_1^2 / \gamma_l^2.
\end{equation}
It remains to bound
\begin{align*}
\inf_\kappa \|P_LP_L^T\Theta P_LP_L^T - \kappa \E \hat{\Sigma}_{\rsvp}\|_\infty &= \inf_\kappa \|P_L D_L^2 P_L^T - \kappa \E \hat{\Sigma}_{\rsvp}\|_\infty.
\end{align*}
Now from Theorem~\ref{thm:exp_Pi} we have that $\E \hat{\Sigma}_{\rsvp} = P C^2 P^T$ where $C$ is diagonal, so the RHS of the display above is bounded above by
\begin{equation*}
\inf_\kappa \{\|P_L (D_L^2-\kappa C_L^2)P_L^T\| + \kappa\|P_FC_F^2P_F^T\|_\infty\} = \inf_\kappa \{\max_{j=q+1,\ldots,p} |D_{jj}^2 - \kappa C_{jj}^2| + \kappa \|P_FC_F^2P_F^T\|_\infty\}.
\end{equation*}
Here $C_L$ and $C_F$ are defined analogously to $D_L$ and $D_F$.
Let us set $k=q+1$ and take
\begin{equation} \label{eq:kappa_def}
\kappa = D_{kk}^2 / C_{kk}^2.
\end{equation}
Combining \eqref{eq:C_j_bd} and Lemma~\ref{lem:A_j_bd}, we see that
\[
\kappa^{-1} \gtrsim \frac{n}{p} - \sigma_u\frac{n^2}{p^2}.
\]
Thus using the the fact that $\sigma_u =o(p/n)$, we have $\kappa \lesssim p/n$.
Then
\begin{align*}
|D_{jj}^2 - \kappa C_{jj}^2| &= \kappa D_{jj}^2|C_{kk}^2/D_{kk}^2 - C_{jj}^2 / D_{jj}^2| \\
&\lesssim \sigma_u n/p,
\end{align*}
from Theorem~\ref{thm:exp_Pi}.

We now obtain an upper bound on $\lambda_{\max}(C_F)$. From \eqref{eq:C_j} we clearly have $C_{jj}^2 \leq 1$. Also given any random variable $V \in \R$, we have by Jensen's inequality that $1/(1+\E V) \leq \E\{1/(1+V)\}$. Thus Lemma~\ref{lem:A_j_conc} gives us that
\[
C_{jj}^2 \leq \frac{D_{jj}^2}{\tr(D^2)/n + D_{jj}^2}.
\]
Now let $Q \in \R^{p \times q}$ be the matrix of left singular vectors of $\Gamma$. We have
\begin{align*}
\|P_F C_F^2 P_F^T \|_\infty \leq \max_j \|C_F P_F^T e_j\|_2^2 \leq \lambda_{\max}(C_F)^2 \max_j \|P_F^T e_j\|_2^2 \lesssim \min(1, n\gamma_u / p)  \max_j \|P_F^T e_j\|_2^2,
\end{align*}
using \eqref{eq:DK} and \eqref{eq:l_max_2}.

Now
\begin{align*}
\|P_F^T e_j\|_2 &\leq \|P_F^T QQ^T e_j\|_2 + \|P_F^T (I-QQ^T)e_j\|_2 \\
&\leq \|QQ^T e_j\|_2 + \|P_F^T (I-QQ^T)\| \\
&\leq \rho_2 + \rho_1 / \gamma_l
\end{align*}
using \eqref{eq:l_max_3} in the final line.
Thus we finally arrive at
\begin{align} \label{eq:bias_fin}
\|\Sigma - \kappa \E \hat{\Sigma}_{\rsvp}\|_\infty \lesssim \frac{\gamma_u \rho_1^2}{\gamma_l^2} + \rho_1 \rho_2  + \min\left(\frac{p}{n}, \gamma_u\right)\rho_2^2 + \sigma_u \frac{n}{p}.
\end{align}
Then from Theorem~\ref{thm:conc_Pi}, we may conclude that with probability at least $1-c/p$, we have
\[
\|\Sigma - \kappa \hat{\Sigma}_{\rsvp}\|_\infty \lesssim \frac{\gamma_u \rho_1^2}{\gamma_l^2} + \rho_1 \rho_2 + \min\left(\frac{p}{n}, \gamma_u\right)\rho_2^2 + \sigma_u \frac{n}{p} + \sqrt{\frac{\log(p)}{n}}.
\]
\subsection{Proof of Theorem~\ref{thm:Sigma_rec_bag}}
The proof of this result makes use of the proof of Theorem~\ref{thm:Sigma_rec}, and we will refer to equations presented in the previous subsection.
Let $\hat{\Sigma}^{(b)}$ be the RSVP estimate constructed from the $b$th subsample. Note that $\hat{\Sigma}^{(b)}$ are i.i.d.\ with mean $\check{\Sigma} := \E \Sigma^{(1)}$. Let $\kappa$ be defined in relation to $\check{\Sigma}$ as in \eqref{eq:kappa_def}. Then we know that $\kappa \lesssim p/m$ as $\sigma_u = o(p/m)$ by assumption. We have
\begin{align} \label{eq:subsamp_decomp}
\norm{\Sigma - \frac{\kappa}{B} \sum_{b=1}^B \hat{\Sigma}^{(b)}}_\infty \leq \|\Sigma - \kappa \check{\Sigma}\|_\infty + \frac{\kappa}{B}\norm{ \sum_{b=1}^B(\hat{\Sigma}^{(b)} - \check{\Sigma})}_\infty.
\end{align}
From \eqref{eq:bias_fin} we have that
\begin{align} \label{eq:subsamp_bias}
\|\Sigma - \kappa \E \hat{\Sigma}_{\rsvp}\|_\infty \lesssim \frac{\gamma_u \rho_1^2}{\gamma_l^2} + \rho_1 \rho_2  + \min\left(\frac{p}{m}, \gamma_u\right)\rho_2^2 + \sigma_u^2 \frac{m}{p}.
\end{align}
The second term in \eqref{eq:subsamp_decomp} involves an average of i.i.d.\ mean-zero random matrices $\hat{\Sigma}^{(b)} - \check{\Sigma}$; let us fix $j,k \in \{1,\ldots,p\}$ and let $W_b$ be the $jk$th entry. Now as $\hat{\Sigma}^{(b)}$ is a projection matrix, we have $|W_b| \leq 1$. Also from Lemma~\ref{lem:conc_Pi} we have that for any fixed $r>0$, there exists $c_1>0$ sufficiently large and $c_2 > 0$ (both independent of $m$) such that
\[
\pr(|W_b| \geq c_1 \sqrt{m \log p}/p) \lesssim p^{-r} + e^{-c_2 m}.
\]
Here and below, $\lesssim$ signs contain hidden constants that do not depend on $m$. Applying Lemma~\ref{lem:Hoeffding_var} to $W_1,\ldots,W_B$ we have
\[
\pr\left(\abs{\frac{1}{B}\sum_{b=1}^B W_b} \geq t \right) \lesssim \exp\left(-\frac{Bp^2 t^2}{c_3 m \log(p)}\right)+ B p^{-r}
\]
where $c_3 > 0$ is a constant depending on $r$. Taking $r$ sufficiently large and applying a union bound, we have that with probability at least $1-c p^{-1}$, for some constant $c>0$,
\[
\frac{1}{B}\norm{ \sum_{b=1}^B(\hat{\Sigma}^{(b)} - \check{\Sigma})}_\infty \lesssim \frac{m}{p}\frac{\log(p)}{\sqrt{mB}}.
\]
Substituting this and \eqref{eq:subsamp_bias} into \eqref{eq:subsamp_decomp} gives the result.

\subsection{Auxiliary lemmas}
\begin{lem} \label{lem:Hoeffding_var}
Let $W_1,\ldots,W_n$ be i.i.d.\ mean-zero random variables. Suppose $\pr(|W_i| \geq \tau) \leq \alpha$ and $|W_i| \leq M$ almost surely. Then provided $t > 2M\alpha$, we have
\[
\pr\left(\abs{\frac{1}{n}\sum_{i=1}^n W_i} \geq t \right) \leq 2e^{-nt^2/(8\tau^2)} + n\alpha.
\]
\end{lem}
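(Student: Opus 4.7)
The plan is to use a standard truncation argument: replace each $W_i$ by the truncated variable $\tilde{W}_i := W_i \ind_{\{|W_i| \leq \tau\}}$, control the total probability that truncation makes a difference by the union bound, and then apply Hoeffding's inequality to the centred $\tilde{W}_i - \E \tilde{W}_i$, which are uniformly bounded by $\tau$.

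First I would argue that the bad event $E := \{\exists i : W_i \neq \tilde{W}_i\} = \{\exists i : |W_i| > \tau\}$ has $\pr(E) \leq n\alpha$. Next, I would bound the bias introduced by truncation. Since $\E W_i = 0$, we have $\E \tilde{W}_i = -\E(W_i \ind_{\{|W_i| > \tau\}})$, so $|\E \tilde{W}_i| \leq M \pr(|W_i| > \tau) \leq M\alpha$. On $E^c$, $\sum_i W_i = \sum_i \tilde{W}_i$, so the triangle inequality gives
\[
\abs{\tfrac{1}{n}\sum_{i=1}^n W_i} \leq \abs{\tfrac{1}{n}\sum_{i=1}^n (\tilde{W}_i - \E \tilde{W}_i)} + |\E \tilde{W}_1|.
\]
Under the assumption $t > 2M\alpha$, the event $\{|\tfrac{1}{n}\sum_i W_i| \geq t\} \cap E^c$ therefore forces $|\tfrac{1}{n}\sum_i (\tilde{W}_i - \E \tilde{W}_i)| \geq t - M\alpha > t/2$.

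Finally I would apply Hoeffding's inequality to the independent centred variables $\tilde{W}_i - \E \tilde{W}_i$, each supported in an interval of length at most $2\tau$ (since $\tilde{W}_i \in [-\tau,\tau]$), to obtain
\[
\pr\left(\abs{\tfrac{1}{n}\sum_{i=1}^n (\tilde{W}_i - \E \tilde{W}_i)} \geq t/2 \right) \leq 2\exp\left(-\frac{n t^2}{8\tau^2}\right).
\]
Combining this with $\pr(E) \leq n\alpha$ via a union bound yields the claimed inequality.

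There is no real obstacle here; the only point that requires a moment of care is that Hoeffding must be applied to the centred variables (so that the comparison constant matches $2\tau$ rather than $2(\tau+M\alpha)$), which is exactly why the hypothesis $t > 2M\alpha$ is imposed — it absorbs the truncation bias into the factor of $2$ in the exponent.
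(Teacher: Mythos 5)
Your proof is correct and follows essentially the same route as the paper's: truncate at $\tau$, bound the probability of any truncation occurring by $n\alpha$, control the truncation bias by $M\alpha$ using $\E W_i = 0$, and apply Hoeffding to the centred truncated variables, with the hypothesis $t > 2M\alpha$ absorbing the bias into the constant in the exponent. The only cosmetic difference is that you make the final step $(t - M\alpha)^2 \geq t^2/4$ explicit, whereas the paper stops at the bound $2e^{-n(t-M\alpha)^2/(2\tau^2)} + n\alpha$ and leaves that reduction implicit.
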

\begin{proof}
Let $V_i = W_i \ind_{\{|W_i| \leq \tau\}}$, and let $\Omega = \{\|W\|_\infty \leq \tau\}$. Also define $\bar{W} = n^{-1}\sum_{i=1}^n W_i$ and $\bar{V} = n^{-1} \sum_{i=1}^n V_i$. We have
\begin{align*}
\pr(|\bar{W}| \geq t ) &\leq \pr(|\bar{W}| \geq t \cap \Omega) + \pr(\Omega^c) \\
&\leq \pr(|\bar{V}| \geq t) + n\alpha.
\end{align*}
From Hoeffding's inequality we have
\[
\pr(|\bar{V} - \E V_1| \geq r ) \leq 2e^{-nr^2/(2\tau^2)}.
\]
Also
\begin{align*}
\E V_1 = \E W_1 \ind_{\{|W_1|\leq \tau\}} = \E W_1 \ind_{\{|W_1|> \tau\}}
\end{align*}
as $\E W_1 = 0$. By H\"older's inequality we have $ \E W_1 \ind_{\{|W_1|> \tau\}} \leq M \alpha$. Putting things together, we arrive at
\[
\pr(|\bar{W}| \geq t ) \leq 2e^{-n(t-M\alpha)^2/(2\tau^2)} + n\alpha.
\]
\end{proof}
\section{Derivation of \eqref{eq:rho_bd}} \label{sec:rho_bd}
We will make use of the following result which appears as Lemma 2.2 in \citet{dasgupta2003elementary}.
\begin{lem} \label{lem:conc_unif}
Let $W \in \R^{p \times q}$ be uniformly distributed on the Stiefel manifold $V_{q}(\R^p)$ and let $v \in \R^p$ be a unit vector. Then for $t>0$ we have
\begin{align*}
\pr\left( \|W^T v\|_2^2 > (1+t)\frac{q}{p}\right) \leq \exp\left(\frac{q}{2} \{-t + \log(1+t)\} \right)
\end{align*}
\end{lem}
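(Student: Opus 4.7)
The plan is to reduce the claim to a standard chi-square tail bound and apply the Chernoff method. By rotational invariance of the uniform distribution on $V_q(\R^p)$, for any fixed unit vector $v$ the random variable $\|W^T v\|_2^2$ has the same law as $\|W^T e_1\|_2^2$. Embedding $W$ as the first $q$ columns of a uniformly random orthogonal matrix $O \in O(p)$, we see that $W^T e_1$ is the vector formed from the first $q$ entries of the first row of $O$, and this row is uniformly distributed on the sphere $S^{p-1}$. Using the standard representation of the uniform distribution on the sphere via Gaussian normalization, we obtain
\[
\|W^T v\|_2^2 \eqdist \frac{X}{X+Y}, \qquad X \sim \chi^2_q, \ Y \sim \chi^2_{p-q} \text{ independent}.
\]

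Next, observe that if $(1+t)q \geq p$ the stated probability is zero (since $X/(X+Y) \leq 1$), so I will assume $a := p - (1+t)q > 0$. Writing $b := (1+t)q$ and viewing $X$ and $Y$ as independent sums of squared standard Gaussians, I rewrite the event as
\[
\{pX > (1+t)q(X+Y)\} = \{aX - bY > 0\}.
\]
For any $\lambda > 0$ with $2\lambda a < 1$, the Chernoff bound combined with the chi-square MGF $\E e^{sZ^2} = (1-2s)^{-1/2}$ gives
\[
\pr(aX - bY > 0) \leq (1 - 2\lambda a)^{-q/2}(1 + 2\lambda b)^{-(p-q)/2}.
\]

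I then optimize over $\lambda$. Differentiating the log of the right-hand side and solving yields $2\lambda^* = qt/(ab)$, hence $2\lambda^* a = t/(1+t)$ and $2\lambda^* b = qt/(p - (1+t)q)$. Substituting back,
\[
(1 - 2\lambda^* a)^{-q/2} = (1+t)^{q/2}, \qquad (1 + 2\lambda^* b)^{-(p-q)/2} = \Bigl(1 - \tfrac{qt}{p-q}\Bigr)^{(p-q)/2}.
\]
Taking logarithms and using $\log(1-x) \leq -x$ on the second factor gives the overall bound $\exp\bigl(\tfrac{q}{2}[\log(1+t) - t]\bigr)$, as claimed.

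The only non-routine step is the Chernoff optimization, which fortunately collapses to the clean expression $2\lambda^* = qt/(ab)$; the rest is bookkeeping. I expect no serious obstacle, though one must remember the case $(1+t)q \geq p$ where the bound is trivially vacuous (and the probability on the left is in fact zero).
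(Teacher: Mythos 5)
Your proof is correct. The paper gives no proof of this lemma itself but simply cites Lemma~2.2 of Dasgupta and Gupta (2003), and your argument --- reducing $\|W^Tv\|_2^2$ by rotational invariance to a beta-type ratio $X/(X+Y)$ of independent chi-squares, applying the Chernoff bound with the optimal $2\lambda^* = qt/(ab)$, and finishing with $\log(1-x)\le -x$ --- is precisely the standard derivation from that reference, including the correct handling of the vacuous case $(1+t)q \ge p$.
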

Note that $\Pi_{\Gamma} \eqdist WW^T$ with $W$ defined as above. Now $t - \log(1+t) \geq t\min(1, t)/4$ and if $t\min(1, t)/4 =a$ then $t=\max(2\sqrt{a}, 4a)$. Thus setting $t=\max\{4\sqrt{\log(p)/q}, 16\log(p)/q\}$, we have $q\{-t + \log(1+t)\}/2 < -2\log(p)$.

Applying a union bound we therefore obtain
\begin{align*}
\pr\left\{ \max_j \|\Pi_{\Gamma} e_j\|_2^2 > (1+t)\frac{q}{p} \right\} \leq p \, \pr\left\{ \|\Pi_{\Gamma} e_j\|_2^2 > (1+t)\frac{q}{p} \right\} \leq \frac{1}{p}.
\end{align*}

We now turn to the final part of the bound. Let the eigendecomposition of $\Sigma$ be given by $\Sigma = R B R^T$. Note that $\Pi_{\Gamma} R \eqdist \Pi_{\Gamma}$ so $\|\Pi_{\Gamma} \Sigma\| \eqdist \|\Pi_{\Gamma}B R^T \| = \|\Pi_{\Gamma} B\|$. By an argument analogous to that of Corollary~\ref{cor:Y_dist}, we know that $\Pi_{\Gamma} \eqdist W^T (WW^T)^{-1} W$ where $W \in \R^{q \times p}$ has i.i.d.\ $\mathcal{N}(0, 1)$ entries. Now from Lemma~\ref{lem:mat_ineq1} we have
\[
B \Pi_{\Gamma} B  = B W^T (WW^T)^{-1} W B \preceq \lambda_{\min}^{-1}(WW^T) B W^T W B.
\]
Thus
\[
\|\Pi_{\Gamma} B\|^2 \leq \lambda_{\min}^{-1}(WW^T) \lambda_{\max}(B W^T W B) = \lambda_{\min}^{-1}(WW^T) \lambda_{\max}(W B^2 W^T).
\]
From Lemma~\ref{lem:eval_conc} we have that there exists constants $c_1, c_2, c_3 >0$ such that with probability at least $1-c_1 e^{-c_2 q}$ we have
\[
\lambda_{\min}^{-1}(WW^T) \lambda_{\max}(W B^2 W^T) \leq \frac{\tr(\Sigma^2) + c_3\sqrt{q \tr(\Sigma^4)} }{p-c_3 \sqrt{qp}}.
\]

\section{Derivation of \eqref{eq:eta_bd}} \label{sec:eta_bd}
Let the eigendecomposition of $\Gamma \Gamma^T$ be given by $Q^T A^2 Q$ where $Q \in \R^{p \times q}$ has orthonormal columns and $A \in \R^{q \times q}$ is diagonal. Now if $Q$ is uniformly distributed on the Stiefel manifold $V_q(\R^p)$, then
\[
(\beta^{(j)})^T QA^2 Q \beta^{(j)} \eqdist (\beta^{(j)})^T R I^{p,q} A^2 I^{q,p} R^T \beta^{(j)} \eqdist \|\beta^{(j)}\|_2^2 v^T A^2 v / \|v\|_2^2
\]
where $R$ is uniformly distributed on the set of orthogonal matrices, $I^{p,q}$ denotes the $p \times q$ matrix with 1's along its diagonal and 0's elsewhere, and $v$ has i.i.d.\ $\mathcal{N}(0,1)$ entries. Note that $\tr(A^2) = \tr(\Gamma \Gamma^T) \leq \tr(\Theta) \lesssim p$. Now $\tr(A^4)$ is maximised for a given value of $\tr(A^2)$ if a single entry of $A^2$ equals $\tr(A^2)$ with all others equal to zero. Thus $\tr(A^4) \lesssim p^2$.
From Lemma~\ref{lem:gen_chi_sq} we have that with probability at least $1-2e^{-c_1 p}$,
\[
v^T A^2 v \lesssim p \;\; \text{ and } \;\; \|v \|_2^2 \gtrsim p.
\]
Thus, for $p$ sufficiently large, with probability at least $1-e^{-c_2 p}$ we have $\eta_j \lesssim 1$ for all $j$.

\section{Proof of Theorem~\ref{thm:NS}} \label{sec:NS_proof}

We adopt the notation of the proof of Theorem~\ref{thm:Sigma_rec}; in particular, we take $\kappa = D_{q+1,q+1}^2 / C_{q+1,q+1}^2$. Let us write $\hat{\Sigma}$ for $\hat{\Sigma}_{\rsvp}$ for notational simplicity. We fix $j$ and define
\[
\hat{\beta} = \argmin{b \in \R^{p-1}} \frac{1}{2} b^T \hat{\Sigma}_{-j,-j} b - b^T \hat{\Sigma}_{-j,j} + \lambda_j \|b\|_1.
\]
Note $\hat{\beta}_l = \hat{\beta}^{(j)}_l$ for $l < j$ and $\hat{\beta}_l = \hat{\beta}^{(j)}_{l+1}$ for $l \geq j$.
The KKT conditions for $\hat{\beta}$ take the following form:
\begin{equation} \label{eq:KKT_NS}
 \hat{\Sigma}_{-j,j} - \hat{\Sigma}_{-j,-j} \hat{\beta} = \lambda_j \hat{\nu}
\end{equation}
where $\|\hat{\nu}\|_\infty \leq 1$ and writing $\hat{S} = \{k : \hat{\beta}_k \neq 0\}$ we have $\hat{\nu}_{\hat{S}} = \sgn(\hat{\beta}_{\hat{S}})$. Note that $\hat{\beta}^T\hat{\nu} = \|\hat{\beta}\|_1$.

Let $\beta = \Sigma_{-j,-j}^{-1}\Sigma_{-j,j}$,  and set $S = \{j: \beta_j \neq 0\}$ and $N = S^c$. Also define $\delta = \hat{\Sigma}_{-j,j} - \hat{\Sigma}_{-j,-j}\beta$. 
Dotting both sides of \eqref{eq:KKT_NS} with $\beta - \hat{\beta}$ and using H\"older's inequality, we obtain
\[
(\beta - \hat{\beta})^T \hat{\Sigma}_{-j,-j} (\beta - \hat{\beta}) \leq \lambda_j(\|\beta\|_1 - \|\hat{\beta}\|_1) + \|\delta\|_\infty \|\beta -  \hat{\beta}\|_1.
\]
We will show below that with high probability $\|\delta\|_\infty \leq \lambda_j/2$.
Working for now on the event $\Lambda^{(1)}_j$ where this holds, we have
\begin{equation} \label{eq:lasso_pf1}
(\beta - \hat{\beta})^T \hat{\Sigma}_{-j,-j} (\beta - \hat{\beta}) \leq \lambda_j(\|\beta\|_1 - \|\hat{\beta}\|_1) + \lambda_j\|\beta -  \hat{\beta}\|_1/2.
\end{equation}
We may now follow the standard proof for bounding the estimation error of the Lasso (see Chapter 6 of \citet{}, for example).
Note that as the LHS of \eqref{eq:lasso_pf1} is non-negative,
\[
2\|\beta_S\|_1 - 2\|\hat{\beta}_S \|_1 - \|\hat{\beta}_N\|_1 + \|\beta_S - \hat{\beta}_S\|_1 \geq 0.
\]
Therefore, using the triangle inequality we have
\begin{equation} \label{eq:cone}
3\|\hat{\beta}_S - \beta_S \|_1 \geq \|\hat{\beta}_N - \beta_N\|_1.
\end{equation}
For a symmetric positive-definite matrix $\Omega \in \R^{p \times p}$, let us define the $j$th restricted eigenvalue REF as
\[
\phi^2_j(\Omega) = \min_{b : \|b_{S_j^c}\|_1 \leq 3\|b_{S_j}\|_1 \neq 0} \frac{b^T\Omega b}{\|b\|_2^2}. 
\]
Also let $\phi^2_j := \phi^2_j(\hat{\Sigma})$.
We seek to bound $\phi^2_j$ from below, for which we use Corollary 10.1 of \citet{BuhlmannGeer2009}. This states in particular that if $\|\Omega - \hat{\Sigma}\|_\infty \leq \phi^2_j(\Omega) / (32 s_j)$, then $\phi^2_j \geq \phi^2_j(\Omega)/2$. We apply this with $\Omega = \kappa^{-1}\Sigma$, noting that $\phi^2_j(\kappa^{-1}\Sigma) \geq \kappa^{-1}\sigma_l > 0$. We then have that on the event $\Lambda^{(2)}$ that $\|\hat{\Sigma} - \kappa^{-1}\Sigma\|_\infty \leq \kappa^{-1}\sigma_l / (32s)$, $\phi^2_j \geq \kappa^{-1}\sigma_l/2$ for all $j$. From Theorem~\ref{thm:conc_Pi}, and using $s=o(\sqrt{\log(p)/n})$, we have that the probability of the event $\Lambda^{(2)}$ is at least $1-c/p$ for $n$ sufficiently large. In what follows we work on this event.

From the above, we have
\begin{align*}
\frac{2\kappa}{\sigma_l}(\beta - \hat{\beta})^T\hat{\Sigma}_{-j,-j}(\beta - \hat{\beta}) &\geq \|\beta - \hat{\beta}\|_2^2 \\
&\geq \|\beta_S - \hat{\beta}_S\|_2^2 \\
&\geq \frac{\{\sgn(\beta_S - \hat{\beta}_S)^T(\beta_S - \hat{\beta}_S)\}^2}{\|\sgn(\beta_S - \hat{\beta}_S)\|_2^2} \\
&\geq \|\beta_S - \hat{\beta}_S\|_1^2 / s,
\end{align*}
using the Cauchy--Schwarz inequality in the penultimate line.
Returning to \eqref{eq:lasso_pf1}, we have that
\[
\|\beta_S - \hat{\beta}_S\|_1^2 / s \leq \|\beta - \hat{\beta}\|_2^2 \leq \frac{3\kappa \lambda_j}{\sigma_l}\|\hat{\beta}_S - \beta_S\|_1,
\]
whence
\[
\|\beta_S - \hat{\beta}_S\|_1 \leq \frac{3s\kappa\lambda_j}{\sigma_l} \;\;\; \text{ and } \;\;\; \|\beta - \hat{\beta}\|_2 \leq \frac{3\sqrt{s}\kappa\lambda_j}{\sigma_l}.
\]
Also from \eqref{eq:cone} we have $\|\hat{\beta} - \beta\|_1 \leq 4 \|\hat{\beta}_S - \beta_S\|_1$, giving $\|\beta - \hat{\beta}\|_1 \leq 12s \kappa\lambda_j / \sigma_l$ as required.

It remains to show $\cap_j \Lambda^{(1)}_j$ occurs with high probability. Now using H\"older's inequality, we have
\begin{align*}
\kappa \|\delta\|_\infty &=\|\hat{\Sigma}_{-j,j} -  \hat{\Sigma}_{-j,-j}\beta\|_\infty \\
&\leq  \|\kappa \hat{\Sigma}_{-j,j} -  \Sigma_{-j,j}\|_\infty + \| \Sigma_{-j,-j} - \kappa \E \hat{\Sigma}_{-j,-j}\|_\infty \|\beta\|_1 + \kappa\|(\E \hat{\Sigma}_{-j,-j} - \hat{\Sigma}_{-j,-j})\beta\|_\infty \\
&=: \text{I}_j + \text{II}_j + \text{III}_j.
\end{align*}
We seek to control each of the terms $\text{I}_j, \text{II}_j, \text{III}_j$ above.
By Theorem~\ref{thm:conc_Pi}, with probability at least $1-c/p$, we have $\max_j \text{I}_j \lesssim \sqrt{\log(p)/n}$. Next, by Lemma~\ref{lem:conc_Pi}, we have that with probability at least $1-c/p^3$,
\begin{align*}
|e_l^T (\E \hat{\Sigma} - \hat{\Sigma})\beta^{(j)}| \lesssim \sqrt{\eta_j \log(p)/n}.
\end{align*}
Thus by a union bound we have with probability at least $1-c/p$,
\[
\text{III}_j = \max_l |e_l^T (\E \hat{\Sigma} - \hat{\Sigma})\beta^{(j)}| \lesssim  \sqrt{\eta_j \log(p)/n}.
\]
Turning to $\text{II}_j$, note by the Cauchy--Schwarz inequality
\[
\|\beta\|_1 = \beta_S^T \sgn(\beta_S) \leq \|\beta\|_2 \sqrt{s}.
\]
Also, \eqref{eq:bias_fin} gives us
\begin{align*}
\|\Sigma - \kappa \E \hat{\Sigma}_{\rsvp}\|_\infty \lesssim \Delta \sqrt{\log(p)/(sn)}.
\end{align*}
Now observe that $\|\beta\|_2 \lesssim 1$. Indeed, we have $1 \gtrsim \Sigma_{jj} \gtrsim \Sigma_{j,-j}\Sigma_{-j,-j}^{-1}\Sigma_{-j,j}$. Then
\[
\|\beta\|_2^2 =\Sigma_{j,-j}\Sigma_{-j,-j}^{-1/2} \Sigma_{-j,-j}^{-1}\Sigma_{-j,-j}^{-1/2} \Sigma_{-j,j} \leq \sigma_l^{-1} \|\Sigma_{j,-j}\Sigma_{-j,-j}^{-1/2}\|_2^2 \lesssim 1.
\]
Thus $\text{II}_j \lesssim \Delta \sqrt{\log(p)/n}$. Putting things together, we have 
\[
\kappa\|\delta\|_\infty \lesssim \sqrt{\max(\Delta, \eta_j, 1)\log(p)/n}.
\]
Now from \eqref{eq:C_j_bd} and Lemma~\ref{lem:A_j_bd} we have $\kappa \gtrsim p/n$. Hence $\|\delta\|_\infty \leq \lambda_j/2$ as required.

\section{Additional Plots for the GTEX data analysis} \label{sec:GTEX_indiv}
Below we provide additional plots concerning the  data analyses presented in Section~\ref{sec:experiments} of the main text. Figure~\ref{fig:gtex_enrich_scores} displayed average performance across all of the 44 tissues studied. Here, we present the area under the curve of the graph of enrichment scores as a function of the number of edges, based on the raw data for each
of these tissues.

\begin{figure}[h!]
\begin{center}
\includegraphics[width=0.95\textwidth]{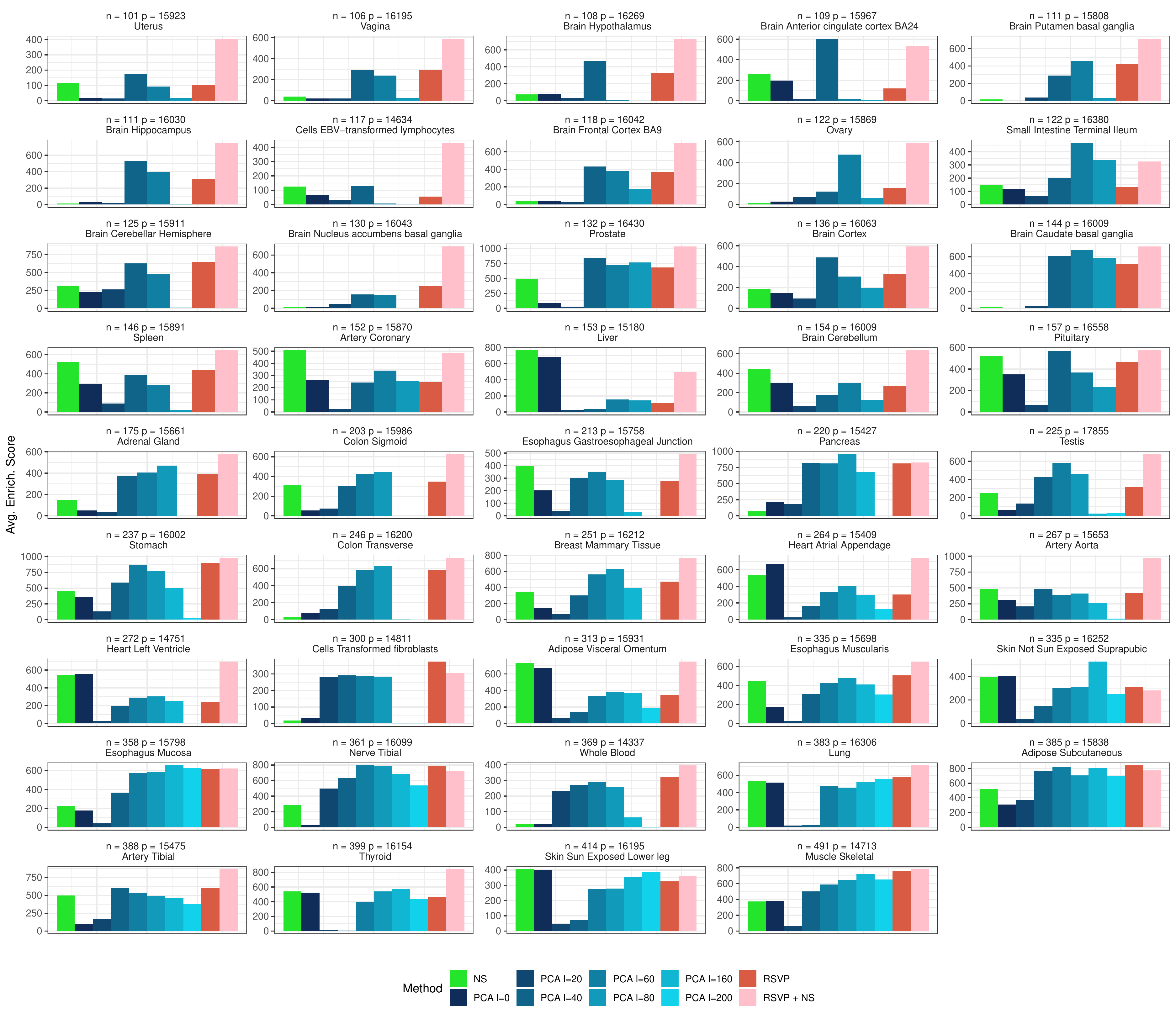}
\caption{The area under the curve (AUC) of the graph of enrichment score as a function of the number of edges, based on the raw data for each of the 44 tissues. \label{fig:suppl_figure_gtex}
  }
\end{center}
\end{figure}

\end{cbunit}


\end{document}